\newtheorem{theorem}{Theorem}
\newtheorem{lemma}{Lemma}
\newtheorem{proposition}{Proposition}
\newtheorem{remark}{Remark}
\newtheorem{definition}{Definition}
\newtheorem{corollary}{Corollary}
\newtheorem{assumption}{Assumption}
\newtheorem{claim}{Claim}
\newcommand{\f}[2]{\frac{#1}{#2}}
\newcommand{\al}{\alpha}
\newcommand{\de}{\delta}
\newcommand{\ve}{\varepsilon}
\newcommand{\la}{\lambda}
\newcommand{\si}{\sigma}
\newcommand{\om}{\omega}
\def\R{\mathbb{R}}
\def\C{\mathbb{C}}
\def\N{\mathbb{N}}
\newcommand{\cf}{\mathcal F}
\newcommand{\cn}{\mathcal N}
\newcommand{\ck}{\mathcal K}
\newcommand{\cl}{\mathcal L}
\newcommand{\ch}{\mathcal H}
\newcommand{\co}{\mathcal O}
\newcommand{\p}{\partial}
\newcommand{\beq}{\begin{equation}}
\newcommand{\eeq}{\end{equation}}
\newcommand{\beqna}{\begin{eqnarray*}}
\newcommand{\eeqna}{\end{eqnarray*}}
\newcommand{\beqn}{\begin{equation*}}
\newcommand{\eeqn}{\end{equation*}}
\newcommand{\bp}{\begin{proof}}
\newcommand{\ep}{\end{proof}}
\newcommand{\bprop}{\begin{proposition}}
\newcommand{\eprop}{\end{proposition}}
\newcommand{\bt}{\begin{theorem}}
\newcommand{\et}{\end{theorem}}
\newcommand{\bex}{\begin{Example}}
\newcommand{\eex}{\end{Example}}
\newcommand{\bc}{\begin{corollary}}
\newcommand{\ec}{\end{corollary}}
\newcommand{\bcl}{\begin{claim}}
\newcommand{\ecl}{\end{claim}}
\newcommand{\bl}{\begin{lemma}}
\newcommand{\el}{\end{lemma}}
\begin{document}

\title{Asymptotic stability of small gap solitons \\
in the nonlinear Dirac equations}

\author{Dmitry E. Pelinovsky}
\address{D.E. Pelinovsky
  Department of Mathematics and Statistics, McMaster
University, Hamilton, Ontario, Canada, L8S 4K1}

\author{Atanas Stefanov}
\address{A. Stefanov, Department of Mathematics, University of Kansas, 1460
Jayhawk Blvd, Lawrence, KS 66045--7523}
\thanks{D.E.P. is supported by NSERC. A.S. is supported in part by NSF-DMS \# 0908802.}

\maketitle

\begin{abstract}
We prove dispersive decay estimates for the one-dimensional Dirac
operator and use them to prove asymptotic stability of small gap
solitons in the nonlinear Dirac equations with quintic and
higher-order nonlinear terms.
\end{abstract}

\section{Introduction}

Asymptotic stability of solitary waves in the nonlinear Schr\"{o}dinger equation
has been considered in the space of three dimensions with a number of analytical techniques
\cite{PW,YT,SW,Gang,CM,KirrM}. Only recently, the asymptotic stability of solitary waves
was extended to the space of two dimensions \cite{M2,KirrZ} and one dimension
\cite{BS,Cuc,Mizum}.

Relatively little is known about the asymptotic stability of solitary waves in the nonlinear Dirac equations, which can be considered as a relativistic version of the nonlinear Schr\"{o}dinger equation.
Asymptotic stability of small bound states in the nonlinear Dirac equations in three
dimensions was constructed by Boussaid \cite{Boussaid}. Global existence and scattering to zero for
small initial data were obtained by Machihara {\em et al.} \cite{Machihara,Machihara2}, also in the space of three
dimensions.

We shall consider the asymptotic stability of solitary waves in
the nonlinear Dirac equations in one dimension. Since the energy
functional of the Dirac equations is sign-indefinite at the
linear wave spectrum, it is generally believed that the
solitary waves (referred to as gap solitons) must
be energetically (and nonlinearly) unstable. Indeed, gap solitons
are more disposed to spectral instabilities in the sense that
unstable eigenvalues may exist in a large subset of the existence domain \cite{ChPel-cme}. However, the
limit of small gap solitons corresponds to the nonrelativistic
limit, when the nonlinear Dirac equations can be reduced to the
nonlinear Schr\"{o}dinger equation \cite{Machihara2}. In this
limit, when the cubic nonlinear terms are considered, the gap solitons in one dimension
are typically stable both spectrally and orbitally. It is hence an interesting
question to study the nonlinear asymptotic stability of
the spectrally stable small gap solitons.

The spectral information is difficult in the case of the homogeneous nonlinear Dirac equation even in
the limit of small gap solitons. Isolated nonzero eigenvalues and resonances at the end points
of the continuous spectrum occur commonly in the problem \cite{Comech,ChPel-cme}.
To simplify the spectral information, we add a bounded
exponentially decaying potential to the one-dimensional nonlinear Dirac
equations and consider a local bifurcation of the small gap solitons from an isolated eigenvalue
of the self-adjoint Dirac operator. In this
way, our approach is similar to the one used by Mizumachi
\cite{Mizum} for the nonlinear Schr\"{o}dinger equation and by us
\cite{KPS} for the discrete nonlinear Schr\"{o}dinger equation
(see also \cite{CucTr} for similar results).

We shall avoid the dispersive decay estimates in weighted $L^2$
spaces, which are difficult for the nonlinear Dirac equations
(in contrast with the nonlinear Schr\"{o}dinger equations).
We shall instead derive the Strichartz estimates
directly from the Mizumachi estimates. The balance between
Strichartz and Mizumachi estimates allows us to control both the
nonlinear terms and the modulation equations for small gap
solitons and thus to prove their asymptotic stability for the
nonlinear Dirac equations with quintic and higher-order nonlinear
terms.

The article is organized as follows. Section 2 introduces the
nonlinear Dirac equations. Section 3 contains information about
the small gap solitons. Section 4 reports on
linearization and spectral stability for small gap solitons. Section 5
derives the modulation equations for parameters of gap solitons as
well as the time evolution equation for the dispersive remainder
term. Section 6 describes the spectral theory for the
one-dimensional Dirac operator. Section 7 deals with the linear
dispersive estimates for the semi-group associated with the Dirac
operator. Section 8 gives the proof of the main theorem.

We finish this section with the list of useful notations.

The inner product for complex-valued functions in $L^2(\R)$ is denoted by
\begin{equation}
\label{norm-l2}
\forall f,g \in L^2(\R) : \quad \langle f, g \rangle_{L^2} := \int_{\R} \bar{f}(x) g(x) dx.
\end{equation}

For any $f \in L^2(\R)$, we define the Fourier transform and its inverse by
\begin{eqnarray}
\label{Fourier}
\hat{f}(k) \equiv {\cf}(f) := \frac{1}{\sqrt{2\pi}}
\int_{-\infty}^\infty f(x) e^{-i x k} dx, \quad \check{f}(x)
\equiv {\cf}^{-1}(\hat{f}) := \frac{1}{\sqrt{2\pi}}
\int_{-\infty}^\infty \hat{f}(k) e^{i x k} dk.
\end{eqnarray}

Sobolev spaces are denoted by $W^{s,p}(\R)$ for $s \geq 0$ and $1
< p < \infty$ so that $H^s(\R) \equiv W^{s,2}(\R)$ and
$L^p(\R) \equiv W^{0,p}(\R)$. Beside Sobolev spaces, we will use
Strichartz spaces $L^p_t L^q_x$ and $L^q_x L^p_t$ defined
for $1 \leq p,q \leq \infty$ by the norms
\begin{eqnarray}
\label{Strichartz}
\|f \|_{L^p_t L^q_x} := \left( \int_0^T \| f(\cdot,t) \|_{L^q_x}^p dt \right)^{1/p}, \quad
\|f \|_{L^q_x L^p_t} := \left( \int_{\R} \| f(x,\cdot) \|_{L^p_t}^q dx \right)^{1/q},
\end{eqnarray}
where $T > 0$ is an arbitrary time including $T = \infty$.

Notation $\langle x \rangle = (1 + x^2)^{1/2}$ is used for the weights in $L^q_x$ norms.
The constant $C > 0$ is a generic constant, which may change from one line to another line.
A ball of radius $\delta > 0$ in function space $X$ centered at $0 \in X$ is denoted by $B_{\delta}(X)$.

Pauli matrices are defined by
$$
\sigma_1 = \left[ \begin{array}{cc} 0 & 1 \\ 1 & 0 \end{array} \right], \quad
\sigma_1 = \left[ \begin{array}{cc} 0 & i \\ -i & 0 \end{array} \right], \quad
\sigma_3 = \left[ \begin{array}{cc} 1 & 0 \\ 0 & -1 \end{array} \right].
$$
The $2$-by-$2$ identity matrix is denoted by $Id$.

Scalar functions are denoted by plain letters and vector functions with two components are denoted by
bold letters. For clarity of notations, we do not write the second arguments
for $W^{s,p}(\R)$, $H^s(\R)$, and $L^2(\R)$ when it is used for scalar or
vector functions.

\section{The nonlinear Dirac equations}

Consider the nonlinear Dirac equations
\begin{equation}
\label{cme}
\left\{ \begin{array}{cc} i (u_t + u_x) + v = \partial_{\bar{u}} W(u,v), \\
i (v_t - v_x) + u = \partial_{\bar{v}} W(u,v), \end{array} \right.
\end{equation}
where $(x,t) \in \mathbb{R}^2$, $(u,v) \in \mathbb{C}^2$, and $W(u,v) : \C^2 \to \R$ is a
nonlinear function which satisfies the following three conditions:
\begin{itemize}
\item symmetry $W(u,v) = W(v,u)$; \item gauge invariance $W(e^{i
\theta} u,e^{i \theta} v) = W(u,v)$ for any $\theta \in \R$; \item
polynomial in $(u,v)$ and $(\bar{u},\bar{v})$.
\end{itemize}

A general expansion of the nonlinear function $W(u,v)$ satisfying the
three properties above starts with quadratic and quartic terms
\begin{eqnarray}
W = \beta(x) (|u|^2 + |v|^2) + \gamma(x) (\bar{u} v
+ u \bar{v}) + W_N(u,v),
\end{eqnarray}
where $\beta(x), \gamma(x) : \R \to \R$ are bounded and
decaying potentials,
\begin{eqnarray}
W_N = \alpha_1 (|u|^4 + |v|^4) + \alpha_2 |u|^2 |v|^2 +
\alpha_3 (\bar{u} v + u \bar{v})^2 + \alpha_4 (|u|^2 + |v|^2)
(\bar{u} v + u \bar{v}) \label{potential}
\end{eqnarray}
is the nonlinear (quartic) potential, and $(\alpha_1,\alpha_2,\alpha_3,\alpha_4) \in
\R^4$ are numerical coefficients.

The standard example of the
nonlinear term occurs in the context of Bragg gratings, where
$\beta(x)$ and $\gamma(x)$ model optical defects in the periodic grating, whereas
\begin{equation}
\label{nonlinearity} W_N = \alpha (|u|^4 + 4 |u|^2 |v|^2 + |v|^4), \quad \alpha \in \R
\end{equation}
models the nonlinear coupling terms \cite{GWH,GSWK}.

Another example is
relevant to the massive Gross--Neveu model for spinors in relativity
theory \cite{Comech},
\begin{equation}
\label{nonlinearity-1} W_N = \alpha (\bar{u} v + u \bar{v})^2,
\quad \alpha \in \R.
\end{equation}

In other applications, $W_N$ may start with terms of the sixth and higher orders.
The following nonlinear potential is derived in the context of the Feshbach
resonance for Bose--Einstein condensates \cite{ChPorPel},
\begin{equation}
\label{nonlinearity-2} W_N = \alpha (|u|^2 + |v|^2) |u|^2 |v|^2,
\quad \alpha \in \R.
\end{equation}

Let us introduce the $2$-by-$2$ Dirac operator in one dimension
\begin{equation}
\label{Dirac-operator} \ch = \left[ \begin{array}{cc} -i
\partial_x + \beta(x) & \gamma(x) - 1\\ \gamma(x) - 1 & i \partial_x +
\beta(x) \end{array} \right] \equiv D + V(x),
\end{equation}
where
\begin{equation}
\label{a:30}
D = \left[ \begin{array}{cc} -i \partial_x & - 1 \\ - 1 & i \partial_x \end{array} \right], \quad
V(x) = \left[ \begin{array}{cc} \beta(x) & \gamma(x) \\
\gamma(x) & \beta(x) \end{array} \right].
\end{equation}

The nonlinear Dirac equations can be rewritten in the abstract
evolutionary form
\begin{equation}
\label{cme-formal} i \frac{d {\bf u}}{dt} = \ch {\bf u} + {\bf
N}({\bf u}), \quad {\bf N}({\bf u}) = \nabla_{\bar{\bf u}}
W_N(u,v),  \quad {\bf u} = \left[ \begin{array}{c} u \\ v
\end{array} \right], \quad \nabla_{\bar{\bf u}} = \left[ \begin{array}{c}
\partial_{\bar{u}} \\ \partial_{\bar{v}} \end{array} \right].
\end{equation}
where ${\bf N}({\bf u}) = {\co}(\| {\bf u} \|^3)$ as $\| {\bf u}
\| \to 0$ in any norm that forms Banach algebra (e.g. in $H^s(\R)$
for $s > \frac{1}{2}$). For the potentials (\ref{nonlinearity})
and (\ref{nonlinearity-1}), we have explicitly
$$
{\bf N}({\bf u}) = 2 \alpha \left[
\begin{array}{c} (|u|^2 + 2 |v|^2) u \\ (2 |u|^2 + |v|^2) v
\end{array} \right],\quad {\bf N}({\bf u}) = 2 \alpha
\left[
\begin{array}{c} |v|^2 u + v^2 \bar{u} \\ |u|^2 v + u^2 \bar{v}
\end{array} \right],
$$
For the technical reasons, these cubic nonlinear functions are not
sufficiently small when $(u,v)$ decays to zero. As a result, we
shall consider a more general class of the homogeneous polynomials
of $W_N$ (of even degree). Our arguments will be valid for the
quintic nonlinear functions which are generated from the
polynomial $W_N$ of degree six, e.g. from the function
(\ref{nonlinearity-2}).

Local existence of solutions in Sobolev space can be proved with
standard methods \cite{GWH}.

\begin{proposition}
\label{proposition-local-existence}
Let ${\bf u}_0 \in H^s(\R)$ for a fixed $s > \frac{1}{2}$ and
assume that $W$ satisfies the three conditions above. There exists
a $T > 0$ such that the nonlinear Dirac equations
(\ref{cme-formal}) admits a unique solution
$$
{\bf u}(t) \in C([0,T],H^s(\R)) \cap C^1([0,T],H^{s-1}(\R)),
$$
where ${\bf u}(t)$ depends continuously on the initial data ${\bf
u}(0) = {\bf u}_0$.
\end{proposition}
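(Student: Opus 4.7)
The plan is to recast \eqref{cme-formal} in Duhamel integral form
\beqn
{\bf u}(t) = e^{-it\ch} {\bf u}_0 - i \int_0^t e^{-i(t-\tau)\ch} {\bf N}({\bf u}(\tau))\, d\tau,
\eeqn
and obtain the solution as a fixed point of the right-hand side via the contraction mapping principle on a short time interval. The two ingredients are (i) good mapping properties of the linear group $e^{-it\ch}$ on $H^s(\R)$, and (ii) the Banach-algebra property of $H^s(\R)$ for $s>\tfrac{1}{2}$, which controls the polynomial nonlinearity ${\bf N}$.

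First, I would note that $\ch = D + V(x)$ is self-adjoint on $L^2(\R)$ with domain $H^1(\R)$, since $D$ is self-adjoint and $V(x)$ is a bounded symmetric matrix multiplier (the potentials $\beta,\gamma$ being real and bounded). By Stone's theorem, $\{e^{-it\ch}\}_{t\in\R}$ is a unitary group on $L^2(\R)$. Since $V$ is a bounded multiplier on $H^s(\R)$ for $0 \le s \le 1$ (under the standing smoothness assumption on $\beta, \gamma$), the graph norm of $\ch$ is equivalent to the $H^1$ norm, and by interpolation the group extends to a strongly continuous group on $H^s(\R)$ with operator norm bounded by $e^{C|t|}$, which is all the fixed-point scheme requires.

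Next, on the closed ball
\beqn
X_T := \{ {\bf u} \in C([0,T]; H^s(\R)) : \|{\bf u}\|_{L^\infty_t H^s} \le 2 \|{\bf u}_0\|_{H^s} \},
\eeqn
I would define $\Phi[{\bf u}](t) := e^{-it\ch} {\bf u}_0 - i \int_0^t e^{-i(t-\tau)\ch} {\bf N}({\bf u}(\tau))\, d\tau$ and verify
\beqn
\|\Phi[{\bf u}]\|_{L^\infty_t H^s} \le C_0 \|{\bf u}_0\|_{H^s} + C\, T\, \big(2\|{\bf u}_0\|_{H^s}\big)^{k}
\eeqn
together with a matching Lipschitz estimate for $\|\Phi[{\bf u}] - \Phi[{\bf v}]\|_{L^\infty_t H^s}$, where $k \ge 3$ is the degree of ${\bf N}$. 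The key input here is the Banach-algebra inequality $\|fg\|_{H^s} \le C \|f\|_{H^s} \|g\|_{H^s}$ for $s>\tfrac{1}{2}$, applied iteratively to the polynomial ${\bf N}({\bf u})$. Choosing $T = T(\|{\bf u}_0\|_{H^s}) > 0$ small enough makes $\Phi: X_T \to X_T$ a contraction, yielding a unique fixed point ${\bf u}\in X_T$, which is the desired solution in $C([0,T]; H^s)$.

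Uniqueness in the wider class $C([0,T]; H^s)$ follows from a standard Gronwall argument on the difference of two Duhamel solutions, and continuous dependence on ${\bf u}_0$ follows by applying the same contraction estimate to the difference of two iterates corresponding to nearby initial data. The $C^1$-in-time regularity is read directly off the PDE: since $\ch: H^s \to H^{s-1}$ is bounded and ${\bf N}({\bf u}) \in C([0,T]; H^s) \subset C([0,T]; H^{s-1})$, the right-hand side of $\partial_t {\bf u} = -i\ch {\bf u} - i{\bf N}({\bf u})$ lies in $C([0,T]; H^{s-1})$, and hence so does $\partial_t{\bf u}$. The only mildly delicate point, which I expect to be the main obstacle, is the verification in the first step that $e^{-it\ch}$ acts boundedly on $H^s$ (not merely on $L^2$); granted the smoothness already assumed of $\beta, \gamma$, the rest of the argument is textbook.
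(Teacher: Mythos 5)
Your Duhamel/contraction-mapping argument is exactly the ``standard method'' the paper itself invokes: the paper gives no proof of Proposition \ref{proposition-local-existence} at all, only the citation to \cite{GWH}, and your scheme (unitary group $e^{-it\ch}$, Banach-algebra property of $H^s$ for $s>\tfrac12$, short-time contraction, then $C^1$-in-time regularity read off the equation) is the intended argument. The only caveat, which you already flag yourself, is that boundedness of $e^{-it\ch}$ on $H^s$ --- equivalently, that $V$ is a pointwise multiplier of $H^s$ --- requires some smoothness of $\beta,\gamma$ beyond mere boundedness once $s>0$ (and for $s>1$ one must iterate the graph-norm equivalence with higher powers of $\ch$); the paper is silent on this hypothesis as well, so this is not a defect of your proof relative to the source.
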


If the nonlinear functions $W_N(u,v)$ depends on $|u|^2$ and
$|v|^2$ only, e.g. for (\ref{nonlinearity}) and
(\ref{nonlinearity-2}), global well-posedness in $H^s(\R)$ with $s
\in \N$ can be proved \cite{GWH}. Little is known about the global
solutions even for small initial data for the general nonlinear
Dirac equations, e.g. for (\ref{nonlinearity-1}).

\section{Stationary small gap solitons}

Under the assumptions that $\beta(x), \gamma(x) \in L^{\infty}(\R)$,
Dirac operator $\ch$ is a densely defined, self-adjoint operator
in $L^2(\R)$ with the domain $H^1(\R)$. We shall further assume that
$$
\beta(x), \gamma(x) \to 0 \quad \mbox{\rm as} \quad |x| \to \infty
$$
at an exponential rate. The potentials represent a relatively compact perturbation
to the unbounded differential operator. By Weyl's Theorem,
the spectrum $\sigma(\ch) \subset \R$ contains the continuous spectrum at
$$
\sigma_c(\ch) \equiv (-\infty,-1] \cup [1,\infty).
$$

To simplify the construction of stationary small gap solitons, we
assume that $\ch$ admits only one simple isolated eigenvalue in the gap $(-1,1)$ of
the continuous spectrum $\sigma_c(\ch)$ and no resonances at $\pm 1$. Hence we add the assumption.

\begin{assumption}
Assume that
\begin{itemize}
\item $\beta, \gamma \in L^{\infty}(\R)$ and there is
$C > 0$ and $\kappa > 0$ such that
$$
|\beta(x)| + |\gamma(x)| \leq C e^{-\kappa |x|}, \quad x \in \R.
$$

\item $\sigma(\ch) \backslash \sigma_c(\ch) = \{ \omega_0 \}$,
where $\omega_0 \in (-1,1)$ is a simple eigenvalue of
$\ch$ with the $L^2$-normalized eigenfunction ${\bf u}_0 \in
H^1(\R)$.

\item No resonances occur at the end points $\pm 1$ of $\sigma_c(\ch)$.
\end{itemize}
\label{assumption-one}
\end{assumption}

Stationary gap solitons are given by
\begin{equation}
u(x,t) = U(x) e^{-i \omega t}, \quad v(x,t) = V(x) e^{-i \omega
t},
\end{equation}
where $\omega \in \mathbb{R}$ is a parameter and ${\bf U} =
[U,V]^T \in \mathbb{C}^2$ satisfies the system of differential
equations
\begin{equation}
\label{cme-ode} (\ch - \omega I) {\bf U} + {\bf N}({\bf U}) = {\bf
0}.
\end{equation}
If ${\bf U} \in H^1(\R)$, then ${\bf U} \in C(\R)$ and
${\bf U}(x) \to 0$ as $|x| \to \infty$ thanks to Sobolev's
embedding of $H^1(\R)$ to $C^0_b(\R)$. By Lemma 3.1 in \cite{ChPel-cme},
the stationary solution ${\bf U} \in H^1(\R)$ satisfies the symmetry
\begin{equation}
\label{symmetry-stationary-soliton}
U(x) = \bar{V}(x), \quad x \in \R.
\end{equation}

For the example of nonlinear function
(\ref{nonlinearity}) with $\alpha = \frac{1}{3}$,
when no linear potentials are present $\beta(x), \gamma(x) \equiv
0$, the stationary gap solitons are given in the explicit form for
any $\omega \in (-1,1)$
\begin{equation}
\label{gap-soliton} U(x) = \frac{\sqrt{1
- \omega^2}}{\sqrt{1 - \omega} \cosh(\sqrt{1 - \omega^2} x) + i
\sqrt{1 + \omega} \sinh(\sqrt{1 - \omega^2} x)} = \bar{V}(x).
\end{equation}
In particular, $\| {\bf U} \|_{L^{\infty}} \to 0$ as $\omega \to -1$,
which indicates the limit of small gap solitons.

As we explained in the introduction, the spectral information is
difficult in the case of homogeneous Dirac equations (without decaying potentials).
If $\beta(x)$ and $\gamma(x)$ are nonzero and Assumption \ref{assumption-one} is used,
the stationary gap solitons are not known in the explicit form but the local bifurcation
technique allows us to find a family of small gap solitons in a
one-sided neighborhood of $\omega = \omega_0$. To make it more
precise, let us assume that the nonlinear function is a
homogeneous polynomial in its variables.

\begin{assumption}
Assume that
$$
{\bf N}(a {\bf U}) = a^{2p+1} {\bf N}({\bf U}), \quad a \in \R,
$$
for a fixed integer $p \geq 1$. \label{assumption-two}
\end{assumption}

\begin{proposition}
Let Assumptions \ref{assumption-one} and \ref{assumption-two} be
true and
\begin{equation}
\label{condition-one}
\langle {\bf u}_0, {\bf N}({\bf u}_0) \rangle_{L^2} > 0.
\end{equation}
For sufficiently small $\epsilon > 0$, there is a family of
solutions ${\bf U} \in H^1(\R)$ of system (\ref{cme-ode}) for any
$\omega \in (\omega_0, \omega + \epsilon)$ such that the map
$(\omega_0, \omega + \epsilon) \ni \omega \mapsto {\bf U} \in H^1(\R)$ is $C^1$ and
\begin{equation}
\label{bound-one}
\| {\bf U} - a {\bf u}_0 \|_{H^1} = {\co}(a^{2p+1}), \quad
|\omega - \omega_0 | = {\co}(a^{2p}), \quad \mbox{\rm as} \quad
a \to 0.
\end{equation}
\label{proposition-local-bifurcation}
\end{proposition}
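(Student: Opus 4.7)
The natural approach is a Lyapunov--Schmidt reduction combined with the Implicit Function Theorem. Write the ansatz
\beqn
{\bf U} = a {\bf u}_0 + {\bf W}, \qquad {\bf W} \in X := \{ {\bf w} \in H^1(\R) : \langle {\bf u}_0, {\bf w}\rangle_{L^2} = 0 \},
\eeqn
with $a \in \R$ playing the role of a phase--fixed amplitude (allowed thanks to the gauge covariance ${\bf N}(e^{i\theta}{\bf U}) = e^{i\theta}{\bf N}({\bf U})$, and compatible with the real symmetry $U = \bar V$ of \eqref{symmetry-stationary-soliton}). Let $P = \langle {\bf u}_0, \cdot \rangle_{L^2} {\bf u}_0$ and $Q = Id - P$. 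Substituting into \eqref{cme-ode} and applying $P$ and $Q$ separately produces the coupled system
\begin{eqnarray*}
a(\omega_0 - \omega) + \langle {\bf u}_0, {\bf N}(a {\bf u}_0 + {\bf W}) \rangle_{L^2} &=& 0, \\
Q(\ch - \omega) Q {\bf W} + Q\, {\bf N}(a {\bf u}_0 + {\bf W}) &=& 0.
\end{eqnarray*}

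The first step is to solve the second (orthogonal) equation for ${\bf W}$. Since $\omega_0$ is a simple isolated eigenvalue and $\ch$ is self-adjoint, the reduced operator $Q(\ch - \omega)Q : X \to X$ is invertible with norm bounded uniformly in $\omega$ in a neighborhood of $\omega_0$. Because $H^1(\R)$ is a Banach algebra in one dimension and ${\bf N}$ is a homogeneous polynomial of degree $2p+1$, the map $(a,\omega,{\bf W}) \mapsto {\bf W} + [Q(\ch - \omega)Q]^{-1} Q\, {\bf N}(a {\bf u}_0 + {\bf W})$ is smooth on $\R \times (\omega_0 - \epsilon, \omega_0 + \epsilon) \times B_{\delta}(X)$, vanishes at $a = 0$, ${\bf W} = 0$, and has derivative equal to the identity in ${\bf W}$ at that point. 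The Implicit Function Theorem (or an equivalent direct contraction on $B_{C|a|^{2p+1}}(X)$, using $\|{\bf N}(a {\bf u}_0 + {\bf W})\|_{H^1} \leq C(|a|^{2p+1} + \|{\bf W}\|_{H^1}^{2p+1})$) produces a unique small ${\bf W} = {\bf W}(a,\omega) \in X$ with
\beqn
\| {\bf W}(a,\omega) \|_{H^1} \leq C |a|^{2p+1},
\eeqn
depending in $C^1$ fashion on $(a,\omega)$ for $|a|$ small and $\omega$ close to $\omega_0$.

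Plugging this ${\bf W}(a,\omega)$ into the first (bifurcation) equation and dividing by $a$ (valid for $a \ne 0$), the homogeneity of ${\bf N}$ gives
\beqn
\omega - \omega_0 = a^{2p} \langle {\bf u}_0, {\bf N}({\bf u}_0) \rangle_{L^2} + R(a,\omega), \qquad R(a,\omega) = {\co}(a^{4p}),
\eeqn
where the remainder gathers the contributions from ${\bf W}$ via $Q\, {\bf N}(a{\bf u}_0 + {\bf W}) - a^{2p+1} Q\, {\bf N}({\bf u}_0)$. Applying the Implicit Function Theorem a second time to the scalar function $g(a,\omega) := \omega - \omega_0 - a^{2p}\langle {\bf u}_0,{\bf N}({\bf u}_0)\rangle_{L^2} - R(a,\omega)$, for which $\partial_\omega g(0,\omega_0) = 1$, solves $\omega = \omega(a)$ as a $C^1$ map. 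By condition \eqref{condition-one}, $\langle {\bf u}_0, {\bf N}({\bf u}_0)\rangle_{L^2} > 0$ and $a^{2p} \geq 0$, so the branch lies strictly on the side $\omega > \omega_0$, and inverting $\omega \mapsto a$ (a one-to-one $C^1$ correspondence for $a > 0$, say, after fixing the phase) delivers the claimed $C^1$ family parametrized by $\omega \in (\omega_0, \omega_0 + \epsilon)$ together with the estimates in \eqref{bound-one}.

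The main subtlety is bookkeeping the gauge symmetry: the bifurcation equation must be read modulo the $S^1$ action ${\bf U} \mapsto e^{i\theta}{\bf U}$, which I handle by enforcing the symmetry $U = \bar V$ (i.e.\ restricting ${\bf W}$ to the invariant subspace determined by \eqref{symmetry-stationary-soliton}) and taking $a$ real. Everything else is a routine application of the IFT once the spectral gap around $\omega_0$ is used to invert $Q(\ch - \omega)Q$ on $X$; the decay/exponential localization of $\beta, \gamma$ from Assumption~\ref{assumption-one} is not needed here beyond ensuring the self-adjointness and spectral setup already used to define ${\bf u}_0 \in H^1(\R)$.
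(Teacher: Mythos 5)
Your proposal is correct and follows essentially the same route as the paper: the identical Lyapunov--Schmidt decomposition ${\bf U}=a{\bf u}_0+{\bf W}$ with $\langle{\bf u}_0,{\bf W}\rangle_{L^2}=0$, inversion of the reduced operator on the orthogonal complement via the Implicit Function Theorem to get $\|{\bf W}\|_{H^1}=\co(|a|^{2p+1})$, and a second IFT application to the scalar bifurcation equation yielding $\omega-\omega_0=a^{2p}\langle{\bf u}_0,{\bf N}({\bf u}_0)\rangle_{L^2}+\co(a^{4p})$, with \eqref{condition-one} placing the branch on the side $\omega>\omega_0$. Your explicit remark on fixing the gauge via the symmetry $U=\bar V$ is a point the paper leaves implicit, but it does not change the argument.
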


\begin{proof}
Thanks to Assumption \ref{assumption-one}, we use the decomposition
$$
{\bf U} = a {\bf u}_0 + {\bf V}, \quad a \in \R, \quad \langle {\bf u}_0, {\bf V} \rangle_{L^2} = 0.
$$

Let $P_0 : L^2(\R) \to {\rm Ran}(\ch - \omega_0 I) \subset L^2(\R)$ be the orthogonal projection operator
so that ${\bf V} = P_0 {\bf V} \in {\rm Ran}(\ch - \omega_0 I)$.
The stationary equation (\ref{cme-ode}) becomes the following system of two equations
\begin{eqnarray*}
\left\{ \begin{array}{l} P_0 (\ch - \omega I) P_0 {\bf V} + P_0 {\bf N}(a {\bf u}_0 + {\bf V})={\bf
0}, \\
(\omega_0 - \omega) a + \langle {\bf u}_0, {\bf N}(a {\bf u}_0 + {\bf V}) \rangle_{L^2} = 0.
\end{array} \right.
\end{eqnarray*}

Operator $P_0 (\ch - \omega_0 I) P_0 : H^1(\R) \to L^2(\R)$ is invertible for $\omega$ near $\omega_0$,
whereas ${\bf N}({\bf U})$ is a $C^{2p+1}$ function near ${\bf 0} \in H^1(\R)$.
By the Implicit Function Theorem, there is a unique $C^{2p+1}$ map $\R \ni a \mapsto {\bf V} \in H^1(\R) \cap {\rm Ran}(L_0 - \omega_0 I)$ such that ${\bf V}$ satisfies the first equation of the system and
there are $a_0 > 0$ and $C > 0$ such that for all $a \in (-a_0,a_0)$,
\begin{equation}
\label{bound-on-V}
\| {\bf V} - a^{2p+1} P_0 (\ch - \omega_0 I)^{-1} P_0 {\bf N}({\bf u}_0) \|_{H^1} \leq C a^{4p+1}.
\end{equation}

Let us substitute the map $\R \ni a \mapsto {\bf V} \in H^1(\R) \cap {\rm Ran}(L_0 - \omega_0 I)$ to the scalar equation
$$
F(a,\omega) = (\omega_0 - \omega) + a^{-1} \langle {\bf u}_0, {\bf N}(a {\bf u}_0 + {\bf V}) \rangle_{L^2} = 0.
$$
Thanks to the bound (\ref{bound-on-V}) and Assumption \ref{assumption-two}, there are $a_0 > 0$ and $C > 0$ such that for all $a \in (-a_0,a_0)$, there is only one solution of $F(a,\omega) = 0$ for $\omega = \omega(a)$ satisfying the bound
\begin{equation}
\label{bound-on-omega}
|\omega_0 + a^{2p}  \langle {\bf u}_0, {\bf N}({\bf u}_0) \rangle_{L^2} - \omega | \leq C a^{4p}.
\end{equation}
Under condition (\ref{condition-one}), we have $\omega > \omega_0$ and the bounds (\ref{bound-one})
follow from (\ref{bound-on-V}) and (\ref{bound-on-omega}).
\end{proof}

\begin{remark}
Proposition \ref{proposition-local-bifurcation} is valid if
\begin{equation}
\label{reverse-inequality}
\langle {\bf u}_0, {\bf N}({\bf u}_0) \rangle_{L^2} < 0,
\end{equation}
but the family of solutions  ${\bf U} \in H^1(\R)$ of system
(\ref{cme-ode}) exist for $\omega \in (\omega_0-\epsilon,
\omega)$ under the condition (\ref{reverse-inequality}).
\end{remark}

\section{Linearization and spectral stability}

Linearization is performed after writing
\begin{equation}
\label{linearization}
\left\{ \begin{array}{cc}
u(x,t) = e^{-i \omega t} \left[ U(x) + U_1(x) e^{\lambda t} + \bar{U}_2(x) e^{\bar{\lambda} t} \right], \\
v(x,t) = e^{-i \omega t} \left[ V(x) + V_1(x) e^{\lambda t} + \bar{V}_2(x) e^{\bar{\lambda} t} \right],
\end{array} \right.
\end{equation}
and neglecting quadratic terms with respect to the vectors
$$
{\bf U}_1 = [U_1,V_1]^T \in \C^2, \quad
{\bf U}_2 = [U_2,V_2]^T \in \mathbb{C}^2.
$$

From the derivatives of ${\bf N}({\bf U})$, we obtain
the expansion in any norm that forms a Banach algebra with respect to
the pointwise multiplication,
\begin{equation}
\label{expansion-N}
{\bf N}({\bf U} + {\bf U}_1) = {\bf N}({\bf U}) + V_{11} {\bf U}_1 + V_{12} \bar{\bf U}_1 +
{\co}(\| {\bf U}_1 \|^2),
\end{equation}
where $V_{11}$ and $V_{12}$ are $2 \times 2$ matrices with exponentially decaying
coefficients, which are given explicitly by
\begin{eqnarray}
\label{potentials-explicit-linearized}
V_{11} = \left[ \begin{array}{cc} \partial_{\bar{U} U}^2 W_N & \partial_{\bar{U} V}^2 W_N \\
\partial_{\bar{V} U}^2 W_N & \partial_{\bar{V} V}^2 W_N \end{array} \right] = \bar{V}_{11}^T,
\quad
V_{12} = \left[ \begin{array}{cc} \partial_{\bar{U} \bar{U}}^2 W_N & \partial_{\bar{U} \bar{V}}^2 W_N \\
\partial_{\bar{V} \bar{U}}^2 W_N & \partial_{\bar{V} \bar{V}}^2 W_N \end{array} \right] = V_{12}^T.
\end{eqnarray}

Substituting (\ref{linearization}) to the nonlinear Dirac
equations (\ref{cme-formal}) and using expansion
(\ref{expansion-N}), we obtain the linear eigenvalue problem
\begin{equation}
\label{linearized-operator}
\left\{ \begin{array}{cc} i \lambda {\bf U}_1 = (\ch - \omega I) {\bf U}_1 + V_{11} {\bf U}_1 + V_{12} {\bf U}_2, \\
-i \lambda {\bf U}_2 = (\bar{H}_0 - \omega I) {\bf U}_2 + \bar{V}_{12} {\bf U}_1 + \bar{V}_{11} {\bf U}_2. \end{array} \right.
\end{equation}

We should distinguish the self-adjoint operator $H_{\omega} : H^1(\R) \to L^2(\R)$ given by
$$
H_{\omega} = \left[ \begin{array}{cc} \ch - \omega I & 0 \\ 0 & \bar{H}_0 - \omega I \end{array} \right] +
\left[ \begin{array}{cc} V_{11}  & V_{12} \\ \bar{V}_{12} & \bar{V}_{11} \end{array} \right]
$$
and the non-self-adjoint linearization operator $L_{\omega} = - i
\sigma H_{\omega} : H^1(\R) \to L^2(\R)$, where
\begin{equation}
\label{matrix-sigma} \sigma = \left[ \begin{array}{cc} I & 0 \\ 0
& -I \end{array} \right].
\end{equation}
Both operators act on $[{\bf U}_1,{\bf U}_2]^T$.

Symmetry (\ref{symmetry-stationary-soliton}) imply that
$$
\partial_{\bar{U} U}^2 W_N = \partial_{\bar{V} V}^2 W_N, \quad
\partial_{\bar{U} \bar{U}}^2 W_N = \partial_{V V}^2 W_N, \quad
\partial_{U V}^2 W_N = \partial_{\bar{U} \bar{V}}^2 W_N.
$$
By Theorem 4.1 in \cite{ChPel-cme}, the self-adjoint operator $H_{\omega}$ and
the linearized operator $L_{\omega}$ can be block-diagonalized. Let $S$ be an orthogonal matrix given by
$$
S = \frac{1}{\sqrt{2}} \left (
\begin{array}{cccc} 1 & 0 & 1 & 0 \\ 0 & 1 & 0 & -1 \\ 0 & 1 & 0 & 1 \\
1 & 0 & -1 & 0 \end{array}\right).
$$
Direct computations show that
\begin{eqnarray}
\label{block1}
S^{-1} H_{\omega} S & = & \left[ \begin{array}{cc} H_+ & 0 \\
0 & H_- \end{array}\right], \\
\label{block2} S^{-1} \sigma H_{\omega} S & = & \left[ \begin{array}{cc} I & 0 \\ 0 & -I \end{array} \right] \left[
\begin{array}{cc} 0 & H_- \\ H_+ & 0 \end{array}\right],
\end{eqnarray}
where $H_{\pm}$ are two-by-two Dirac operators given by
\begin{eqnarray}
\label{dirac1}
H_{\pm} & = & \left[ \begin{array}{cc} - i \partial_x + \beta(x) - \omega & \pm (\gamma(x) - 1) \\
\pm (\gamma(x) - 1) & i \partial_x + \beta(x) - \omega
\end{array}\right] + V_{\pm}(x),
\end{eqnarray}
and $V_{\pm}(x)$ are $2$-by-$2$ matrices with exponentially decaying
coefficients given by
\begin{equation}
\label{dirac2}
V_{\pm} = \left[
\begin{array}{cc} \partial_{\bar{U} U}^2 W_N \pm \partial_{\bar{U}
\bar{V}}^2 W_N & \partial_{\bar{U} \bar{U}}^2 W_N \pm \partial_{\bar{U} V}^2 W_N \\
\partial_{U U}^2 W_N \pm \partial_{U \bar{V}}^2 W_N &
\partial_{\bar{U} U}^2 W_N \pm \partial_{U V}^2 W_N
\end{array}\right].
\end{equation}

Thanks to the symmetry of the nonlinear Dirac equations (\ref{cme}) with respect to the gauge
transformation, the linearized operator $L_{\omega}$ has a nontrivial kernel because
\begin{equation}
\label{kernel-F}
{\bf F} \equiv i \left[ \begin{array}{c} {\bf U} \\ -\bar{\bf U} \end{array} \right]
\in {\rm Ker}(L_{\omega}) \equiv
{\rm Ker}(H_{\omega}),
\end{equation}
or explicitly
\begin{equation}
\label{explicit-1}
(\ch - \omega I) {\bf U} + V_{11} {\bf U} - V_{12} \bar{\bf U} = {\bf 0}.
\end{equation}

The eigenvector ${\bf F}$ generates a two-dimensional generalized kernel
\begin{equation}
\label{kernel-F-G}
{\rm span}\{{\bf F}, {\bf G}\} \subset N_g(L_{\omega}), \quad {\bf G} = -\partial_{\omega}
\left[ \begin{array}{c} {\bf U} \\ \bar{\bf U} \end{array} \right],
\end{equation}
such that $L_{\omega} {\bf G} = {\bf F}$, or explicitly
\begin{equation}
\label{explicit-2}
(\ch - \omega I) \partial_{\omega} {\bf U} + V_{11} \partial_{\omega} {\bf U}
+ V_{12} \partial_{\omega} \bar{\bf U} = {\bf U}.
\end{equation}

The Jordan block is two-dimensional, that is, no
${\bf H} \in H^1(\R)$  solving $L_{\omega} {\bf H} = {\bf G}$ exists, if
\begin{equation}
\label{nondegeneracy}
\frac{d}{d \omega} \| {\bf U} \|^2_{L^2} \neq 0.
\end{equation}
Constraint (\ref{nondegeneracy})
is satisfied for small $a$ in Proposition \ref{proposition-local-bifurcation} under
condition (\ref{condition-one}). In the same limit,
the spectra of the linearized operator $L_{\omega}$ and the self-adjoint operator $H_{\omega}$
are characterized in the following proposition.

\begin{proposition}
Let assumptions of Proposition \ref{proposition-local-bifurcation} be satisfied.
For sufficiently small $\epsilon > 0$ and for any
$\omega \in (\omega_0, \omega_0 + \epsilon)$, we have
$$
\sigma(i L_{\omega}) = (-\infty,-1-\omega] \cup (-\infty,-1+\omega] \cup \{ 0 \}
\cup [1-\omega,\infty) \cup [1+\omega,\infty)
$$
and
$$
\sigma(H_+) = (-\infty,-1-\omega] \cup \{ \omega_1 \} \cup [1-\omega,\infty), \quad
\sigma(H_-) = (-\infty,-1-\omega] \cup \{ 0 \} \cup [1-\omega,\infty),
$$
where $\omega_1 = {\co}(|\omega - \omega_0|)$. The zero eigenvalue is
double for $L_{\omega}$ and simple for $H_-$, whereas the eigenvalue $\omega_1$ is simple.
No resonances exist at the end points of the
continuous spectrum of $L_{\omega}$ and $H_{\pm}$. \label{proposition-spectrum}
\end{proposition}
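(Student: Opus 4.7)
The plan is to analyze $H_+$ and $H_-$ separately by perturbation from the base point $a=0$, and then use the block diagonalization (\ref{block1})--(\ref{block2}) to read off the spectrum of $iL_\omega=\sigma H_\omega$.

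\textbf{Continuous spectrum.} Since $\beta,\gamma$ and the entries of $V_\pm$ decay exponentially (the latter being polynomial combinations of the exponentially localized ${\bf U}$), they are relatively compact perturbations of the operators obtained at $x=\pm\infty$. A Fourier computation reduces the asymptotic symbol of $H_\pm$ to the matrix with eigenvalues $-\omega\pm\sqrt{k^2+1}$, and Weyl's theorem yields $\sigma_{\mathrm{ess}}(H_\pm)=(-\infty,-1-\omega]\cup[1-\omega,\infty)$. The operator $iL_\omega=\sigma H_\omega$ is asymptotic at infinity to the block-diagonal operator whose upper block is $D-\omega$ and lower block is $-(\bar D-\omega)$, whose spectrum is exactly the union of the four rays listed in the proposition.

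\textbf{Point spectrum at $a=0$.} When $a=0$ one has $\omega=\omega_0$ and $V_\pm=0$, so $H_+|_{a=0}=\ch-\omega_0 I$, while a direct check gives $H_-|_{a=0}=\sigma_3(\ch-\omega_0 I)\sigma_3$ with $\sigma_3=\mathrm{diag}(1,-1)$. Both operators are unitarily equivalent to $\ch-\omega_0 I$, which by Assumption \ref{assumption-one} has the unique simple eigenvalue $0$ in its gap and no resonances at the band edges. By Proposition \ref{proposition-local-bifurcation}, the perturbation $V_\pm(x)+(\omega_0-\omega)I$ has operator norm ${\co}(a^{2p})={\co}(|\omega-\omega_0|)$. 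Kato's regular perturbation theory for an isolated simple eigenvalue then produces a unique simple eigenvalue of $H_+$ located at $\omega_1={\co}(|\omega-\omega_0|)$ near zero, and a unique simple eigenvalue of $H_-$ near zero. The gauge identity (\ref{explicit-1}), combined with the $S$-conjugation of (\ref{kernel-F}), shows that $(iU,-iV)^T$ lies in $\ker H_-$ for every $\omega$ in the family, so the corresponding $H_-$ eigenvalue is pinned exactly at $0$.

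\textbf{Assembly for $iL_\omega$ and remaining claims.} From (\ref{block2}), $iL_\omega$ is similar to the block anti-diagonal operator with blocks $H_-$ and $-H_+$, so $\lambda\in\sigma(iL_\omega)$ iff $-\lambda^2$ lies in the spectrum of $H_-H_+$ (or $H_+H_-$). Combining the results above for $H_\pm$ yields the claimed five spectral pieces for $iL_\omega$. The two-dimensional Jordan block at $\lambda=0$ is already produced by the pair $\{{\bf F},{\bf G}\}$ in (\ref{kernel-F})--(\ref{kernel-F-G}) together with the nondegeneracy (\ref{nondegeneracy}), which holds for small $a$ under (\ref{condition-one}). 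Upper semicontinuity of the spectrum together with the absence of eigenvalues of $\ch-\omega_0 I$ in $(-1-\omega_0,1-\omega_0)\setminus\{0\}$ rules out new in-gap eigenvalues for small $a$.

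\textbf{Main obstacle.} The subtle point is propagating the absence of resonances from $\omega=\omega_0$ to nearby $\omega$ for $H_\pm$ and hence for $L_\omega$: the Neumann series for the resolvent diverges at the spectral edge, so one must argue via the Birman--Schwinger operator or a Puiseux expansion of the limiting resolvent in a suitable weighted space to show that the edge regularity persists under the small relatively compact perturbation $V_\pm+(\omega_0-\omega)I$.
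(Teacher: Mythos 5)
Your proposal is correct and follows essentially the same route as the paper: perturbation from the bifurcation point $a=0$, where $H_+|_{a=0}=\ch-\omega_0 I$ and $H_-|_{a=0}=\sigma_3(\ch-\omega_0 I)\sigma_3$, with the gauge symmetry pinning the $H_-$ eigenvalue (and the double zero of $L_\omega$) exactly at $0$, and Assumption \ref{assumption-one} ruling out edge bifurcations; indeed you are more explicit than the paper about the essential spectrum and about the one genuinely delicate step (persistence of non-resonance at the band edges), which the paper merely asserts. One caution: the relation ``$\lambda\in\sigma(iL_\omega)$ iff $-\lambda^2\in\sigma(H_-H_+)$'' is safe for eigenvalues but should not be used to assemble the continuous spectrum, since $H_\pm$ do not commute and $H_-H_+$ is non-self-adjoint; rely instead on your direct computation with the asymptotic block-diagonal operator $\mathrm{diag}(D-\omega,\,-(\bar D-\omega))$, which already gives the four rays.
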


\begin{proof}
The proof holds by perturbation theory. The self-adjoint operator $H_+$ is represented by
$$
H_+ = \ch - \omega I + V_+,
$$
where $\| V_+ \|_{L^{\infty}} = {\co}(a^{2p})$ and $|\omega - \omega_0| = {\co}(a^{2p})$ as $a \to 0$ (parameter
$a$ is used in Proposition \ref{proposition-local-bifurcation}). By Assumption \ref{assumption-one},
no resonances exist in $\ch$ and, therefore, no new eigenvalues bifurcate to the gap $(-1-\omega,1-\omega)$
of the continuous spectrum of $H_+$ from the non-resonant points $\pm 1 - \omega$ for small $a > 0$.
By the perturbation theory, the only eigenvalue at $0$ for $a = 0$ becomes
the eigenvalue $\omega_1 = {\co}(a^{2p})$.

The self-adjoint operator $H_-$ is given by
$$
H_- = \sigma_3 \ch \sigma_3 - \omega I + V_-,
$$
where $\| V_- \|_{L^{\infty}} = {\co}(a^{2p})$ as $a \to 0$.
The same perturbation theory applies to self-adjoint operator $H_-$, except of the
fact that the only eigenvalue at $0$ for $a = 0$ is preserved at $0$
for $a > 0$ thanks to the gauge invariance, which results in the exact relation
$$
H_- \left[ \begin{array}{c} U \\ -V \end{array} \right] = {\bf 0}.
$$

Similarly, the double zero eigenvalue of $L_{\omega}$ is preserved at $0$
by the gauge invariance as the generalized kernel
(\ref{kernel-F-G}), whereas the continuous
spectrum does not lead to resonances at the end points or to new eigenvalues for small $a > 0$.
\end{proof}

\section{Projections and modulation equations}

By Proposition \ref{proposition-spectrum}, we have
\begin{equation}
\label{generalized-kernel}
N_g(L_{\omega}) = {\rm span}\{{\bf F}, {\bf G}\}.
\end{equation}
Recalling matrix $\sigma$ from (\ref{matrix-sigma}), we obtain the
adjoint operator
$$
L_{\omega}^* = (-i \sigma H_{\omega})^* = i H_{\omega}^* \sigma^*
= i H_{\omega} \sigma,
$$
which has the generalized kernel
\begin{equation}
\label{generalized-kernel-adjoint}
N_g(L_{\omega}^*) = {\rm span}\{\sigma {\bf F}, \sigma {\bf G} \}.
\end{equation}

Any vector $[{\bf U}_1,{\bf U}_2]^T$
in the invariant subspace of the linearized operator $L_{\omega}$ in $L^2(\R)$,
which is an orthogonal complement of the generalized null space $N_g(L_{\omega})$,
has to satisfy the symplectic orthogonality conditions
\begin{equation}
\label{symplectic-orthogonality}
\left\{ \begin{array}{l} \langle {\bf U}, {\bf U}_1 \rangle_{L^2} +
\langle \bar{\bf U},{\bf U}_2 \rangle_{L^2} = 0, \\
\langle \partial_{\omega} {\bf U}, {\bf U}_1 \rangle_{L^2} -
\langle \partial_{\omega} \bar{\bf U},{\bf U}_2 \rangle_{L^2}= 0. \end{array} \right.
\end{equation}
If ${\bf U}_2 = \bar{\bf U}_1$, the symplectic orthogonality conditions (\ref{symplectic-orthogonality})
can be rewritten in the explicit form
\begin{equation}
\label{symplectic-orthogonality-2}
{\rm Re} \langle {\bf U}, {\bf U}_1 \rangle_{L^2} = 0, \quad
{\rm Im} \langle \partial_{\omega} {\bf U}, {\bf U}_1 \rangle_{L^2} = 0.
\end{equation}

Using symplectic orthogonality conditions, we now set up modulation
equations for nonlinear dynamics of small gap solitons. By
Proposition \ref{proposition-local-existence}, we have at least local solutions
of the nonlinear Dirac equations (\ref{cme-formal}). Now we look for local
solutions in the form
\begin{equation}
\label{modulation}
\left\{ \begin{array}{cc}
u(x,t) = e^{-i \theta(t)} \left[ U(x;\omega(t)) + U_1(x,t) \right], \\
v(x,t) = e^{-i \theta(t)} \left[ V(x;\omega(t)) + V_1(x,t) \right],
\end{array} \right.
\end{equation}
where we write explicitly the dependence of the stationary solution ${\bf U} = [U,V]^T$ on $\omega$.
The time evolution problem for ${\bf U}_1 = [U_1,V_1]^T$ is given by
\begin{equation}
\label{time-evolution-vector}
i \frac{d {\bf U}_1}{d t} = (\ch -\omega I) {\bf U}_1 - i \dot{\omega} \partial_{\omega} {\bf U}
- (\dot{\theta}-\omega) ({\bf U} + {\bf U}_1) + {\bf N}({\bf U}+{\bf U}_1) - {\bf N}({\bf U}).
\end{equation}

Using the symplectic orthogonality condition (\ref{symplectic-orthogonality-2}) on ${\bf U}$, we obtain the
modulation equations on $\omega(t)$ and $\theta(t)$:
\begin{eqnarray}
\label{projection-equations}
\left\{ \begin{array}{cc} \dot{\omega} {\rm Re} \langle \partial_{\omega} {\bf U}, {\bf U} - {\bf U}_1 \rangle_{L^2} +
(\dot{\theta} - \omega) {\rm Im} \langle {\bf U}, {\bf U}_1 \rangle_{L^2} = F_1, \\
\dot{\omega} {\rm Im} \langle \partial^2_{\omega} {\bf U}, {\bf U}_1 \rangle_{L^2} +
(\dot{\theta} - \omega) {\rm Re} \langle \partial_{\omega} {\bf U}, {\bf U} + {\bf U}_1 \rangle_{L^2} = F_2,
\end{array} \right.
\end{eqnarray}
where
\begin{eqnarray*}
F_1 & = & {\rm Im} \left[ \langle {\bf U}, {\bf N}({\bf U}+{\bf U}_1) - {\bf N}({\bf U}) \rangle_{L^2}
+ \langle \bar{V}_{12} \bar{\bf U} - V_{11} {\bf U}, {\bf U}_1 \rangle_{L^2} \right], \\
F_2 & = & {\rm Re} \left[ \langle \partial_{\omega} {\bf U}, {\bf N}({\bf U}+{\bf U}_1) - {\bf N}({\bf U}) \rangle_{L^2} - \langle V_{12} \partial_{\omega} \bar{\bf U} + V_{11}  \partial_{\omega} {\bf U}, {\bf U}_1 \rangle_{L^2} \right]
\end{eqnarray*}
and equations (\ref{explicit-1}) and (\ref{explicit-2}) have been
used. The following result shows that the right-hand side of
system (\ref{projection-equations}) is quadratic with respect to
the perturbation vector ${\bf U}_1$.

\begin{proposition}
Let assumptions of Proposition \ref{proposition-local-bifurcation} be satisfied.
Fix small $\epsilon > 0$ and $\delta > 0$
For any $\omega \in (\omega_0, \omega_0 + \epsilon)$ and
any ${\bf U}_1 \in B_{\delta}(L^{\infty})$, there is $C_{\epsilon,\delta} > 0$ such that
\begin{equation}
\label{e:1} |F_1| + |F_2| \leq C_{\epsilon,\delta} \langle {\bf U}_1^2, {\bf U} \rangle_{L^2}.
\end{equation}
\label{proposition-vector-field}
\end{proposition}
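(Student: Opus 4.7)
\bp
The plan is to Taylor-expand the nonlinear difference, cancel the resulting linear-in-${\bf U}_1$ contribution against the correction terms coming from (\ref{explicit-1})--(\ref{explicit-2}), and then bound the genuinely quadratic residual by $\langle {\bf U}_1^2, {\bf U} \rangle_{L^2}$.

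First I would write
\[
{\bf N}({\bf U}+{\bf U}_1)-{\bf N}({\bf U}) = V_{11}{\bf U}_1 + V_{12}\bar{\bf U}_1 + {\bf R}({\bf U},{\bf U}_1),
\]
where the matrices $V_{11},V_{12}$ coincide with those in (\ref{potentials-explicit-linearized}) evaluated at the profile ${\bf U}$, and ${\bf R}$ is a polynomial of total degree $2p+1$ in $({\bf U},\bar{\bf U},{\bf U}_1,\bar{\bf U}_1)$ in which every monomial carries at least two factors drawn from $\{{\bf U}_1,\bar{\bf U}_1\}$. Plugging this into the definitions of $F_1$ and $F_2$ decomposes each into a piece linear in ${\bf U}_1$ (involving only $V_{11}, V_{12}$ and the correction terms) and a piece $\langle {\bf U},{\bf R}\rangle_{L^2}$ or $\langle \partial_\omega{\bf U},{\bf R}\rangle_{L^2}$.

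Next I would verify that the linear-in-${\bf U}_1$ piece vanishes. The Hermitian identity $V_{11}=\bar V_{11}^T$ gives $\langle {\bf U}, V_{11}{\bf U}_1\rangle_{L^2}=\langle V_{11}{\bf U}, {\bf U}_1\rangle_{L^2}$, which cancels the $-V_{11}{\bf U}$ piece of the correction term in $F_1$, and similarly with $\partial_\omega{\bf U}$ in $F_2$, independently of whether one takes ${\rm Re}$ or ${\rm Im}$. The cancellation of the $V_{12}$ contributions is more delicate: using $V_{12}=V_{12}^T$ together with conjugation of the inner product and the profile symmetry (\ref{symmetry-stationary-soliton}) to move $\bar{\bf U}_1$ between slots, the combination $\langle {\bf U}, V_{12}\bar{\bf U}_1\rangle_{L^2}+\langle \bar V_{12}\bar{\bf U}, {\bf U}_1\rangle_{L^2}$ is purely real and is therefore annihilated by ${\rm Im}$ in $F_1$; the analogous combination in $F_2$ is purely imaginary and is killed by ${\rm Re}$.

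Finally I would estimate the quadratic residual. Expanding ${\bf R}$ as a finite sum of monomials $c_k({\bf U},\bar{\bf U})\,P_k({\bf U}_1,\bar{\bf U}_1)$ with $c_k$ of degree $2p+1-k$ in $({\bf U},\bar{\bf U})$ and $P_k$ of degree $k\in\{2,\ldots,2p+1\}$ in $({\bf U}_1,\bar{\bf U}_1)$, bounding the excess $k-2$ copies of $|{\bf U}_1|$ by $\|{\bf U}_1\|_{L^\infty}^{k-2}\le\delta^{k-2}$ and the excess copies of $|{\bf U}|$ by $\|{\bf U}\|_{L^\infty}^{2p+1-k}$, and invoking Proposition \ref{proposition-local-bifurcation} to control $\|{\bf U}\|_{L^\infty}$ and $\|\partial_\omega{\bf U}\|_{L^\infty}$ in terms of $\epsilon$, leaves a single integral $\int_{\R}|{\bf U}_1|^2|{\bf U}|\,dx = \langle {\bf U}_1^2, {\bf U}\rangle_{L^2}$ multiplied by $C_{\epsilon,\delta}$.

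The main obstacle will be the second step: the algebraic verification that the $V_{12}$-contributions to the linear-in-${\bf U}_1$ part cancel after the ${\rm Re}/{\rm Im}$ projection, which requires careful interplay between the symmetry of $V_{12}$, the conjugate-linearity of the inner product, and the profile identity (\ref{symmetry-stationary-soliton}) behind (\ref{explicit-1})--(\ref{explicit-2}).
\ep
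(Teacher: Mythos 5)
Your proposal follows essentially the same route as the paper's proof: expand ${\bf N}({\bf U}+{\bf U}_1)-{\bf N}({\bf U})$ as in \eqref{expansion-N}, use the symmetries $V_{11}=\bar V_{11}^T$ and $V_{12}=V_{12}^T$ together with the conjugate-linearity of the inner product to cancel the linear-in-${\bf U}_1$ terms against the corrections coming from \eqref{explicit-1}--\eqref{explicit-2}, and then absorb the excess factors of $|{\bf U}_1|$ and $|{\bf U}|$ in the residual into $\delta$ and $\|{\bf U}\|_{L^\infty}$ to leave $\langle {\bf U}_1^2,{\bf U}\rangle_{L^2}$. Your write-up is in fact somewhat more explicit than the paper's (which only displays the cancellation for $F_1$ and asserts the rest), and the cancellation mechanism you identify for the $V_{12}$ terms under ${\rm Im}$/${\rm Re}$ is exactly the one the paper uses.
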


\begin{proof}
We use the expansion (\ref{expansion-N}), constraints on matrices (\ref{potentials-explicit-linearized})
and elementary properties of inner product (\ref{norm-l2}) to show that the linear terms
in ${\bf U}_1$ vanish in the expression for $F_1$ and $F_2$. For instance, $F_1$
has the following linear terms in ${\bf U}_1$:
\begin{eqnarray*}
{\rm Im} \left[ \langle {\bf U}, V_{11} {\bf U}_1 + V_{12} \bar{\bf U}_1 \rangle_{L^2}
+ \langle \bar{V}_{12} \bar{\bf U} - V_{11} {\bf U}, {\bf U}_1 \rangle_{L^2} \right] \\
=
{\rm Im} \left[ \langle \bar{V}_{11}^T {\bf U},  {\bf U}_1 \rangle_{L^2}
- \langle V_{11} {\bf U}, {\bf U}_1 \rangle_{L^2}
+ \langle \bar{V}_{12}^T {\bf U}, \bar{\bf U}_1 \rangle_{L^2}
+ \langle \bar{V}_{12} \bar{\bf U},
{\bf U}_1  \rangle_{L^2}   \right] = 0.
\end{eqnarray*}
Similar computations holds for linear terms of $F_2$. Together
with smoothness in Assumption \ref{assumption-two}, this computation
shows that both terms $F_1, F_2$ are quadratic in ${\bf U_1}$ in the
sense of \eqref{e:1}.
\end{proof}

Setting ${\bf U}_1 = {\bf Y} e^{i \theta}$, we rewrite the
time-evolution equation (\ref{time-evolution-vector}) in the
equivalent form
\begin{equation}
\label{residual-equation} i \frac{d {\bf Y}}{d t} = \ch {\bf Y} +
e^{-i \theta} {\bf F}, \quad {\bf F} = - i \dot{\omega}
\partial_{\omega} {\bf U} - (\dot{\theta}-\omega) {\bf U} + {\bf
N}({\bf U}+{\bf Y} e^{i \theta}) - {\bf N}({\bf U}).
\end{equation}

We are now ready to formulate the main theorem of this article.

\begin{theorem}
\label{theorem-main} Assume Assumption \ref{assumption-one}, Assumption \ref{assumption-two}
with $p \geq 2$, and condition (\ref{condition-one}). Fix $\epsilon > 0$ and $\delta > 0$ sufficiently small such that
$\theta(0) = 0$, $\omega(0) \in (\omega_0,\omega_0 + \epsilon)$, and
${\bf Y}(0) \in B_{\delta}(H^1)$. There exist $\epsilon_0 > \epsilon$, $\theta_{\infty} \in \mathbb{R}$,
$\omega_{\infty} \in (\omega_0,\omega_0 + \epsilon_0)$,
$(\omega,\theta) \in C^1(\mathbb{R}_+,\mathbb{R}^2)$, and
$$
{\bf Y}(t) \in C(\mathbb{R}_+,H^1) \cap L^4(\mathbb{R}_+,L^\infty)
$$
such that $(\omega,\theta)(t)$ solve the modulation equations (\ref{projection-equations}),
${\bf Y}(t)$ solves the evolution equation (\ref{residual-equation}), and
$$
\lim_{t \to \infty} \left( \theta(t) - \int_0^t \omega(s) ds
\right) = \theta_{\infty}, \quad \lim_{t \to \infty} \omega(t) =
\omega_{\infty}, \quad \lim_{t \to \infty} \| {\bf Y}(t)
\|_{L^{\infty}} = 0.
$$
\end{theorem}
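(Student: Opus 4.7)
The plan is a bootstrap argument in a combined Strichartz--Mizumachi norm for the dispersive remainder ${\bf Y}$, coupled to direct integration of the modulation equations for $(\omega,\theta)$. Given initial data at $H^1$-distance $\delta$ from the soliton family $\{{\bf U}(\cdot;\omega)\}$, I first invoke the Implicit Function Theorem --- using the non-degeneracy (\ref{nondegeneracy}) and the symplectic orthogonality conditions (\ref{symplectic-orthogonality-2}) --- to determine unique $C^1$ parameters $(\omega(t),\theta(t))$ and a remainder ${\bf U}_1={\bf Y}e^{i\theta}$ symplectically orthogonal to $N_g(L_{\omega(t)}^*)$ on the maximal interval $[0,T^*)$ from Proposition \ref{proposition-local-existence}, on which both (\ref{projection-equations}) and (\ref{residual-equation}) hold. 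The bootstrap norm is
\begin{equation*}
\|{\bf Y}\|_{X(T)} := \|{\bf Y}\|_{L^\infty_t([0,T];H^1)} + \|{\bf Y}\|_{L^4_t([0,T];L^\infty_x)} + \|e^{-\kappa|x|/2}{\bf Y}\|_{L^2_t([0,T];L^2_x)},
\end{equation*}
and the target is $\|{\bf Y}\|_{X(T)}\leq 2C_0\delta$ for all $T<T^*$, which via the $H^1$ blow-up alternative for (\ref{cme-formal}) will force $T^*=\infty$.

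To bound $\|{\bf Y}\|_{X(T)}$, I apply Duhamel's formula to (\ref{residual-equation}) and use the Strichartz bound $\|e^{-it\ch}P_c f\|_{L^4_tL^\infty_x}\lesssim\|f\|_{L^2}$ together with the Mizumachi local-smoothing bound $\|e^{-\kappa|x|/2}e^{-it\ch}P_c f\|_{L^2_tL^2_x}\lesssim\|f\|_{L^2}$ from Section 7, as well as their dual inhomogeneous versions. The source $e^{-i\theta}{\bf F}$ splits into a modulation part $-i\dot\omega\,\partial_\omega{\bf U}-(\dot\theta-\omega){\bf U}$, which is exponentially localized in space (Proposition \ref{proposition-local-bifurcation}) and enters through $\int_0^T(|\dot\omega|+|\dot\theta-\omega|)\,dt$, and the nonlinear remainder ${\bf N}({\bf U}+{\bf Y}e^{i\theta})-{\bf N}({\bf U})$. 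By Assumption \ref{assumption-two} with $p\geq 2$, the latter decomposes schematically into terms of the form ${\bf U}^j {\bf Y}^{2p+1-j}$ with $0\leq j\leq 2p$: terms with $j\geq 1$ are spatially localized and small (via $\|{\bf U}\|_{L^\infty}={\co}(a)$), hence absorbed through the dual local-smoothing estimate into the weighted component of $X(T)$, while the purely nonlinear term is controlled by
\begin{equation*}
\|{\bf Y}^{2p+1}\|_{L^1_tL^2_x}\leq\|{\bf Y}\|_{L^4_tL^\infty_x}^{2p}\|{\bf Y}\|_{L^\infty_tL^2_x}\leq(2C_0\delta)^{2p+1}
\end{equation*}
(using $H^1(\R)\hookrightarrow L^\infty(\R)$ if $2p>4$). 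A parallel $H^1$ energy estimate on (\ref{residual-equation}), relying on self-adjointness of $\ch$ and the Banach-algebra property of $H^1(\R)$, closes the $L^\infty_t H^1_x$ component.

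For the modulation parameters, Proposition \ref{proposition-vector-field} and the exponential decay of ${\bf U}$ give
\begin{equation*}
|\dot\omega(t)|+|\dot\theta(t)-\omega(t)|\leq C\langle {\bf Y}(t)^2,{\bf U}\rangle_{L^2}\leq C\|e^{-\kappa|x|/2}{\bf Y}(t)\|_{L^2_x}^2,
\end{equation*}
so that $\int_0^T(|\dot\omega|+|\dot\theta-\omega|)\,dt\leq C\|{\bf Y}\|_{X(T)}^2$. This quadratic-in-$\delta$ bound closes the modulation-source estimate in the Duhamel step, completing the bootstrap; once $T^*=\infty$ is established, the same inequality shows that $\omega(t)\to\omega_\infty$ and $\theta(t)-\int_0^t\omega(s)\,ds\to\theta_\infty$ as $t\to\infty$. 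Finally, $\|{\bf Y}\|_{L^4_t L^\infty_x}<\infty$ combined with the uniform $L^\infty_t H^1_x$ bound and the equation (\ref{residual-equation}) yields uniform continuity of $t\mapsto\|{\bf Y}(t)\|_{L^\infty}$, forcing $\|{\bf Y}(t)\|_{L^\infty}\to 0$.

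The crux of the argument is closing all three components of $X(T)$ simultaneously, and this is precisely where $p\geq 2$ is essential: for a cubic nonlinearity ($p=1$), the pairing would produce $\int_0^T\|{\bf Y}\|_{L^\infty}^2\,dt\leq T^{1/2}\|{\bf Y}\|_{L^4_tL^\infty_x}^2$, which diverges as $T\to\infty$ and defeats the bootstrap, whereas for $p\geq 2$ one gains an extra factor $\|{\bf Y}\|_{L^4_tL^\infty_x}^{2p-2}$ small in $\delta$. The second serious input is the Mizumachi local-smoothing estimate for $e^{-it\ch}P_c$ with a non-trivial potential, developed in Section 7; this estimate plays a dual role, simultaneously controlling the localized components of the source ${\bf F}$ in the Duhamel step and converting the pointwise bound $\langle{\bf Y}^2,{\bf U}\rangle$ into a time-integrable quantity, thereby producing the summable control $\int_0^\infty(|\dot\omega|+|\dot\theta-\omega|)\,dt<\infty$ that asymptotic stability of the soliton manifold demands. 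A secondary subtlety is that $N_g(L_{\omega(t)}^*)$ drifts with $\omega(t)$, so consistency of the symplectic decomposition must be maintained as $\omega$ varies; this is automatic once $\dot\omega\in L^1(\R_+)$ is established inside the bootstrap.
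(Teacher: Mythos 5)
Your overall architecture --- a contraction/bootstrap in a combined Strichartz plus weighted-local-smoothing norm, quadratic control of $(\dot\omega,\dot\theta-\omega)$ via Proposition \ref{proposition-vector-field}, the split of the nonlinearity into localized terms ${\bf U}^{j}{\bf Y}^{2p+1-j}$ and the pure power ${\bf Y}^{2p+1}$, and the interpolation that makes $p\geq 2$ essential --- is the same as the paper's. But there is one genuine gap: you apply Duhamel's formula to (\ref{residual-equation}) and immediately invoke the estimates for $e^{-it\ch}P_{a.c.}(\ch)$, as if ${\bf Y}$ lived in the absolutely continuous subspace of $\ch$. It does not. Under Assumption \ref{assumption-one}, $\ch$ has the bound state ${\bf u}_0$ with eigenvalue $\omega_0$, and the symplectic orthogonality conditions (\ref{symplectic-orthogonality-2}) are taken against $N_g(L_{\omega}^*)$ (built from ${\bf U}$ and $\partial_\omega{\bf U}$), not against the spectral projection of $\ch$. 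So the component $a(t)=\langle{\bf u}_0,{\bf Y}(t)\rangle_{L^2}$ is nonzero, it satisfies $i\dot a=\omega_0 a+\langle{\bf u}_0,e^{-i\theta}{\bf F}\rangle_{L^2}$, and under the free flow it merely oscillates --- none of your dispersive estimates see it. Without an additional argument, nothing in your scheme forces $a(t)\to 0$, and the claimed decay $\|{\bf Y}(t)\|_{L^\infty}\to 0$ fails.

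The paper closes this by decomposing ${\bf Y}=a{\bf u}_0+{\bf Z}$ with ${\bf Z}=P_{a.c.}(\ch){\bf Y}$, running the linear estimates only on ${\bf Z}$, and controlling $\|a\|_{L^2_t\cap L^\infty_t}$ separately: since Proposition \ref{proposition-local-bifurcation} gives $\|\langle x\rangle^{\alpha}({\bf u}_0-{\bf U}/\|{\bf U}\|_{L^2})\|_{L^2}+\|\langle x\rangle^{\alpha}({\bf u}_0-\partial_\omega{\bf U}/\|\partial_\omega{\bf U}\|_{L^2})\|_{L^2}\leq C|\omega-\omega_0|$, the orthogonality conditions (\ref{symplectic-orthogonality-2}) imply $|a(t)|\leq C|\omega-\omega_0|\cdot\|\langle x\rangle^{-\alpha}{\bf Y}(t)\|_{L^2_x}$, so $a$ is both small (by the smallness of $\epsilon$) and square-integrable in time, and can be absorbed into the fixed point. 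This is a load-bearing use of the symplectic orthogonality beyond solvability of the modulation equations, and your proposal needs it. Two lesser points: your weighted norm $\|e^{-\kappa|x|/2}{\bf Y}\|_{L^2_tL^2_x}$ is weaker than the paper's $\|\langle x\rangle^{-\alpha}{\bf Y}\|_{L^\infty_xL^2_t}$, and the paper also carries $\|\langle x\rangle^{-\alpha}\partial_x{\bf Y}\|_{L^\infty_xL^2_t}$ in the solution norm because the dual Mizumachi estimate used on the localized part of the nonlinearity requires control of $\langle x\rangle^{\alpha}\partial_x{\bf G}$ in $L^1_xL^2_t$; you should check that your weaker weighted component actually closes against whichever inhomogeneous local-smoothing estimate you intend to use.
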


We shall prove this theorem in the remainder of the article. To do so, we shall develop first the
spectral theory for the Dirac operator $\ch$ and obtain the dispersive decay estimates
for the semi-group $e^{- i t \ch}$ acting on the continuous spectrum of $\ch$.

\section{Spectral theory for operator $\ch$}
\label{sec:3}

Let us consider the spectral problem $\ch {\bf u} = \lambda {\bf u}$ or explicitly,
\begin{eqnarray}
\left\{ \begin{array}{l} -i u'(x) + \beta(x) u(x) + (\gamma(x) - 1) v(x) = \lambda u(x), \\
i v'(x) + \beta(x) v(x) + (\gamma(x) - 1) u(x) = \lambda v(x), \end{array} \right. \quad x \in \R.
\label{spectral-Dirac}
\end{eqnarray}
Recall that
$$
\sigma_c(\ch) \equiv (-\infty,-1] \cup [1,\infty).
$$

Here we develop the scattering theory of wave operators for the
Dirac operator $\ch$. A similar theory for the Schr\"{o}dinger
operators on an infinite line goes back to the works of Weder
\cite{Weder, Weder2} and Goldberg \& Schlag \cite{GS}.

Let us first define the Jost functions for $\lambda \in (-\infty,-1]$
at one branch of $\sigma_c(\ch)$. To do so, let us
parameterize $(-\infty,-1]$ by $\lambda = -\sqrt{1 + k^2}$ for
$k \in \R$ and consider solutions of system (\ref{spectral-Dirac}) according to
the boundary conditions
\begin{eqnarray}
\label{boundary-values-u} {\bf u}^{\pm}(x;k) \to \left[
\begin{array}{c} 1 \\ \alpha_{\pm}(k) \end{array} \right]
e^{\pm i k x} \quad \mbox{\rm as} \quad x \to \pm \infty,
\end{eqnarray}
where $\alpha_{\pm}(k) := \sqrt{1 + k^2} \pm k$. The following
proposition gives the construction of Jost functions.

\begin{proposition}
For any $k \in \R$, there exist unique Jost functions ${\bf
u}^{\pm}(x;k)$ such that
$$
\lim_{x \to \pm \infty} \left[ {\bf u}^{\pm}(\cdot;k) -
[1,\alpha_+]^T e^{\pm i k x} \right] = 0.
$$
Moreover,

\begin{itemize}
\item If $k \neq 0$, then ${\bf u}^{\pm}(\cdot;k) \in L^{\infty}(\R)$.

\item If $k = 0$, then ${\bf u}^{\pm}(x;0)$ may grow  at most linearly in $x$ as $x \to \mp \infty$.

\item As $k \to \pm \infty$, both ${\bf u}^+(x;k)_1$ and ${\bf
u}^-(x;k)_1$ are bounded, ${\bf u}^{\pm}(x;k)_2$ grows linearly
in $k$, and ${\bf u}^{\mp}(x;k)_2$ decays inverse linearly in
$k$.
\end{itemize}
\label{proposition-Jost-functions}
\end{proposition}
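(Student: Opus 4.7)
The plan is to realize each Jost function as the unique fixed point of a Volterra integral equation coming from variation of parameters, and to read off all three stated properties from the resulting Neumann series. Rewrite \eqref{spectral-Dirac} as the first-order system ${\bf u}'(x) = A_0(k){\bf u}(x) + A_1(x){\bf u}(x)$, where $A_0(k)$ is the constant-coefficient matrix obtained when $\be\equiv\ga\equiv 0$ and $A_1(x)$ collects the exponentially decaying contributions of $\be,\ga$ granted by Assumption \ref{assumption-one}. The free system has fundamental matrix $\Phi_0(x;k)$ whose columns are $[1,\al_+(k)]^T e^{ikx}$ and $[1,\al_-(k)]^T e^{-ikx}$, with Wronskian $\det \Phi_0 = \al_-(k)-\al_+(k) = -2k$. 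Variation of parameters and the boundary condition \eqref{boundary-values-u} give, for $k\neq 0$, the Volterra equation
\begin{equation*}
{\bf u}^+(x;k) = \begin{bmatrix} 1 \\ \al_+(k) \end{bmatrix} e^{ikx} - \int_x^\infty K(x,y;k)\, A_1(y)\, {\bf u}^+(y;k)\, dy,
\end{equation*}
with kernel $K(x,y;k) = \Phi_0(x;k)\Phi_0(y;k)^{-1}$ whose entries are linear combinations of $\al_\pm(k)\,e^{\pm ik(x-y)}/k$ and $\sin(k(x-y))/k$, hence $|K(x,y;k)| \leq C\sqrt{1+k^2}/|k|$ for all $x,y\in\R$; an analogous representation holds for ${\bf u}^-$ with integration running from $-\infty$ to $x$.

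For $k\neq 0$, the nested integration over simplices yields the standard Volterra--Picard bound
\begin{equation*}
|{\bf u}_n^+(x;k) - {\bf u}_{n-1}^+(x;k)| \leq \frac{C\al_+(k)}{n!}\left(\frac{C\sqrt{1+k^2}}{|k|}\int_{-\infty}^\infty |A_1(y)|\, dy\right)^{\!n},
\end{equation*}
whose $1/n!$ factor combined with $\int|A_1| < \infty$ forces absolute convergence of the Neumann series uniformly in $x\in\R$. The limit lies in $L^\infty(\R)$ and solves \eqref{spectral-Dirac} classically; uniqueness is a Gronwall argument on the difference of two solutions with the same Volterra representation. This proves the first bullet. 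At the threshold $k=0$, the two columns of $\Phi_0(x;0)$ collapse to the common value $[1,1]^T$; direct integration of the free system at $\la=-1$ shows that a second independent free solution is $[-ix,-ix-1]^T$, which grows linearly in $x$. Setting up the Volterra equation with this modified fundamental matrix introduces a factor of $(x-y)$ into the kernel, but the iteration still converges by the same $1/n!$ mechanism because $A_1$ decays exponentially. The Jost function then inherits at most the linear growth of this second free solution at the opposite infinity, giving the second bullet.

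For the large-$|k|$ asymptotics I use that $\al_+(k)\sim 2k$ and $\al_-(k)\sim 1/(2k)$ as $k\to+\infty$, with roles swapped as $k\to-\infty$. The source term $[1,\al_+(k)]^T e^{ikx}$ of the Volterra equation for ${\bf u}^+$ already realizes the announced $k$-scaling of the first and second components; the same holds for ${\bf u}^-$ with $\al_-(k)$. After rescaling the second component of the iterate by $\al_+(k)$, the rescaled Volterra kernel is uniformly $O(1/|k|)$ for large $|k|$, so the Neumann correction is strictly subleading in $k$. Reading off components yields the third bullet.

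The main obstacle is the threshold case $k=0$: the free Wronskian vanishes, the bound on $K(x,y;k)$ blows up, and one must build the Volterra equation around the linearly growing free solution $[-ix,-ix-1]^T$ and then invoke the no-resonance portion of Assumption \ref{assumption-one} to rule out super-linear or unbounded growth of the Jost function (such growth would signal a resonance at $\pm 1$, which is excluded). A secondary bookkeeping difficulty arises in the large-$|k|$ estimate, where the apparent $\al_+(k)/|k|$ growth of the kernel must be absorbed by the rescaling of the second component before iterating; otherwise a naive bound would propagate spurious powers of $k$ through successive iterates and obscure the correct asymptotic scaling.
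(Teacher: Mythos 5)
Your proof follows essentially the same route as the paper's: the identical Volterra/Green's-function integral equation (the paper's kernel $G^{\pm}(x-y;k)$ in \eqref{matrix-potential} is exactly your $\Phi_0(x;k)\Phi_0(y;k)^{-1}$ conjugated by $e^{\pm ikx}$, since the paper works with ${\bf m}^{\pm}=e^{\mp ikx}{\bf u}^{\pm}$), with the Picard--Neumann iteration spelled out where the paper invokes ``standard theory,'' and the same three-regime treatment of $k\neq 0$, $k=0$, and $|k|\to\infty$. One small correction to your closing remarks: the no-resonance hypothesis is not what rules out super-linear growth at $k=0$ --- the Volterra iteration with the linearly growing kernel and the exponential decay of $V$ already yields at most linear growth unconditionally, exactly as in \eqref{integral-equations-Volterra-2}; the condition $\gamma^{\pm}\neq 0$ (no resonance) is what makes the linear growth actually occur, whereas in the resonant case ${\bf u}^{\pm}(x;0)$ would remain bounded.
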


\begin{proof}
Setting ${\bf u}^{\pm}(x;k) = {\bf m}^{\pm}(x;k) e^{\pm i k x}$
and using the Green function technique, we obtain an integral
equation for the Jost functions ${\bf m}^{\pm}(x;k)$
\begin{equation}
\label{integral-equations-Volterra}
{\bf m}^{\pm}(x;k) =  \left[ \begin{array}{c} 1 \\ \alpha_{\pm}
\end{array} \right] + \int_x^{\pm \infty} G^{\pm}(x-y;k) V(y) {\bf
m}^{\pm}(y;k)dy,
\end{equation}
where
\begin{equation}
\label{matrix-potential}
G^{\pm}(x;k) =
\frac{1}{2 i k} \left[ \begin{array}{cc} \alpha_{\mp} - \alpha_{\pm}
e^{\mp 2 i k x} & 1 - e^{\mp 2i k x} \\
1 - e^{\mp 2i k x}  & \alpha_{\pm} - \alpha_{\mp} e^{\mp 2 i k x}
\end{array} \right].
\end{equation}

Under the assumption of fast decay of $V(x)$ to $0$
as $|x| \to \infty$, the standard theory gives solutions ${\bf m}^{\pm}(\cdot;k) \in
L^{\infty}(\R)$ of the integral equations (\ref{integral-equations-Volterra}) for
$k \neq 0$ and the scattering relation between the solutions
for all $x \in \R$ including $x \to \mp \infty$
\begin{equation}
\left\{ \begin{array}{l} {\bf m}^+(x;k) = a^+(k) {\bf m}^-(x;-k) +
b^+(k) {\bf m}^-(x;k) e^{-2 i k x}, \\
{\bf m}^-(x;k) = a^-(k) {\bf m}^+(x;-k) + b^-(k) {\bf
m}^+(x;k) e^{2 i k x},  \end{array} \right. \label{scattering-relation}
\end{equation}
where
\begin{eqnarray*}
a^{\pm}(k) & = & 1 \pm \frac{1}{2 i k} \int_{\R} \left( \alpha_{\mp} [V(x)
{\bf m}^{\pm}(x;k)]_1 +  [V(x) {\bf m}^{\pm}(x;k)]_2 \right) dx, \\
b^{\pm}(k) & = & \mp \frac{1}{2 i k} \int_{\R} \left( \alpha_{\pm} [V(x) {\bf
m}^{\pm}(x;k)]_1 +  [V(x) {\bf m}^{\pm}(x;k)]_2 \right) e^{\pm 2 i k x} dx.
\end{eqnarray*}
It follows from the explicit expressions for $a^{\pm}(k)$ and $b^{\pm}(k)$ that
\begin{eqnarray}
a^{\pm}(k) \to \frac{\pm \gamma^{\pm}}{2 i k}, \quad b^{\pm}(k)
\to \frac{\mp \gamma^{\pm}}{2 i k} \quad \mbox{\rm as} \quad k \to
0, \label{limit-a-b}
\end{eqnarray}
where
\begin{eqnarray}
\label{gamma-resonance}
\gamma^{\pm} = \int_{\R} (\beta(x) + \gamma(x)) (m_1^{\pm}(x;0) + m_2^{\pm}(x;0)) dx.
\end{eqnarray}

If $k = 0$, the Jost functions ${\bf m}^{\pm}(x;0)$ satisfy the integral equation
\begin{equation}
\label{integral-equations-Volterra-2}
{\bf m}^{\pm}(x;0) =  \left[ \begin{array}{c} 1 \\ 1
\end{array} \right] + \int_x^{\pm \infty} G^{\pm}(x-y;0) V(y) {\bf
m}^{\pm}(y;0)dy,
\end{equation}
where
\begin{equation}
G^{\pm}(x;0) = \pm \left[ \begin{array}{cc} x + i & x \\
x  & x - i \end{array}
\right].
\end{equation}
Thanks to the fast decay of $V(x)$, existence of locally bounded
function ${\bf m}^{\pm}(x;0)$ follows again from the standard theory. The linear growth of
${\bf m}^{\pm}(x;0)$ as $x \to \mp \infty$ follows from the integral equations (\ref{integral-equations-Volterra-2})
if $\gamma^{\pm} \neq 0$.

Finally, as $k \to \pm \infty$, $\alpha_{\pm}$ grows linearly in
$k$, $\alpha_{\mp}$ decays inverse linearly in $k$, whereas
$G^{\pm}(x;k)$ remains bounded. The asymptotic behavior of
solutions ${\bf m}^{\pm}(x;k)$ of the integral equations
(\ref{integral-equations-Volterra}) follows the asymptotic
behavior of the limiting functions (\ref{boundary-values-u}) in
$k$ as $k \to +\infty$.
\end{proof}

\begin{remark}
Proposition \ref{proposition-Jost-functions} eliminates the possibility of embedded eigenvalues
in the continuous spectrum $\sigma_c(\ch)$ because the space of solutions of the
Dirac system (\ref{spectral-Dirac}) for $\lambda < -1$ is spanned by
the two fundamental solutions ${\bf u}^{\pm}(x;k)$ with no decay to zero as $x \to \pm \infty$.
\end{remark}

The following proposition summarizes the relations on the scattering coefficients in the scattering relation (\ref{scattering-relation}).

\begin{proposition}
For any $k \in \R$, we have
\begin{equation}
\label{scat-rel-1}
a^+(k) = a^-(k), \quad b^+(k) = -b^-(-k),
\end{equation}
\begin{equation}
\label{scat-rel-2}
a^+(-k) = \bar{a}^+(k), \quad b^+(-k) = \frac{\sqrt{1 + k^2} - k}{\sqrt{1 + k^2} + k} \bar{b}^+(k),
\end{equation}
and
\begin{equation}
\label{scat-rel-3}
|a^+(k)|^2 = 1 + \frac{\sqrt{1 + k^2} - k}{\sqrt{1 + k^2} + k} |b^+(k)|^2.
\end{equation}
\end{proposition}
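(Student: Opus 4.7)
The plan rests on two structural facts about the Dirac system (\ref{spectral-Dirac}) at real $\lambda$: a conserved ``Dirac Wronskian'', and a complex-conjugation symmetry permuting the two components.

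First, I would verify by direct differentiation that for any two solutions $\phi,\psi$ of (\ref{spectral-Dirac}) at the same real $\lambda$, the bilinear form $W(\phi,\psi) := \phi_1\psi_2 - \phi_2\psi_1$ is $x$-independent. From the boundary data (\ref{boundary-values-u}) one then reads off the building-block Wronskians
\begin{equation*}
W({\bf u}^\pm(\cdot;k),\, {\bf u}^\pm(\cdot;-k)) = \alpha_\mp(k) - \alpha_\pm(k) = \mp 2k,
\end{equation*}
and (\ref{scattering-relation}) rewrites in ${\bf u}$-form as
\begin{equation*}
{\bf u}^+(x;k) = a^+(k){\bf u}^-(x;-k) + b^+(k){\bf u}^-(x;k), \quad {\bf u}^-(x;k) = a^-(k){\bf u}^+(x;-k) + b^-(k){\bf u}^+(x;k).
\end{equation*}
Computing $W({\bf u}^+(\cdot;k),{\bf u}^-(\cdot;k))$ in two ways---once by substituting the expansion of ${\bf u}^+$, once by substituting that of ${\bf u}^-$---gives $-2k\,a^+(k) = -2k\,a^-(k)$; the same trick applied to $W({\bf u}^+(\cdot;k),{\bf u}^-(\cdot;-k))$ yields $2k\,b^+(k) = -2k\,b^-(-k)$. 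This produces (\ref{scat-rel-1}).

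Next, inspection of (\ref{spectral-Dirac}) shows that the involution $(u(x), v(x)) \mapsto (\overline{v(x)}, \overline{u(x)})$ maps any solution at real $\lambda$ to another solution at the same $\lambda$. Applying it to ${\bf u}^\pm(\cdot;k)$ and matching boundary data via $\alpha_\pm(-k) = \alpha_\mp(k)$ and $\alpha_+(k)\alpha_-(k) = 1$, uniqueness of Jost functions (Proposition \ref{proposition-Jost-functions}) forces the symmetry identity
\begin{equation*}
(\overline{v^\pm(x;k)},\, \overline{u^\pm(x;k)})^T = \alpha_\pm(k)\,{\bf u}^\pm(x;-k).
\end{equation*}
Conjugating the first ${\bf u}$-form scattering relation above, applying this identity to both sides, and dividing out $\alpha_+(k)$ expresses ${\bf u}^+(x;-k)$ as a combination of ${\bf u}^-(x;k)$ and ${\bf u}^-(x;-k)$ with coefficients $\bar{a}^+(k)$ and $\frac{\alpha_-(k)}{\alpha_+(k)}\bar{b}^+(k)$, respectively. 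Comparing with the scattering relation at $-k$ and using linear independence of ${\bf u}^-(x;\pm k)$ for $k\ne 0$ (nonvanishing Wronskian) yields (\ref{scat-rel-2}).

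Finally, I would substitute the second ${\bf u}$-form scattering relation into the first and equate the coefficient of ${\bf u}^+(x;k)$ on both sides, obtaining the algebraic identity
\begin{equation*}
a^+(k)\,a^-(-k) + b^+(k)\,b^-(k) = 1.
\end{equation*}
Inserting $a^- = a^+$ and $b^-(k) = -b^+(-k)$ from (\ref{scat-rel-1}), followed by the conjugation relations from (\ref{scat-rel-2}), converts the left-hand side into $|a^+(k)|^2 - \frac{\alpha_-(k)}{\alpha_+(k)}|b^+(k)|^2$, which is (\ref{scat-rel-3}). The main technical point is the derivation of the conjugation symmetry in the second step; once that is in hand, the remaining arguments are purely bookkeeping with a $2\times 2$ system of scattering coefficients.
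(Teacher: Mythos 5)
Your argument is correct, and its skeleton coincides with the paper's in two of three places: both proofs obtain the algebraic constraint $a^+(k)a^-(-k)+b^+(k)b^-(k)=1$ by inverting/substituting the scattering relations (\ref{scattering-relation}), and both use the constancy of the Wronskian to identify $W({\bf u}^+,{\bf u}^-)=-2k\,a^+(k)=-2k\,a^-(k)$ as in (\ref{Wronskian}). Minor variation: for the second half of (\ref{scat-rel-1}) you compute $W({\bf u}^+(\cdot;k),{\bf u}^-(\cdot;-k))$ two ways, whereas the paper reads $b^+(k)=-b^-(-k)$ off the second equation of (\ref{relation-scattering-data}); these are interchangeable.

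The genuine difference is in the derivation of (\ref{scat-rel-2})--(\ref{scat-rel-3}). The paper observes that $|u_k|^2-|v_k|^2$ and $\bar u_{-k}u_k-\bar v_{-k}v_k$ are $x$-independent for solutions of (\ref{spectral-Dirac}) at real $\lambda$, evaluates them at $x\to\pm\infty$ to get (\ref{relation-1})--(\ref{relation-2}), and combines with (\ref{scat-rel-4}). You instead exhibit the antilinear involution $J:(u,v)\mapsto(\bar v,\bar u)$, show via uniqueness of the Jost solutions and $\alpha_+(k)\alpha_-(k)=1$ that $J{\bf u}^\pm(\cdot;k)=\alpha_\pm(k)\,{\bf u}^\pm(\cdot;-k)$, and then read off (\ref{scat-rel-2}) by conjugating the scattering relation and comparing coefficients in the basis $\{{\bf u}^-(\cdot;k),{\bf u}^-(\cdot;-k)\}$. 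The two mechanisms carry the same information --- the paper's conserved quantities are precisely the Wronskians of a solution against the $J$-image of a solution --- but yours makes the underlying symmetry explicit and turns the computation into pure linear algebra on the transition coefficients, which is arguably cleaner. Two small points to keep in mind when writing it up: the coefficient comparisons require linear independence of ${\bf u}^-(\cdot;k)$ and ${\bf u}^-(\cdot;-k)$, which holds only for $k\neq 0$ (Wronskian $2k$), so the identities at $k=0$ are obtained by continuity; and the identity $J{\bf u}^\pm(\cdot;k)=\alpha_\pm(k){\bf u}^\pm(\cdot;-k)$ rests on uniqueness of the solution of the Volterra equations (\ref{integral-equations-Volterra}) with the prescribed asymptotics (\ref{boundary-values-u}), which Proposition \ref{proposition-Jost-functions} supplies.
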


\begin{proof}
Inverting the scattering relation (\ref{scattering-relation}), we obtain the constraint
on the scattering coefficients for all $k \in \R$
\begin{eqnarray}
\left\{ \begin{array}{l} a^+(k) a^-(-k) + b^+(k) b^-(k) = 1, \\
a^+(k) b^-(-k) + b^+(k) a^-(k) = 0. \end{array} \right.
\label{relation-scattering-data}
\end{eqnarray}

Let $W({\bf u}_1,{\bf u}_2)$ denote the Wronskian determinant
of any two solutions ${\bf u}_1$ and ${\bf u}_2$ of the Dirac system (\ref{spectral-Dirac}).
It is clear that $W({\bf u}_1,{\bf u}_2)$ is constant in $x \in \R$. Therefore,
$W({\bf u}_1,{\bf u}_2)$ can be computed in the limits $x \to \pm \infty$. Using boundary values (\ref{boundary-values-u}) and scattering relations (\ref{scattering-relation}), we obtain
\begin{eqnarray}
\label{Wronskian}
W({\bf u}^+,{\bf u}^-) = u^+_1(x;k) u^-_2(x;k) - u^+_2(x;k) u^-_1(x;k) = -2k a^+(k) = -2k a^-(k).
\end{eqnarray}
This result together with the second equation of system (\ref{relation-scattering-data}) gives relations (\ref{scat-rel-1}). The first equation of system (\ref{relation-scattering-data}) implies
now for all $k \in \R$ that
\begin{equation}
\label{scat-rel-4}
a^+(k) a^+(-k) - b^+(k) b^+(-k) = 1.
\end{equation}

Let $(u_k,v_k)$ denote any solution of the Dirac system (\ref{spectral-Dirac}) for
$\lambda = -\sqrt{1 + k^2}$. It is checked directly that
$$
|u_k|^2 - |v_k|^2 \quad \mbox{\rm and} \quad \bar{u}_{-k} u_k - \bar{v}_{-k} v_k
$$
are constant in $x$. Using boundary values (\ref{boundary-values-u}) and
scattering relations (\ref{scattering-relation}) again, we obtain
\begin{eqnarray}
\label{relation-1}
(\sqrt{1 + k^2} +k) (1 - |a^+(k)|^2) + (\sqrt{1 + k^2} -k) |b^+(k)|^2 & = & 0,\\
\label{relation-2}
(\sqrt{1 + k^2} +k) \overline{b^-(k)} + (\sqrt{1 + k^2} -k) b^+(k) & = & 0.
\end{eqnarray}
These identities together with equation (\ref{scat-rel-4}) give relations (\ref{scat-rel-2})
and (\ref{scat-rel-3}).
\end{proof}

\begin{remark}
Identity (\ref{scat-rel-3}) shows that $|a^+(k)| \geq 1$ for all $k \in \R$. This excludes embedded
resonant states with $a^+(k) = 0$. There is still a possibility of end-point resonances at
$k = 0$, since $a^+(k)$ is generally singular as $k \to 0$. We say that the end points $\pm 1$ are resonances
if there exist a solution ${\bf u} \in L^{\infty}(\R)$ of the spectral problem (\ref{spectral-Dirac})
for $\lambda = \pm 1$. If this is the case, then $\gamma^+ = 0$ and $\lim_{k \to 0} a^+(k)$ exists.
\label{remark-resonance}
\end{remark}

We shall now define the Jost functions for $\lambda \in [1,\infty)$ at the other branch
of $\sigma_c(\ch)$. Similarly to the analysis for $\lambda \in (-\infty,-1]$,
we can parameterize $[1,\infty)$ by $\lambda = \sqrt{1 + k^2}$ for $k \in \R$
and consider solutions of system (\ref{spectral-Dirac}) according to
the boundary conditions
\begin{eqnarray*}
{\bf v}^{\pm}(x;k) \to \left[ \begin{array}{c} -\alpha_{\pm} \\ 1
\end{array} \right] e^{\pm i k x} \quad \mbox{\rm
as} \quad x \to \pm \infty.
\end{eqnarray*}
Using a similar Green's function formulation, Proposition \ref{proposition-Jost-functions}
can be extended to functions ${\bf v}^{\pm}(x;k)$. In what follows, we will not treat
functions ${\bf v}^{\pm}(x;k)$ for $\lambda \in [1,\infty)$ but will only be working with
functions ${\bf u}^{\pm}(x;k)$ for $\lambda \in (-\infty,-1]$. This approach does not limit
any generality. Moreover, we note the particularly remarkable case.

\begin{remark}
If $\beta(x) \equiv 0$, the Jost functions are related by
$$
{\bf v}^{\pm}(x;k) = \left[ \begin{array}{cc} 0 & -1 \\ 1 & 0 \end{array} \right] {\bf u}^{\pm}(x;k),
$$
thanks to the symmetry of the Dirac system (\ref{spectral-Dirac}).
\end{remark}

Let $R_{\ch}(\lambda) = (\ch - \lambda I)^{-1}$ be the resolvent operator, defined as a bounded
operator from $L^2(\R)$ to $L^2(\R)$ for any $\lambda \notin \sigma(\ch)$. Using
the Jost functions and the standard limiting absorption principle,
the resolvent operator is extended to the continuous spectrum
as a uniformly bounded operator from $L^2_{\alpha}(\R)$ to $L^2_{-\alpha}(\R)$ for any $\alpha > \frac{1}{2}$.
Let us denote the limiting operators by
$$
R_{\ch}^{\pm}(\la) := \lim_{\epsilon \downarrow 0} R_{\ch}(\la \pm i \epsilon), \quad \lambda \in \sigma_c(\ch),
$$
depending on whether $\lambda \to \sigma_c(\ch)$ from the upper or lower halves of the complex plane of $\lambda$.

The following proposition allows us to express
$R_{\ch}^{\pm}(\lambda)$  for $\lambda \in \sigma_c(\ch)$ in terms
of the Jost functions. According to the previous remarks, it is
sufficient to consider $\lambda \in (-\infty,-1]$. The arguments
for $\lambda \in [1,\infty)$ can be developed similarly.

\begin{proposition}
\label{proposition-resolvent} For any $\lambda \in (-\infty,-1]$
and any fixed $\alpha
> \frac{1}{2}$, operators $R_{\ch}^{\pm}(\lambda) :
L^2_{\alpha}(\R) \mapsto L^2_{-\alpha}(\R)$ can be represented by
the integral kernel in the form
\begin{equation}
[R^{\pm}_{\ch}(\lambda)](x,y) = \frac{\pm 1}{2 i k a^+(\pm k)} \left\{ \begin{array}{l}
{\bf u}^+(x;\pm k) [\sigma_1 {\bf u}^-(y;\pm k)]^T, \quad x > y, \\
{\bf u}^-(x;\pm k) [\sigma_1 {\bf u}^+(y;\pm k)]^T, \quad x < y,
\end{array} \right.  \label{resolvent-limiting}
\end{equation}
where $k \leq 0$ and $\lambda = -\sqrt{1 + k^2}$.
\end{proposition}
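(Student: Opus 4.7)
The plan is to construct the resolvent kernel explicitly by the classical variation-of-parameters method, using the Jost functions of Proposition \ref{proposition-Jost-functions} (analytically continued in $k$) as the two fundamental solutions of the homogeneous equation $(\ch-\lambda I){\bf u}={\bf 0}$. For $\lambda=-\sqrt{1+k^2}$ off the real axis I would first fix the branch of $k$ so that ${\bf u}^+(\cdot;k)$ decays exponentially at $+\infty$ and ${\bf u}^-(\cdot;k)$ decays exponentially at $-\infty$; these two solutions are linearly independent because $W({\bf u}^+,{\bf u}^-)=-2ka^+(k)\neq 0$ by \eqref{Wronskian}.

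I would then seek the kernel in the piecewise rank-one form
\begin{equation*}
G(x,y;\lambda) \;=\; \begin{cases} {\bf u}^+(x;k)\,[{\bf c}^+(y)]^T, & x>y,\\ {\bf u}^-(x;k)\,[{\bf c}^-(y)]^T, & x<y, \end{cases}
\end{equation*}
so that each column automatically lies in $\ker(\ch-\lambda I)$ off the diagonal and has the required decay at $\pm\infty$. Since the derivative part of $\ch$ equals $-i\sigma_3\partial_x$, the identity $(\ch-\lambda I)_x G(\cdot,y;\lambda)=\delta(\cdot-y)\,Id$ forces the matching condition $G(y^+,y)-G(y^-,y)=i\sigma_3$. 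A direct application of Cramer's rule to this $2\times 2$ linear system (whose determinant is $\pm W$) gives ${\bf c}^\pm(y)=\frac{i}{W}\sigma_1{\bf u}^\mp(y;k)$, and inserting $W=-2ka^+(k)$ yields exactly the formula \eqref{resolvent-limiting}; the factor $\sigma_1$ appears because the index swap in the off-diagonal entries of $i\sigma_3$ produces the swapped vector $[\sigma_1{\bf u}^\mp]^T$.

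It remains to pass to the limit ${\rm Im}\,\lambda\to 0^\pm$. For $\lambda$ slightly above (resp.\ below) the real axis the chosen branch of $k$ lies in one open half-plane (resp.\ the other), and in the limit $k$ lands on the positive or negative real axis; under the convention $k\leq 0$ of the statement, the two directions of approach manifest precisely as the sign choice $\pm k$. Bounds for $R^\pm_{\ch}(\lambda)$ as operators $L^2_\alpha(\R)\to L^2_{-\alpha}(\R)$ for $\alpha>\tfrac12$ then follow by a standard limiting-absorption argument, using the pointwise bound $|a^+(k)|\geq 1$ (Remark \ref{remark-resonance}) together with the uniform boundedness of ${\bf u}^\pm(\cdot;k)$ for $k\neq 0$ (Proposition \ref{proposition-Jost-functions}).

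The delicate point is the behavior near the endpoint $k=0$, where by \eqref{limit-a-b} $a^+(k)$ blows up like $\gamma^+/(2ik)$: under the non-resonance hypothesis of Assumption \ref{assumption-one} one has $\gamma^+\neq 0$, so the product $ka^+(k)$ stays bounded away from zero and the apparent singularity $1/(ka^+(k))$ in the prefactor cancels; combined with the at-most-linear growth of ${\bf u}^\pm(x;0)$ in $x$ from Proposition \ref{proposition-Jost-functions}, this keeps the kernel integrable against the weighted norms and legitimizes the limiting absorption principle uniformly down to the spectral endpoints.
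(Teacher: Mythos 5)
Your construction is correct and is essentially the paper's own proof: both build the kernel from the two Jost solutions ${\bf u}^{\pm}(\cdot;k)$ decaying in the appropriate directions, impose the jump condition across $x=y$ coming from the $-i\sigma_3\partial_x$ part of $\ch$, and solve the resulting $2\times 2$ system via the Wronskian $W({\bf u}^+,{\bf u}^-)=-2ka^+(k)$, with the sign of $k$ encoding the choice of $R^{\pm}_{\ch}$. The only difference is presentational (you use a rank-one ansatz and Cramer's rule where the paper matches column by column), and your closing remarks on the endpoint $k=0$ reproduce material the paper defers to its Proposition~\ref{prop:a1}.
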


\begin{proof}
Let us consider the solutions of the linear system for a fixed $y \in \R$ and $\lambda = -\sqrt{1 + k^2}$,
\begin{eqnarray}
(\ch - \lambda I) [R^+_{\ch}(\lambda)](x,y) = \delta(x-y) Id,
\label{spectral-Dirac-delta}
\end{eqnarray}
which satisfy the asymptotic behavior,
\begin{equation}
\label{asymptotic-behavior} [R^+_{\ch}(\lambda)](x,y) \sim e^{i k
|x-y|}, \quad \mbox{\rm as} \quad |x-y| \to \infty.
\end{equation}
The function $[R^+_{\ch}(\lambda)](x,y)$ decays exponentially
as $|x-y| \to \infty$ if $k$ is extended off the
real axis with ${\rm Im}(k) > 0$. Since ${\rm Re}(\lambda) {\rm
Im}(\lambda) = {\rm Re}(k) {\rm Im}(k)$ and ${\rm Re}(\lambda)
\leq -1$, we understand that the behavior (\ref{asymptotic-behavior}) recovers
the limiting resolvent operator $R^+_{\ch}(\lambda)$ defined for ${\rm Im}(\lambda) \geq 0$ if ${\rm
Re}(k) \leq 0$.

For the first column vector of the linear system (\ref{spectral-Dirac-delta}), denoted by $(u,v)$, we obtain
\begin{equation}
\label{column-representation} \left[ \begin{array}{c} u \\ v
\end{array} \right] = \left\{ \begin{array}{l}
c(y,k) {\bf u}^+(x;k), \quad x > y, \\
d(y,k) {\bf u}^-(x;k), \quad x < y, \end{array} \right.
\end{equation}
where the behavior (\ref{asymptotic-behavior}) is
satisfied thanks to the boundary conditions
(\ref{boundary-values-u}). Parameters $(c,d)$ are to be determined.

Matching conditions across the point $x = y$ sets up the linear system for $c$ and $d$ with the unique solution,
$$
c(y,k) = \frac{i {\bf u}^-(y;k)_2}{W({\bf u}^+,{\bf u}^-)}, \quad
d(y,k) = \frac{i {\bf u}^+(y;k)_2}{W({\bf u}^+,{\bf u}^-)},
$$
where $W({\bf u}^+,{\bf u}^-) = -2 k a^+(k)$ by identity (\ref{Wronskian}).

Similarly for the second column vector of the linear system
(\ref{spectral-Dirac-delta}), we obtain the same expression
(\ref{column-representation}) with a different solution of the
linear system for $(c,d)$,
$$
c(y,k) = \frac{i {\bf u}^-(y;k)_1}{W({\bf u}^+,{\bf u}^-)}, \quad
d(y,k) = \frac{i {\bf u}^+(y;k)_1}{W({\bf u}^+,{\bf u}^-)}.
$$
Using the Pauli matrix $\sigma_1$, we arrive to the expression
(\ref{resolvent-limiting}) for $R^+_{\ch}(\lambda)$. The
expression for $R^-_{\ch}(\lambda)$ is found by the replacement of
$k$ by $-k$. The exponential decay as $|x-y| \to \infty$ occurs now
for ${\rm Im}(k) < 0$. The limiting resolvent operator $R^-_{\ch}(\lambda)$ is defined for
${\rm Im}(\lambda) \leq 0$ if ${\rm Re}(k) \leq 0$.
\end{proof}

The following proposition describes $\lambda$-uniform bounds on
the limiting resolvent operators $R_{\ch}^{\pm}(\lambda)$ in
weighted spaces. In order to exclude problems at the end points
$\lambda = \pm 1$, we assume that no end-point resonances occur
at $k = 0$ (Assumption \ref{assumption-one}). Thanks to Remark \ref{remark-resonance},
it is equivalent to assume that $\gamma^+ \neq 0$.

\begin{proposition}
\label{prop:a1} Let $\gamma^+ \neq 0$ in (\ref{gamma-resonance}). For any $\alpha > \frac{3}{2}$,
there exists constant $C_{\alpha} > 0$ such that
\begin{eqnarray}
\label{b:30} \sup_{|\la| \geq 1}
\|R_{\ch}^{\pm}(\la)\|_{L^2_{\alpha} \to L^2_{-\al}} \leq
C_{\alpha}.
\end{eqnarray}
In addition, for any $\alpha \geq 1$, there exists constant
$C_{\alpha} > 0$ such that
\begin{eqnarray}
\label{b:20} \sup_{|\la| \geq 1} \|  R_{\ch}^{\pm}(\la)
\|_{L^1_{\alpha}  \to L^\infty_{-\alpha}} \leq C_{\alpha}.
\end{eqnarray}
\end{proposition}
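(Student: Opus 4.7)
My plan is to work directly from the kernel representation of Proposition \ref{proposition-resolvent}, proving the single uniform pointwise estimate
\begin{equation*}
|[R^\pm_{\ch}(\lambda)](x,y)| \leq C\bigl(\langle x\rangle + \langle y\rangle\bigr), \quad \lambda \in \sigma_c(\ch),
\end{equation*}
from which both \eqref{b:20} and \eqref{b:30} drop out by elementary bookkeeping on the weights.

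I would first bound the scalar prefactor $1/(k a^+(\pm k))$ appearing in \eqref{resolvent-limiting}. For small $|k|$, the non-resonance hypothesis $\gamma^+ \neq 0$ together with the asymptotics \eqref{limit-a-b} gives $k a^+(k) \to \gamma^+/(2i)$, so the prefactor is bounded near $k=0$. For large $|k|$, iterating the Volterra equation \eqref{integral-equations-Volterra} against the exponentially decaying $V$ gives $a^+(k) = 1 + O(1/|k|)$, so the prefactor is $O(1/|k|)$. Next, I would run a uniform Volterra analysis of \eqref{integral-equations-Volterra} with Green function \eqref{matrix-potential}: for $|k|\leq 1$, the elementary bound $|\sin(k\zeta)/k|\leq|\zeta|$ yields $|G^\pm(\zeta;k)|\leq C(1+|\zeta|)$ uniformly in $k$, and a contraction mapping in the weighted space $\{\phi : \langle x\rangle^{-1}\phi \in L^\infty\}$, driven by the fast decay of $V$, then produces $|\mathbf{u}^\pm(x;k)|\leq C\langle x\rangle$ for all $|k|\leq 1$, consistent with the at-most-linear growth at $k=0$ from Proposition \ref{proposition-Jost-functions}. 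For $|k|\geq 1$, the same iteration gives $|u_1^\pm(x;k)|\leq C$ and $|u_2^\pm(x;k)|\leq C|k|$, matching the high-energy asymptotics there.

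To assemble the pointwise estimate above, I would inspect the piecewise kernel in \eqref{resolvent-limiting}. On $\{x>y\}$ the kernel equals $\mathbf{u}^+(x;\pm k)[\sigma_1\mathbf{u}^-(y;\pm k)]^T$; the growing side of $\mathbf{u}^+$ is at $-\infty$ and that of $\mathbf{u}^-$ is at $+\infty$, and these two configurations cannot be probed simultaneously when $x>y$, so at most one of the two factors is actually growing and the tensor is bounded by $C(\langle x\rangle+\langle y\rangle)$ rather than the naive $C\langle x\rangle\langle y\rangle$. The region $\{x<y\}$ is handled symmetrically. At high $|k|$, the same inspection shows that the worst entry of the tensor carries only a single factor of $|k|$, which is absorbed by the $O(1/|k|)$ prefactor. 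This delivers the claimed estimate uniformly in $k \in \R$.

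Granted this pointwise bound, \eqref{b:20} is immediate from $\sup_{x,y}(\langle x\rangle+\langle y\rangle)\langle x\rangle^{-\alpha}\langle y\rangle^{-\alpha} < \infty$ for $\alpha\geq 1$. For \eqref{b:30} I would dominate the operator norm by the Hilbert--Schmidt norm of the weighted kernel,
\begin{equation*}
\|R_{\ch}^\pm(\lambda)\|_{L^2_\alpha \to L^2_{-\alpha}}^2 \leq \iint_{\R^2} |[R_{\ch}^\pm(\lambda)](x,y)|^2 \langle x\rangle^{-2\alpha}\langle y\rangle^{-2\alpha}\,dx\,dy,
\end{equation*}
and use the pointwise bound to reduce the right-hand integral to $\iint(\langle x\rangle^2+\langle y\rangle^2)\langle x\rangle^{-2\alpha}\langle y\rangle^{-2\alpha}\,dx\,dy$, which converges precisely when $\alpha>3/2$ (the condition $\int\langle x\rangle^{2-2\alpha}\,dx<\infty$), matching the statement. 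The main obstacle is the uniform low-$|k|$ Jost bound: the Green function carries a $1/(2ik)$ factor that looks singular at $k=0$, and taming it requires rewriting the oscillatory pieces $(1-e^{\mp 2ik\zeta})/(2ik)$ via the mean value theorem to expose their uniform linear-in-$\zeta$ size, after which the fast decay of $V$ makes the Volterra iteration contract in the weighted norm. This step is a 1D Dirac analogue of the low-energy limiting absorption analysis of Weder \cite{Weder, Weder2} and Goldberg--Schlag \cite{GS}, and it is precisely where the non-resonance hypothesis $\gamma^+ \neq 0$ is indispensable.
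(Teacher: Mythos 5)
Your argument is correct and follows essentially the same route as the paper: the kernel formula of Proposition \ref{proposition-resolvent}, the Jost-function bounds of Proposition \ref{proposition-Jost-functions} (boundedness for $k\neq 0$, at most linear growth in $x$ at $k=0$, a single factor of $k$ at high energy), the non-resonance condition $\gamma^+\neq 0$ to keep $k a^+(k)$ away from zero near $k=0$, and then a Hilbert--Schmidt estimate for \eqref{b:30} and a sup-of-the-weighted-kernel estimate for \eqref{b:20}. One tiny correction: the large-$|k|$ expansion $a^+(k)=1+O(1/|k|)$ is not quite accurate (the correction term is only $O(1)$, since $m_2^{\pm}\sim\alpha_{\pm}\sim k$ enters the integral formula for $a^+$), but the bound you actually use, $|1/(k a^+(k))|\leq 1/|k|$, follows directly from $|a^+(k)|\geq 1$ as in Remark \ref{remark-resonance}.
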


\begin{proof}
We recall that $|a^+(k)| \geq 1$ (Remark \ref{remark-resonance}).
Thanks to the asymptotic expansion (\ref{limit-a-b}), if $\gamma^+
\neq 0$, then $k a^+(k) \neq 0$ for any $k \in \R$. Using this
result and Proposition \ref{proposition-resolvent}, we construct
$$
\hat{R}^{\pm}_{\ch,\al}(x,y) \equiv
\frac{[R_{\ch}^{\pm}(\lambda)](x,y)}{(1+x^2)^{\alpha/2}
(1+y^2)^{\alpha/2}}.
$$

By Proposition \ref{proposition-Jost-functions}, ${\bf
u}^{\pm}(\cdot;k) \in L^{\infty}(\R)$ for every $k \neq 0$ and
${\bf u}^{\pm}(x;0)$ grow at most linearly in $x$ as $x \to \mp
\infty$. Therefore, $\hat{R}^{\pm}_{\ch,\al}(x,y)$ is a kernel of
a Hilbert--Schmidt operator for any fixed $\lambda \in
(-\infty,-1]$ and $\alpha > \frac{3}{2}$.

It remains to show that $\hat{R}^{\pm}_{\ch,\al}(x,y)$ is
uniformly bounded in the limit $\lambda \to -\infty$ ($k \to
-\infty$) for any $x,y \in \R$. Note that
$$
{\bf u}^+(x;k) [\sigma_1 {\bf u}^-(y;k)]^T = \left[ \begin{array}{ccc} {\bf u}^+(x;k)_1 {\bf u}^-(y;k)_2 &
{\bf u}^+(x;k)_1 {\bf u}^-(y;k)_1 \\
{\bf u}^+(x;k)_2 {\bf u}^-(y;k)_2 & {\bf u}^+(x;k)_2 {\bf u}^-(y;k)_1
 \end{array} \right]
$$
and a similar formula for ${\bf u}^-(x;k) [\sigma_1 {\bf
u}^+(y;k)]^T$. By Proposition \ref{proposition-Jost-functions},
this matrix grows linearly in $k$ as $k \to -\infty$ for any $x, y
\in \R$. On the other hand, $k a_+(k)$ grows at least linearly as
$|k| \to \infty$, which implies the $\lambda$-uniform bound \eqref{b:30}.

To prove bound \eqref{b:20}, we can see from the linear growth of
${\bf u}^{\pm}(x;0)$ as $x \to \mp \infty$ that
$\hat{R}^{\pm}_{\ch,\al}(x,y)$ is a kernel of a bounded operator
from $L^1(\R)$ to $L^{\infty}(\R)$ for any $\alpha \geq 1$. The
mapping is also bounded as $k \to -\infty$.
\end{proof}

Let $P_{a.c.}(\ch) : L^2(\R) \mapsto L^2(\R)$ be the orthogonal
projection operator to the continuous spectrum of $\ch$. We recall
the Cauchy formula,
\begin{equation}
\label{Cauchy-formula} e^{-i t \ch} P_{a.c.}(\ch) f = \f{1}{2\pi
i} \left( \int_{-\infty}^{-1} + \int_{1}^{\infty} \right) e^{-i t
\la} \left[ R^+_{\ch}(\la) - R^-_{\ch}(\la) \right] f d\la,
\end{equation}
where the integral is understood in the norm of the mapping from
$L^2_{\al}(\R)$ to $L^2_{-\al}(\R)$ for $\alpha > \frac{3}{2}$.
The interval $(-\infty,-1]$ for $\lambda$ can be parameterized by
$(-\infty,0]$ for $k$ using the substitution
$$
\lambda = -\sqrt{1 + k^2} \quad \Rightarrow \quad d \lambda =
-\frac{k dk}{\sqrt{1 + k^2}}.
$$
These representations are used for the derivation of linear
dispersive decay estimates for the semi-group $e^{-it \ch} P_{\rm
a.c.}(\ch)$.

\section{Linear estimates for the operator $\ch$}

We shall need two preliminary results, which will be useful
in our arguments for this section.

\subsection {Preliminaries}

The first result that we need  is the Christ-Kiselev lemma.
We actually state a  version due to Smith \& Sogge \cite{SS}.

\begin{lemma}
\label{christ_kiselev}
Let $X,Y$ be Banach spaces and
$\ck:L^p(\R;X)\to L^q(\R,Y)$ be a linear operator such
that $\ck f(t)=\int_{-\infty}^\infty K(t,s) f(s) ds$. Then, the operator
\begin{equation}\label{definition-tilde-K}
\tilde{\ck} f(t)=\int_0^t K(t,s) f(s) ds,
\end{equation}
is bounded from $L^p(\R;X)$ to $L^q(\R,Y)$, provided $p<q$. Moreover, there is
$C_{p,q} > 0$ such that
$$
\|\tilde{\ck}\|_{L^p(\R;X)\to L^q(\R,Y)} \leq C_{p,q}
\|\ck\|_{L^p(\R;X)\to L^q(\R,Y)}.
$$
\end{lemma}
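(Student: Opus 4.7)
My plan is the classical Christ--Kiselev dyadic decomposition, adapted to the $L^p$-distribution of $f$. Normalize $\|f\|_{L^p(\R;X)}=1$ and introduce the cumulative mass function $F(t):=\int_0^t\|f(s)\|_X^p\,ds$, a non-decreasing function on $\R_+$ with range $[0,1]$. For each dyadic subinterval $I\subset[0,1]$ of length $2^{-n}$ set $J_I:=F^{-1}(I)$; this produces nested partitions $\{J_{n,k}\}_{k=1}^{2^n}$ of $\R_+$ whose atoms carry equal $L^p$-mass of $f$, namely $\|f\chi_{J_{n,k}}\|_{L^p(\R;X)}^p=2^{-n}$.

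The combinatorial heart of the argument is the pointwise identity
\begin{equation*}
\chi_{\{0<s<t\}}(s,t)=\sum_{n=0}^{\infty}\sum_{k=1}^{2^n}\chi_{J_{n+1,2k-1}}(s)\,\chi_{J_{n+1,2k}}(t),
\end{equation*}
where $J_{n+1,2k-1}$ and $J_{n+1,2k}$ are the left and right halves of $J_{n,k}$. It holds because every pair $s<t$ separates at a unique smallest dyadic scale into distinct halves of some $J_{n,k}$. Substituting this into $\tilde{\ck}f(t)=\int_0^t K(t,s)f(s)\,ds$ expresses $\tilde{\ck}f$ as a series indexed by $n$ whose $n$-th block is $B_n(t):=\sum_k\chi_{J_{n+1,2k}}(t)\cdot\ck(f\chi_{J_{n+1,2k-1}})(t)$. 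For fixed $n$, the sets $\{J_{n+1,2k}\}_k$ are pairwise disjoint in $t$, so $\|B_n\|_{L^q(\R;Y)}^q$ equals an $\ell^q$-sum of $2^n$ pieces, each bounded via the hypothesis by $\|\ck\|\cdot\|f\chi_{J_{n+1,2k-1}}\|_{L^p(\R;X)}=\|\ck\|\cdot 2^{-(n+1)/p}$. Hence $\|B_n\|_{L^q(\R;Y)}\lesssim\|\ck\|\cdot 2^{n(1/q-1/p)}$, and the triangle inequality (Minkowski) in $n$ produces a geometric series that converges precisely when $p<q$, yielding the claimed constant $C_{p,q}$. The symmetric region $t<0$ arising from $\int_0^t=-\int_t^0$ is handled identically.

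The hard part will be the combinatorial setup: rigorously defining the nested partition and verifying the pointwise decomposition of the triangular indicator. This is cleanest when $F$ is strictly increasing, so that the $J_I$ are honest intervals with disjoint interiors; the general case follows by a standard approximation argument (e.g.\ replacing $\|f(s)\|_X^p$ by a strictly positive mollification and then passing to the limit). Once that step is in place, everything downstream is routine: disjointness of supports, the assumed boundedness of $\ck$, and summing a geometric series whose convergence forces $p<q$.
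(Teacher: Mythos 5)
Your proof is correct, but note that the paper itself gives no proof of this lemma at all: it is quoted as a known result, in the form due to Smith and Sogge \cite{SS}, and used as a black box. What you have written is the standard Christ--Kiselev argument, and it is the right one: the equal-mass dyadic partition of $\R_+$ pulled back through $F$, the Whitney-type decomposition of the triangle $\{0<s<t\}$ into products of sibling halves, disjointness of the supports in $t$ at each generation giving the $\ell^q$ gain $2^{n/q}$ against the per-piece cost $2^{-(n+1)/p}$, and the geometric series that converges exactly when $p<q$. Three small points deserve care in a full write-up. First, the combinatorial identity only holds after pulling back through $F$ and only up to the set where $F$ is locally constant (there $f$ vanishes a.e., so the discrepancy contributes nothing to $\tilde{\ck}f$) and up to the measure-zero set of dyadic boundary points; your proposed regularization handles this, but it should be said that the exceptional set is null for the double integral defining $\tilde{\ck}f$. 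Second, the case $q=\infty$ (which the paper actually uses, e.g.\ with the pair $L^2_t\to L^4_tL^\infty_x$ and in Corollary \ref{mizStr}) requires reading your ``$\ell^q$-sum of $2^n$ pieces'' as a supremum over $k$; the estimate $\|B_n\|_{L^\infty}\leq \|\ck\|\,2^{-(n+1)/p}$ still sums geometrically, so nothing breaks, but the sentence as written presumes $q<\infty$. Third, the identity $\tilde{\ck}f=\sum_n B_n$ and the term-by-term application of the kernel representation should be justified first on a dense class (say simple functions) and extended by the uniform bound. None of these is a gap in the idea; they are the standard technicalities of the Christ--Kiselev proof.
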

The second lemma is a technical statement, which is complementary
to Lemma \ref{christ_kiselev}, when the condition $p<q$ is
violated (most notably when $p=q$).
This is stated for the Schr\"odinger operator $-\p_x^2 + V(x)$ by Mizumachi
(Lemma 11 in \cite{Mizum}), but it applies equally well to an arbitrary
self-adjoint operator $\cl$.

\begin{lemma}
\label{mizumachi_1} Let $\cl$ be a self-adjoint operator and
$P_{a.c}(\cl)$ be a projection to the absolute continuous spectrum
of $\cl$. Let $g(t,x)=g_1(t)g_2(x)$ and define the function
\begin{eqnarray}
\label{function-U-new}
U(t,x)=\f{i}{\sqrt{2\pi}}
\int_{-\infty}^\infty e^{-i t \la}\check{g_1}(\la) \left(\left[
R^+_{\cl}(\la) + R^-_{\cl}(\la) \right]  g_2\right)(x) d\la.
\end{eqnarray}
Then, we have
$$
U(t,\cdot) = 2\int_0^t e^{-i(t-s)\cl} P_{a.c.}(\cl) g(s, \cdot)
ds+ \left( \int_{-\infty}^0 - \int_{0}^\infty \right)
e^{-i(t-s)\cl} P_{a.c.}(\cl) g(s, \cdot)ds.
$$
\end{lemma}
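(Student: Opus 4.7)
The plan is to express $R^{\pm}_{\cl}(\lambda) P_{a.c.}(\cl)$ as one-sided time integrals of the semi-group $e^{-is\cl}P_{a.c.}(\cl)$ and then invert the Fourier transform in the $\lambda$ variable. The starting point I would use is the standard identity, a consequence of the limiting absorption principle for a self-adjoint operator,
$$
R^{\pm}_{\cl}(\lambda) P_{a.c.}(\cl) g_2 = \mp i \int_{0}^{\pm \infty} e^{i \lambda s}\, e^{-is \cl} P_{a.c.}(\cl) g_2\, ds,
$$
whose sum can be written compactly as
$$
[R^{+}_{\cl}(\lambda) + R^{-}_{\cl}(\lambda)] P_{a.c.}(\cl) g_2 = -i \int_{\R} \mathrm{sgn}(s)\, e^{i\lambda s}\, e^{-is\cl} P_{a.c.}(\cl) g_2\, ds.
$$

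Substituting this into the definition \eqref{function-U-new} of $U$, exchanging the order of integration by Fubini, and using the Fourier-inversion identity $\int_{\R} e^{-i(t-s)\lambda}\, \check{g}_1(\lambda)\, d\lambda = \sqrt{2\pi}\, g_1(t-s)$, I would arrive at
$$
U(t,\cdot) = \int_{\R} \mathrm{sgn}(s)\, g_1(t-s)\, e^{-is\cl} P_{a.c.}(\cl) g_2\, ds.
$$
A change of variable $s \mapsto t-s$ then puts this in the convolution-type form
$$
U(t,\cdot) = \int_{\R} \mathrm{sgn}(t-s)\, e^{-i(t-s)\cl} P_{a.c.}(\cl) g(s,\cdot)\, ds,
$$
after which splitting $\R = (-\infty,0) \cup (0,t) \cup (t,\infty)$ for $t>0$ and using the elementary arithmetic identity $\int_{-\infty}^{0} + \int_{0}^{t} - \int_{t}^{\infty} = 2\int_{0}^{t} + \int_{-\infty}^{0} - \int_{0}^{\infty}$ produces the stated decomposition; the case $t<0$ is symmetric.

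The hard part will be the rigorous justification of the Laplace-transform representation of $R^{\pm}_{\cl}$ and of the Fubini exchange, because the one-sided $s$-integrals converge only conditionally on the real spectrum. I would handle this by first establishing the identity for $g_1, g_2$ in a dense subclass (say, Schwartz functions with $g_2$ lying in the range of $P_{a.c.}(\cl)$) and by inserting a damping factor $e^{-\epsilon|s|}$ that renders every integral absolutely convergent; the manipulations then become routine, and the limit $\epsilon \downarrow 0$ is controlled by uniform weighted-resolvent bounds of the type proved in Proposition \ref{prop:a1}. A standard density argument finally extends the identity to the class of $g$ used in the Strichartz applications that follow.
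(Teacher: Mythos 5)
Your overall strategy is the right one and is essentially the standard proof of this statement (the paper itself offers no proof, deferring to Mizumachi's Lemma 11): represent $R^{\pm}_{\cl}(\la)$ as one-sided Laplace transforms of the semigroup, substitute into \eqref{function-U-new}, invert the Fourier transform in $\la$, change variables, and split the resulting $\mathrm{sgn}(t-s)$ integral. Your plan for the rigorous justification (regularization by $e^{-\epsilon|s|}$, a dense subclass with $g_2\in \mathrm{Ran}\,P_{a.c.}(\cl)$, uniform weighted resolvent bounds, then density) is also the appropriate one; note that the hypothesis $g_2=P_{a.c.}(\cl)g_2$ is genuinely needed, since for $g_2$ with a point-spectrum component the two sides of the identity differ.

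The problem is the sign bookkeeping, and it is not merely typographical. With the paper's convention $R^{\pm}_{\cl}(\la)=\lim_{\epsilon\downarrow 0}(\cl-\la\mp i\epsilon)^{-1}$, the one-sided representations are
$$
R^{\pm}_{\cl}(\la)h = i\int_0^{\pm\infty} e^{i\la s}\,e^{-is\cl}h\,ds,
$$
both with coefficient $+i$ (on the spectral side, $\int_0^\infty e^{i(\la-\mu)s-\epsilon s}ds=-i(\mu-\la-i\epsilon)^{-1}$); your prefactor $\mp i$ is wrong for $R^{+}$. Moreover, your ``compact'' formula does not follow from your own two one-sided formulas: as written they sum to $-i\int_{\R}e^{i\la s}e^{-is\cl}h\,ds$ with no $\mathrm{sgn}(s)$, whereas the correct sum is $+i\int_{\R}\mathrm{sgn}(s)e^{i\la s}e^{-is\cl}h\,ds$ (the $\mathrm{sgn}$ comes from converting $\int_0^{-\infty}$ into $-\int_{-\infty}^0$). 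Carrying the correct signs through your otherwise correct Fubini, Fourier-inversion and change-of-variable steps yields
$$
U(t,\cdot) = -\int_{\R}\mathrm{sgn}(t-s)\,e^{-i(t-s)\cl}P_{a.c.}(\cl)\,g(s,\cdot)\,ds,
$$
the \emph{negative} of the stated right-hand side. So the lemma as transcribed appears to be off by a sign relative to the paper's own resolvent convention (this is harmless downstream, where only $\|U\|$ is estimated), and your write-up reproduces the stated formula only through two cancelling sign errors. You should fix the resolvent representation and derive whichever sign actually comes out, rather than target the printed formula.
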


We use the resolvent analysis of the Dirac operator $\ch$ to
derive some linear estimates, which are used in the proof of the
main theorem.

\subsection{Mizumachi estimates}

We refer to Mizumachi's work \cite{Mizum} in the context of the
one-dimensional NLS equation, which was used in our work
\cite{KPS} in the context of the discrete NLS equation. These
estimates are developed to control quadratic nonlinearities in the
time-evolution equation (\ref{time-evolution-vector}), which have
fast spatial decay. Thus, the challenge here is to achieve $L^2_t$
temporal decay, in the presence of the exponential spatial decay.

\begin{lemma}
\label{le:2} Fix $\al > \frac{3}{2}$. There is $C_{\al} > 0$ such
that
\begin{equation}
\label{a:10} \|\langle x\rangle^{-\al} e^{-i t \ch} P_{a.c.}(\ch)
f\|_{L^\infty_x L^2_t}\leq C_{\al} \|f\|_{L^2_x}
\end{equation}
and
\begin{equation}
\label{b:15} \left\| \langle x\rangle^{-\al} \int_0^t
e^{-i(t-\tau)\ch} P_{a.c.}(\ch)F(\tau, \cdot)d\tau
\right\|_{L^\infty_x L^2_t}\leq C_{\al} \|\langle x\rangle^{\al}
F\|_{L^1_x L^2_t}.
\end{equation}
\end{lemma}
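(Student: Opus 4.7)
The plan is to prove \eqref{a:10} by a $TT^*$ argument whose only ingredient is the uniform resolvent bound \eqref{b:20}, and then to derive \eqref{b:15} from \eqref{a:10} together with Mizumachi's identity (Lemma \ref{mizumachi_1}).

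First I introduce $T\colon L^2_x \to L^\infty_x L^2_t$ defined by $Tf(t,x) := \langle x\rangle^{-\alpha} e^{-it\ch} P_{a.c.}(\ch)f(x)$. Since the predual of the mixed-norm space $L^\infty_x L^2_t$ is $L^1_x L^2_t$, the composition $TT^*$ maps $L^1_x L^2_t \to L^\infty_x L^2_t$ and is given explicitly by
\begin{equation*}
(TT^*g)(t,x) = \langle x\rangle^{-\alpha} \iint [e^{-i(t-s)\ch}P_{a.c.}(\ch)](x,y)\,\langle y\rangle^{-\alpha} g(s,y)\,ds\,dy.
\end{equation*}
For fixed $(x,y)$, the time-convolution kernel $K(t;x,y) := \langle x\rangle^{-\alpha}[e^{-it\ch}P_{a.c.}(\ch)](x,y)\langle y\rangle^{-\alpha}$ has Fourier transform in $t$ expressible via Stone's formula (equivalently \eqref{Cauchy-formula}) as
\begin{equation*}
\hat K(\tau;x,y) = \tfrac{1}{i}\,\chi_{\sigma_c(\ch)}(\tau)\,\langle x\rangle^{-\alpha}\bigl[R^+_{\ch}(\tau)-R^-_{\ch}(\tau)\bigr](x,y)\langle y\rangle^{-\alpha},
\end{equation*}
and the kernel reformulation of \eqref{b:20} amounts exactly to $\|\hat K\|_{L^\infty_{\tau,x,y}} \leq C_\alpha$. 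Combining Minkowski's integral inequality in $y$ with Plancherel in $t$ I then obtain
\begin{equation*}
\|(TT^*g)(\cdot,x)\|_{L^2_t} \leq \int \|\hat K(\cdot;x,y)\|_{L^\infty_\tau}\|g(\cdot,y)\|_{L^2_t}\,dy \leq C_\alpha \|g\|_{L^1_x L^2_t},
\end{equation*}
and the identity $\|T\|^2 = \|TT^*\|$ delivers \eqref{a:10}.

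For \eqref{b:15}, the substitution $g = \langle\cdot\rangle^\alpha F$ in the $TT^*$ bound above immediately yields the full-line analogue of \eqref{b:15} with integration over $\mathbb R$ in place of $\int_0^t$. Since $p=q=2$ here, Lemma \ref{christ_kiselev} is inapplicable, so I instead invoke Lemma \ref{mizumachi_1}: extended by linearity from the separable case, it represents $2\int_0^t e^{-i(t-s)\ch}P_{a.c.}(\ch)F(s,\cdot)\,ds$ as the symmetric integral $U(t,\cdot)$ minus boundary terms $e^{-it\ch}h_\pm$, where $h_\pm := \int_0^{\pm\infty} e^{is\ch}P_{a.c.}(\ch)F(s,\cdot)\,ds$. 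The symmetric integral is bounded by the full-line estimate, while the boundary contributions are handled by first applying the adjoint of \eqref{a:10} to bound $\|h_\pm\|_{L^2_x}\leq C_\alpha\|\langle y\rangle^\alpha F\|_{L^1_x L^2_t}$, and then applying \eqref{a:10} itself to the free evolutions $e^{-it\ch}h_\pm$.

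The principal technical obstacle is the $\tau$-uniformity of $\|\hat K\|_{L^\infty_{x,y}}$ up to the end points $\tau=\pm 1$: there the Wronskian $W(\mathbf{u}^+,\mathbf{u}^-) = -2ka^+(k)$ from \eqref{Wronskian} normally vanishes as $k\to 0$, which would make the kernel \eqref{resolvent-limiting} blow up. This is precisely where the non-resonance hypothesis of Assumption \ref{assumption-one} (equivalently $\gamma^+\neq 0$, via \eqref{limit-a-b} and Remark \ref{remark-resonance}) is essential: it keeps $ka^+(k)$ bounded below, so that the end-point singularity is absorbed into the $\langle x\rangle^\alpha\langle y\rangle^\alpha$ weights dictated by the linear growth of the Jost functions at $k=0$. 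The high-energy uniformity, in contrast, is automatic from Proposition \ref{proposition-Jost-functions} and the linear growth of $ka^+(k)$ as $|k|\to\infty$.
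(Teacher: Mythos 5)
Your proof is correct. Your treatment of \eqref{b:15} coincides with the paper's: Lemma \ref{mizumachi_1} reduces the retarded integral to a symmetric (full-line) term plus boundary terms, the symmetric term is controlled through the uniform kernel bound \eqref{b:20} together with Plancherel in time on separable data, and the boundary terms through \eqref{a:10} and its dual. Your derivation of \eqref{a:10}, however, is genuinely different. The paper proves \eqref{a:10} directly: it splits the spectral representation into a high-energy piece \eqref{a:11} and a low-energy piece \eqref{a:12}; the former is handled by Plancherel in the spectral variable, the Born expansion $R_{\ch}=R_0-R_0VR_0+R_0VR_{\ch}VR_0$, and the weighted $L^2$ resolvent bound \eqref{b:30}, while the latter uses explicit estimates ($I_1,\dots,I_5$) on the Jost-function representation \eqref{resolvent-limiting}. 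You instead run a $TT^*$ argument in the duality $(L^1_xL^2_t)^*=L^\infty_xL^2_t$, which reduces \eqref{a:10} to precisely the full-line $L^1_xL^2_t\to L^\infty_xL^2_t$ bound that the paper establishes anyway for the $U$-term of \eqref{b:15}; the only spectral input is then \eqref{b:20} of Proposition \ref{prop:a1}, and you correctly identify the non-resonance condition $\gamma^+\neq 0$ as the point where this could fail. This is shorter, dispenses with the Born series and the high/low energy splitting entirely, and in fact yields \eqref{a:10} for all $\alpha\geq 1$; what you give up is the unweighted high-energy estimate \eqref{a:11}, which the paper obtains as a byproduct but never uses elsewhere. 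Two small points to make the argument airtight: justify the pointwise-in-$(x,y)$ kernel manipulation and the Minkowski--Plancherel interchange by first taking $g$ separable with $g_1$ Schwartz and concluding by density (the same reduction the paper performs for the $U$-term), and note that in this Banach-space setting one has, and only needs, the inequality $\|T\|^2\leq\|T\tilde{T}^*\|$ for the preadjoint $\tilde{T}^*$ defined on $L^1_xL^2_t$.
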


\begin{proof}
The proof of Lemma \ref{le:2} proceeds via analysis of the contribution of the high energy part
and the low energy part.

Let $\chi(x)$ be an even $C^\infty$ function with $\chi(x)=1$ for
$|x|<1$ and $\chi(x)=0$ for $|x|>2$. Fix $M > 2$, let $\chi_M(x)=\chi(x/M)$
and decompose
$$
e^{-i t \ch} P_{a.c.}(\ch)  f = \chi_M e^{-i t \ch} P_{a.c.}(\ch)
f + (1-\chi_M)  e^{-i t \ch} P_{a.c.}(\ch) f.
$$
In order to show \eqref{a:10}, we need the following two estimates
\begin{eqnarray}
\label{a:11}
& & \| (1-\chi_M) e^{-i t \ch} P_{a.c.}(\ch) f\|_{L^\infty_x L^2_t}\leq C \|f\|_{L^2_x} \\
\label{a:12} & & \|<x>^{-\al} \chi_M  e^{-i t \ch} P_{a.c.}(\ch)
f\|_{L^\infty_x L^2_t}\leq C \|f\|_{L^2_x}
\end{eqnarray}
Combining bounds (\ref{a:11}) and (\ref{a:12}), we complete the
proof of estimate \eqref{a:10}. Bounds (\ref{a:11}) and
(\ref{a:12}) are proven in the following two subsections.

The proof of estimate \eqref{b:15} is based upon Proposition \ref{prop:a1} and
Lemma \ref{mizumachi_1}. By Lemma \ref{mizumachi_1}, we can write (with $\cl=\ch$)
\begin{eqnarray*}
\int_0^t e^{-i(t-\tau)\ch} P_{a.c.}(\ch) F(\tau,
\cdot)d\tau=\f{1}{2}U + \f{1}{2} \left( \int_0^\infty -
\int_{-\infty}^0 \right) e^{-i(t-\tau)\ch} P_{a.c.}(\ch) F(\tau,
\cdot)d\tau,
\end{eqnarray*}
where $U$ is the function defined by (\ref{function-U-new}). Let
us first control the last two terms. Since they are similar, we
only need to control one of the terms. By the estimate
\eqref{a:10}, we have
\begin{eqnarray*}
\left\| \langle x\rangle^{-\al} e^{-it \ch}\int_0^\infty  e^{i
\tau \ch} P_{a.c.}(\ch)F(\tau, \cdot) d\tau \right\|_{L^\infty_x
L^2_t} & \leq &
C \left\| \int_0^\infty  e^{i \tau \ch}  P_{a.c.}(\ch)F(\tau, \cdot)d\tau \right\|_{L^2_x} \\
& \leq &  C\|\langle x\rangle^{\al} F\|_{L^1_x L^2_t},
\end{eqnarray*}
where in the last step, we have used the dual estimate to
\eqref{a:10}. In order to control the $U$ term, we observe
that the set of all functions
$\{g_1(t) g_2(x): g_1\in L^2_t, g_2\in L^1_x\}$ is dense in
$L^1_x L^2_t$. The estimate that we need follows from
$$
\left\| \langle x\rangle^{-\al} \int_{-\infty}^\infty e^{-i t \la}
\check{g_1}(\la)
 \left[ R^+_{\ch}(\la) + R^-_{\ch}(\la) \right]
 g_2 d\la \right\|_{L^\infty_x L^2_t}\leq C \| g_1 \|_{L^2_t}
 \| \langle x \rangle^{\al} g_2\|_{L^1_x}.
$$
The left-hand side is controlled by Minkowski's inequality and Plancherel's theorem
in the time variable,
\begin{eqnarray*}
\|\langle x\rangle^{-\al} \|\check{g_1}(\la) \left[ R^+_{\ch}(\la)
+ R^-_{\ch}(\la) \right] g_2 \|_{L^2_\la}\|_{L^\infty_x} \leq C
\|\check{g_1} (\la)\|_{L^2_\la} \sup_{\la \in \R} \|
R^{\pm}_{\ch}(\la) \|_{L^2_{\al} \mapsto L^2_{-\al}} \| \langle x
\rangle^{\al} g_2 \|_{L^\infty_x}.
\end{eqnarray*}
Using bound \eqref{b:20} of Proposition \ref{prop:a1} for any
$\alpha \geq 1$, we bound the last expression by $ C
\|g_1\|_{L^2_t} \| \langle x \rangle g_2\|_{L^1_x}, $ which
completes the proof of estimate \eqref{b:15}.
\end{proof}

\subsubsection{Proof of \eqref{a:11}}

Using the Cauchy formula (\ref{Cauchy-formula}) for
$$
g_{x,t}(\lambda) := (1-\chi_M(\la)) e^{-i t \ch} P_{a.c.}(\ch) f,
$$
we can see that for each fixed value of $x$, this function is a
multiple of the Fourier transform of the function
$$
g_x(\la) := (1-\chi_M(\la)) \left( \left[ R^+_{\ch}(\la) -
R^-_{\ch}(\la) \right] f \right)(x),
$$
evaluated at $t$. Therefore, by Plancherel's theorem, we have
$$
\| (1-\chi_M(\la)) e^{-i t \ch} P_{a.c.}(\ch) f\|_{L^2_t}= C
\|g_x\|_{L^2_{\lambda}}.
$$

It is sufficient to control
$$
\sup_{x \in \R} \| (1-\chi_M(\la)) R^{\pm}_{\ch}(\la) f(x)
\|_{L^2_\la}\leq C \|f\|_{L^2_x},
$$
which we will do next. By  iterating the resolvent identities,
$$
R_{\ch}=R_0-R_{\ch} V R_0=R_0-R_0 V R_{\ch},
$$
we get the representation formula
\begin{equation}
\label{a:35} R_{\ch} = R_0 - R_{\ch}  V R_0 = R_0-R_0 V R_0+R_0 V
R_{\ch} V R_0.
\end{equation}
where $R_0$ is the resolvent of the free Dirac operator $D$
defined by \eqref{a:30}. For the first term, we have
$$
\sup_{x \in \R} \| (1-\chi_M(\la)) R^{\pm}_0(\la)
f(x)\|_{L^2_\la}.
$$
By symmetry, it suffices to consider only positive values of $\la$,
whence we need to control
$$
\sup_{x \in \R} \int_{M}^\infty |R_0^{\pm}(\la) f(x)|^2d\la.
$$

We compute the resolvent $R_0(\mu)$
$$
R_0(\mu)=(D-\mu)^{-1} = (1-\p_x^2-\mu^2)^{-1} \left(
\begin{array}{c c} -i \p_x+\mu & -1 \\ -1 & i\p_x +\mu \end{array} \right)
$$
for $\mu\notin \si(D)=[-\infty, -1]\cup [1,\infty]$.  By analytic
continuation, we may define the resolvent even for values on the
spectrum of $\si(D)$. Since we need such a formula for values of
$\mu\in (M,\infty)$, it is convenient to introduce  a change of
variables $\mu=\sqrt{k^2+1}$. Note that
$d\mu=k(k^2+1)^{-1/2}dk\sim dk$ and the interval of integration
becomes $(\sqrt{M^2-1},\infty)$. Now, since the resolvent
operator $(-\p_x^2-k^2\pm 0)^{-1}$ is given by a convolution with
the explicit  kernel $\f{e^{\pm i k |\cdot|}}{2i k}$, it is clear
that $R_0^{\pm}(\mu) f$ is a linear combination of convolution
operators with kernels
 \begin{equation}
 \label{kernels}
 e^{  \pm i  k |x|} {\rm sgn}(x) , \ \f{e^{ \pm i k |x|}}{k}, \ \f{e^{  \pm i k |x|}\sqrt{k^2+1}}{k}.
\end{equation}

We shall consider the first type of operators, the second one has a stronger decay,
while the third one is basically the same as the first one. By Plancherel's theorem
applied to the functions $f(y)\chi_{y<x}$ and $f(y)\chi_{y>x}$, we have
\begin{eqnarray*}
& & \int_{\sqrt{M^2-1}}^\infty \left| \int_{-\infty}^\infty e^{\pm i k |x-y|} {\rm sgn}(x-y)f(y)dy
\right|^2 dk \\
& & \leq 2 \int_{\sqrt{M^2-1}}^\infty \left( \left|
\int_{-\infty}^x e^{\mp i k y}f(y)dy \right|^2 + \left|
\int_{x}^\infty e^{\pm i k y}f(y)dy \right|^2 \right) dk \leq C
\|f\|_{L^2_x}^2.
\end{eqnarray*}

Similarly, we estimate the contribution of the  second term $R_0 V
R_0$ in the expansion (\ref{a:35}). Again, we have to deal with
different terms of the convolution operators, but the hardest one
is again $e^{i k |x|} {\rm sgn}(x)$. We get
\begin{eqnarray*}
& &   \int_{\sqrt{M^2-1}}^\infty \left|
\int_{-\infty}^\infty e^{\pm i k |x-y|} {\rm sgn}(x-y)V(y) \int e^{\pm ik|y-z|} {\rm sgn}(y-z)f(z)dz dy
\right|^2 dk \\
  & & \leq C \|V\|_{L^1_x}^2 \sup_{y \in \R}
 \int_{\sqrt{M^2-1}}^\infty \left| \int e^{\pm ik|y-z|} {\rm sgn}(y-z)f(z)dz \right|^2 dk
 \leq C \|V\|_{L^1_x}^2 \|f\|_{L^2_x}^2,
\end{eqnarray*}
where in the first inequality, we have applied Minkowski's and at
the second inequality, we have applied our previous estimate.

In order to estimate the last term in \eqref{a:35}, we use bound (\ref{b:30}) of
Proposition \ref{prop:a1} and get
\begin{eqnarray*}
& & \int_{\sqrt{M^2-1}}^\infty \left| \int e^{\pm i k |x-y|} {\rm
sgn}(x-y) V(y)[R^{\pm}_{\ch}(\sqrt{1+k^2}) V (R^{\pm}_0(\sqrt{1+k^2}) f)(y) dy \right|^2 dk \\
& & \leq C \|<x>^{\al} V\|_{L^2_x}^2 \int_{\sqrt{M^2-1}}^\infty
\left\| <y>^{-\al} R^{\pm}_{\ch}(\sqrt{1+k^2}) V
R^{\pm}_0(\sqrt{1+k^2}) f \right\|_{L^2_y}^2 dk \\
& & \leq C \|<x>^{\al} V\|_{L^2_x}^2 \sup_{y \in \R}
\int_{\sqrt{M^2-1}}^\infty |R^{\pm}_0(\sqrt{1+k^2}) f(y)|^2 dk \\
& & \leq C \|<x>^{\al} V\|_{L^2_x}^2 \|f\|_{L^2_x}^2.
\end{eqnarray*}
This concludes the proof of \eqref{a:11}.

\subsubsection{Proof of \eqref{a:12}}

We shall prove that
\begin{equation}
\label{bound-auxillary} \sup_{x \in \R} <x>^{-3/2} \| \chi_M(\la)
(R_{\ch}^{\pm}(\la) f)(x) \|_{L^2_{\lambda}} \leq C \| f
\|_{L^2_x},
\end{equation}
which implies bound (\ref{a:12}) by Plancherels' theorem and
Cauchy's formula (\ref{Cauchy-formula}). To prove
(\ref{bound-auxillary}) for $\la \leq -1$, we use representation
(\ref{resolvent-limiting}) and write explicitly
\begin{eqnarray*}
\| \chi_M(\la) (R_{\ch}^+(\la) f)(x) \|^2_{L^2_{\lambda}} =
\int_{-\infty}^{-1} \chi_M^2(\la) \left| (R_{\ch}^+(\la) f)(x)
\right|^2 d \la = \int_{-\sqrt{M^2 - 1}}^{0} \frac{| \tilde{\bf
f}(x,k)|^2 |k| dk}{4 k^2 |a^+(k)|^2 \sqrt{1 + k^2}},
\end{eqnarray*}
where
$$
\tilde{\bf f}(x,k) :=  {\bf u}^+(x;k) \int_{-\infty}^x [\sigma_1
{\bf u}^-(y;k)]^T f(y) dy + {\bf u}^-(x;k) \int_{x}^{\infty}
[\sigma_1 {\bf u}^+(y;k)]^T f(y) dy.
$$

For definiteness, let us assume that $x \geq 0$. We represent
\begin{eqnarray*}
\int_{-\infty}^x [\sigma_1 {\bf u}^-(y;k)]^T f(y) dy & = &
\int_{0}^x [\sigma_1 {\bf u}^-(y;k)]^T f(y) dy +
\int_{-\infty}^0 [\alpha_-,1] f(y) e^{-iky} dy \\
& \phantom{t} &  + \int_{-\infty}^0 \left( [\sigma_1 {\bf
m}^-(y;k)]^T - [\alpha_-,1] \right) f(y) e^{-iky} dy \equiv I_1 +
I_2 + I_3
\end{eqnarray*}
and
\begin{eqnarray*}
\int_{x}^{\infty} [\sigma_1 {\bf u}^+(y;k)]^T f(y) dy & = &
\int_{x}^{\infty} [\alpha_+,1] f(y) e^{iky} dy + \int_{x}^{\infty}
\left( [\sigma_1 {\bf m}^+(y;k)]^T - [\alpha_+,1] \right) f(y)
e^{iky} dy \\ & \phantom{t} & \equiv I_4 + I_5.
\end{eqnarray*}

Using Proposition \ref{proposition-Jost-functions} and
Cauchy--Schwarz inequality, we have
\begin{eqnarray*}
|I_1| & \leq & \| {\bf u}^-(\cdot;k) \|_{L^2_x(0,x)} \| f
\|_{L^2_x} \leq C \langle x \rangle^{3/2} \| f \|_{L^2_x}, \\
|I_3| & \leq & \| {\bf m}^-(\cdot;k) - [1,\alpha_-]^T
\|_{L^2_x(\R_-)}
\| f \|_{L^2_x} \leq C \| \langle x \rangle^3 V \|_{L^\infty_x} \| f \|_{L^2_x}, \\
|I_5| & \leq & \| {\bf m}^+(\cdot;k) - [1,\alpha_+]^T
\|_{L^2_x(\R_+)} \| f \|_{L^2_x} \leq C \| \langle x \rangle^3 V
\|_{L^\infty_x} \| f \|_{L^2_x}.
\end{eqnarray*}
The estimates for $I_3$ and $I_5$ follow from the bound
\begin{equation}
\label{bound-I-3-I-5}
\| {\bf m}^+(\cdot;k) - [1,\alpha_+]^T \|_{L^2_x(\R_+)} + \| {\bf
m}^-(\cdot;k) - [1,\alpha_-]^T \|_{L^2_x(\R_-)} \leq C
\| \langle x \rangle^3 V \|_{L^\infty_x}
\end{equation}
which we prove now. We need only control the first term, the other one is
controlled in a similar matter.

By the formula \eqref{matrix-potential}, for all $x \in \R$ and all $k \in \R$ near $k = 0$, there is $C  > 0$
such that
$$
|G^+(x;k)| \leq C \langle x \rangle.
$$
By Proposition \ref{proposition-Jost-functions}, for all $x > 0$, there is $C > 0$ such that
$$
|{\bf m}^+(x,k)| = |{\bf u}^+(x,k)| \leq C.
$$
Thus, by the integral equation \eqref{integral-equations-Volterra}, we get for all $x>0$,
\begin{eqnarray*}
 | {\bf m}^+(x;k) - [1,\alpha_+]^T| &\leq & C \int_x^\infty
\langle x-y\rangle |V(y) dy \leq C
\| \langle x \rangle^3 V \|_{L^\infty_x}
\int_0^\infty
\langle z\rangle \f{1}{\langle x+z \rangle^3}  dz \\
&\leq & C \| \langle x \rangle^3 V \|_{L^\infty_x} <x>^{-1}.
\end{eqnarray*}
This computation completes the proof of the first inequality in (\ref{bound-I-3-I-5}).

On the other hand, for any finite $M > 1$, Plancherel's theorem gives
$$
\int_{-\sqrt{M^2-1}}^0 \left( |I_2|^2 + |I_4|^2 \right) dk \leq C
\| f \|^2_{L^2_x}.
$$
Since $k a^+(k)$ is bounded away from zero as $k \to 0$ and
$|a^+(k)| \geq 1$, we obtain
$$
\int_{-\sqrt{M^2 - 1}}^{0} \frac{| \tilde{\bf f}(x,k)|^2 |k| dk}{4
k^2 |a^+(k)|^2 \sqrt{1 + k^2}} \leq C (1 + \langle x \rangle^3) \|
f \|^2_{L^2_x},
$$
which concludes the proof of bound (\ref{bound-auxillary}) and
hence of bound (\ref{a:12}).

\subsection{Strichartz estimates}

We use the following standard definition.
\begin{definition}
\label{defi:1} We say that a pair $(q,r)$ is Strichartz admissible
for the nonlinear Dirac equations if
$$
q \geq 2, \quad r\geq 2 \quad \mbox{\rm and} \quad \frac{2}{q} + \frac{1}{r} \leq \frac{1}{2}.
$$
In particular, $(q,r)=(4,\infty)$ and $(q,r)=(\infty,2)$ are
end-point Strichartz pairs.
\end{definition}

\begin{lemma}
\label{le:1} Let $(q,r)$ be a Strichartz admissible pair, $s\geq
0$, and $\ve>0$. Then, there are constants $C_\ve > 0$ and $C > 0$ such that
 \begin{eqnarray}
 \label{120}
 & &   \|e^{-i t \ch} P_{a.c.}(\ch)f\|_{L^{4}_t L^\infty_x}\leq
 C_\ve \|f\|_{H_x^{3/4+\ve}},  \\
\label{a50}
 & &   \|e^{-i t \ch} P_{a.c.}(\ch)f\|_{L^{\infty}_t H^s_x}\leq
 C  \|f\|_{H^s_x},\\
 \label{a:3}
 & &  \left\|\int_0^t e^{-i(t-\tau) \ch} P_{a.c.}(\ch)
 F(\tau, \cdot)d\tau \right\|_{L^{\infty}_t H^1_x\cap L^q_t L^r_x}\leq C
 \|F\|_{L^{1}_t H^1_x}.
 \end{eqnarray}
\end{lemma}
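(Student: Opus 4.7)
I would start with \eqref{a50}, which is essentially free. Self-adjointness of $\ch$ gives that $e^{-it\ch}$ is a unitary group on $L^2(\R)$, so $\|e^{-it\ch}P_{a.c.}(\ch)f\|_{L^2_x}=\|P_{a.c.}(\ch)f\|_{L^2_x}$ for every $t$. For $s\in(0,1]$ the same bound propagates to $H^s$ by the norm equivalence $\|f\|_{H^s}\sim \|(1+\ch^2)^{s/4}f\|_{L^2}$, available under Assumption \ref{assumption-one} since $V$ is bounded and exponentially decaying; commuting the fractional power of $\ch$ through the semigroup reduces matters to the $L^2$ case.

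The main estimate is \eqref{120}. My plan is to establish it first for the free Dirac operator $D$ and then transfer to $\ch$. Diagonalizing the $2\times 2$ Fourier symbol of $D$ reduces $e^{-itD}$ to two Klein--Gordon propagators in one dimension, with phase $\phi(k)=\sqrt{1+k^2}$ and $\phi''(k)=(1+k^2)^{-3/2}$. The latter is nondegenerate at $k=0$ but decays like $|k|^{-3}$ at infinity, so a standard stationary-phase argument applied at frequency $\sim N$ yields
\begin{equation*}
\|e^{-itD}P_N f\|_{L^\infty_x}\leq C|t|^{-1/2}\langle N\rangle^{3/2}\|P_N f\|_{L^1_x},
\end{equation*}
where $P_N$ denotes a Littlewood--Paley projector of scale $N$. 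Interpolation with the $L^2$ isometry and the (non-endpoint) Keel--Tao $TT^*$ machinery give $\|e^{-itD}P_N f\|_{L^4_tL^\infty_x}\leq C\langle N\rangle^{3/4}\|P_N f\|_{L^2_x}$; summing dyadically with Cauchy--Schwarz then produces an $\ve$-loss (since $L^\infty_x$ does not admit a Littlewood--Paley square function), giving \eqref{120} for $D$. To pass from $D$ to $\ch$ I would use the Duhamel identity
\begin{equation*}
e^{-it\ch}P_{a.c.}(\ch)f=e^{-itD}P_{a.c.}(\ch)f-i\int_0^t e^{-i(t-\tau)D} V\, e^{-i\tau\ch}P_{a.c.}(\ch)f\,d\tau,
\end{equation*}
control the correction term in $L^4_tL^\infty_x$ by the inhomogeneous free Strichartz bound applied to $D$, and absorb the resulting weighted norm of $V\,e^{-i\tau\ch}P_{a.c.}(\ch)f$ using the Mizumachi local-smoothing estimate \eqref{a:10}. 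The exponential decay of $V(x)$ is exactly what converts the $L^\infty_xL^2_t$ output of \eqref{a:10} into the dual $L^1_xL^2_t$-type input needed for the inhomogeneous Strichartz of $D$.

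Estimate \eqref{a:3} follows by combining the homogeneous bounds with Lemma \ref{christ_kiselev}. The $L^\infty_tH^1_x$ piece is immediate from Minkowski and \eqref{a50} with $s=1$. For the space-time piece I would first dualize \eqref{120} (and its companion bounds for other admissible pairs with $r<\infty$, where no summation loss occurs at the $H^1$-level) to obtain the untruncated inhomogeneous estimate
\begin{equation*}
\left\|\int_{\R}e^{-i(t-\tau)\ch}P_{a.c.}(\ch)F(\tau,\cdot)d\tau\right\|_{L^q_tL^r_x}\leq C\|F\|_{L^1_tH^1_x},
\end{equation*}
and then invoke Lemma \ref{christ_kiselev} to replace the integral over $\R$ by $\int_0^t$; this is legitimate since every admissible pair in Definition \ref{defi:1} satisfies $q\geq 4>1$.

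The main obstacle will be the transference step from $D$ to $\ch$: at high frequency the Klein--Gordon dispersion is weak, producing the derivative loss $\langle N\rangle^{3/4}$, and one has to balance this against the weighted local decay supplied by Lemma \ref{le:2} in order to show that the Duhamel correction is a bounded perturbation in $L^4_tL^\infty_x$. Once that balance is set up, the three estimates follow in the order described.
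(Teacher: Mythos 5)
Your plan is correct and follows essentially the same route as the paper: \eqref{a50} by unitarity plus the equivalence of $H^1$ with the graph norm of $\ch$, \eqref{120} by the Rodnianski--Schlag scheme (free Dirac Strichartz with an $\ve$-derivative loss at the $(4,\infty)$ endpoint, Duhamel with the factorized potential, Christ--Kiselev, and the Mizumachi bound \eqref{a:10} to absorb the local term), and \eqref{a:3} by dualizing \eqref{120} and applying Lemma \ref{christ_kiselev}. The only divergence is that you derive the free estimate $\|e^{-itD}P_Nf\|_{L^4_tL^\infty_x}\lesssim\langle N\rangle^{3/4}\|P_Nf\|_{L^2}$ by hand via the Klein--Gordon stationary-phase bound and dyadic summation, where the paper simply cites Nakamura--Ozawa for the non-endpoint pairs; your computation is consistent with the paper's exponent $s(q,r)=\tfrac12+\tfrac1q-\tfrac1r$.
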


\begin{proof}
Let us first comment on the estimates \eqref{a50} and \eqref{a:3}.
It is easy to see by the self-adjointness of $\ch$ that
\eqref{a50} is trivial for $s=0$. We easily extend to all integer
values of $s$ by the observation that $\p_x$ behaves like $\ch$
and commuting $\ch$ with $e^{-i t \ch} P_{a.c.}(\ch)$. This is
made precise in formula \eqref{derivative-via-H} below. We then
conclude by interpolation to obtain \eqref{a50} for all
nonnegative values of $s$. Regarding \eqref{a:3}, it follows by an
easy application of Lemma \ref{christ_kiselev} combined with the
dual estimate of \eqref{120}.

Thus, it remains to show \eqref{120}. We will in fact deduce this
Strichartz estimate for the perturbed Dirac operator $\ch$ by
using the corresponding result for the free Dirac operator $D$, in
addition to the weighted estimates in Lemma \ref{le:2}. This is in
essence the approach taken by Rodnianski and Schlag, \cite{RS}.
Let us first record the Strichartz estimates for the Dirac
operator $D$
\begin{equation}
\label{c:2}
 \|e^{-i t D} f\|_{L^{q}_t L^r_x}\leq C_\de \|f\|_{H_x^{s(q,r)}},
 \quad s(q,r)=\f{1}{2}+\f{1}{q}-\f{1}{r}.
\end{equation}
for all Strichartz admissible pairs $(q,r)$, so that $q\geq
4+\de$. This of course looks exactly the same as the estimates
that one gets from interpolating between \eqref{120} and
\eqref{a50}. We refer the reader to recent work of Nakamura-Ozawa,
\cite{NO} (more specifically Lemma 2.1 with $\theta=1, \la=3/2,
n=1$) for a reference for this result. Note that this result would
not extend to the full range $q=4, r=\infty$, unless we are
willing to replace the $L^\infty$ by  the Besov space
$B^0_{\infty,2}$ (which we are avoiding for the purpose of
simplicity). In order to extend this to the useful endpoint $q=4,
r=\infty$, we must introduce slight loss of smoothness, so we have
\begin{equation}
\label{c:5} \|e^{-i t D} f\|_{L^{4}_t L^\infty_x}\leq C_\ve
\|f\|_{H_x^{3/4+\ve}}.
\end{equation}

Fix now $\ve>0$ and take a test function $f=P_{a.c.}(\ch) f \in H^{3/4+\ve}$. Recall that since
$\ch= D + V(x)$, we may write
$$
e^{-i t \ch} f = e^{-i t D}f - i \int_0^t e^{-i (t-s) D} V e^{-i s \ch} f ds.
$$
Furthermore, we may write the symmetric matrix $V(x)$ as the
product of $V_1(x)$ and $V_2(x)$, where both $V_1(x)$ and $V_2(x)$
are $C^1$-smooth and have fast decay at spatial infinity. For
instance, one may pick $V_1(x)= V(x) \langle x\rangle^{10}$ and
$V_2(x) = \langle x\rangle^{-10} Id$. We have
\begin{eqnarray*}
\|e^{-i t \ch} P_{a.c.}(\ch)  f\|_{L^4_t L^\infty_x} & \leq &
 \|e^{-i t D} f\|_{L^4_t L^\infty_x}+ \left\| \int_0^t e^{-i(t-s) D}
 V_1 V_2 e^{-i s \ch} P_{a.c.}(\ch)  f ds \right\|_{L^4_t L^\infty_x} \\
 & \leq & C_\ve \|f\|_{H_x^{3/4+\ve}}+ \left\| \int_0^t e^{-i(t-s)D}
 V_1 V_2 e^{-i s \ch} P_{a.c.}(\ch)  f ds \right\|_{L^4_t L^\infty_x}.
\end{eqnarray*}

At this stage, in order to estimate the second term, we will use
Lemma \ref{christ_kiselev}. Let
$K(t,s)=e^{-i (t-s) D} V_1$ be considered as acting between $L^2_t
H^{3/4+\ve}_x$ to $L^4_t L^\infty_x$. The Duhamel's term that we
need to estimate is
$$
M(t) = \int_0^t K(t,s) V_2 e^{-i s \ch} P_{a.c.}(\ch)  f
ds = \tilde{K} V_2 e^{-i t \ch}P_{a.c.}(\ch)  f,
$$
where $\tilde{K}$ is defined by (\ref{definition-tilde-K}). It
follows from Lemma \ref{christ_kiselev} (since $q=4>2=p$, this
lemma can be applied) that
$$
\| M \|_{L^4_t L^\infty_x}\leq C \|K\|_{L^2_t H^{3/4+\ve}_x \to L^4_t L^\infty_x}
\|V_2 e^{-i t \ch}P_{a.c.}(\ch)  f\|_{L^2_t H^{3/4+\ve}_x}.
$$
We need estimate then the operator norm $\|K\|_{L^2_t  H^{3/4+\ve}_x \to
L^4_t L^\infty_x}$. We have by \eqref{c:5}
\begin{eqnarray*}
\|K G\|_{L^4_t L^\infty_x} &= & \left\| e^{-i t D}
\int_{-\infty}^\infty e^{i s D} V_1 G(s, \cdot) ds \right\|_{L^4_t
L^\infty_x} \leq C_\ve \left\| \int_{-\infty}^\infty e^{i s D} V_1
G(s, \cdot) ds \right\|_{H_x^{3/4+\ve}}.
\end{eqnarray*}

We will show that for $0\leq s\leq 1$,
\begin{eqnarray}
\label{c:10} \left\|\int_{-\infty}^\infty e^{i s D} V_1 G(s,
\cdot) ds \right\|_{H^s_x} \leq C_{s,V_1} \|G\|_{L^2_t H^s_x},
\end{eqnarray}
and
\begin{eqnarray}
\label{c:15} \|V_2 e^{-i t \ch}P_{a.c.}(\ch)  f\|_{L^2_t
H^{s}_x}\leq C_{V_2}\|f\|_{H^s_x},
\end{eqnarray}
which implies what is needed. Indeed, for $s=3/4+\ve$, we deduce
$$
\|M\|_{L^4_t L^\infty_x}\leq C_{V_1} \|V_2 e^{-i t \ch}P_{a.c.}(\ch)  f\|_{L^2_t H^{3/4+\ve}_x}\leq
C_{V_1, V_2}\|f\|_{H^{3/4+\ve}_x}.
$$
It thus suffices to establish \eqref{c:10} and \eqref{c:15}. By
interpolation, it suffices to check both only for $s=0$ and $s=1$.
The statements for  $s=0$ in fact follow from the corresponding
arguments for $s=1$, so we concentrate on $s=1$. For \eqref{c:10},
(observe that $\p_x e^{i t D}=e^{i t D} \p_x$), we have
\begin{eqnarray*}
& & \left\| \int_{-\infty}^\infty e^{i s D} V_1 G(s, \cdot) ds
\right\|_{H^1_x}\leq \left\| \int_{-\infty}^\infty e^{i s D} V_1
G(s, \cdot) ds \right\|_{L^2_x}+ \left\| \int_{-\infty}^\infty
e^{i s D} \p_x[V_1 G(s, \cdot)] ds \right\|_{L^2_x}.
\end{eqnarray*}
By the dual estimate to \eqref{c:2} (recall $s(\infty,2)=0$),
the right-hand side of the last inequality is estimated by
\begin{eqnarray*}
& &C(\|V_1 G(s, \cdot)\|_{L^1_t L^2_x}+\|\p_x[V_1 G(s, \cdot)]\|_{L^1_t L^2_x})\leq C (\|V_1\|_{L^\infty_x}+\|V'_1\|_{L^\infty_x})\|G\|_{L^1_t H^1_x}.
\end{eqnarray*}
This is the proof of \eqref{c:10}.

Next, we need to deal with derivatives in the estimates for the
perturbed evolution. From the formula $D=\ch-V(x)$,  we have the
equivalence
\begin{equation}
\label{derivative-via-H}
\left\|\left(\begin{array}{c}u \\ v\end{array}\right)\right\|_{H^1}\sim \left\|\ch \left(\begin{array}{c}u \\ v\end{array}\right)\right\|_{L^2}+\left\|\left(\begin{array}{c}u \\ v\end{array}\right)\right\|_{L^2}
\end{equation}
which will be used repeatedly in the arguments to follow.
Regarding \eqref{c:15} for $s=1$, we use \eqref{derivative-via-H}
to obtain
\begin{eqnarray*}
\|\p_x[V_2 e^{-i t \ch}f]\|_{L^2_t L^2_x} \leq \|V_2' e^{-i t
\ch}f\|_{L^2_t L^2_x}+ \|V_2 e^{-i t \ch}\ch f\|_{L^2_t L^2_x} +
\|V_2 e^{-i t \ch} f\|_{L^2_t L^2_x}.
\end{eqnarray*}
Now, since $|V_2'(x)|+|V_2(x)|\leq \langle x\rangle^{-10}$, we estimate the last three quantities by
\begin{eqnarray*}
C \|\langle x \rangle^{-5}\|_{L^2_x}(\|\langle x\rangle^{-5} e^{-i
t \ch} f\|_{L^\infty_x L^2_t}+\|\langle x\rangle^{-5} e^{-i t \ch}
 \ch f\|_{L^\infty_x L^2_t}+ \|\langle x\rangle^{-5} e^{-i t \ch} f\|_{L^\infty_x L^2_t}) \\
<leq C(\|f\|_{L^2_x}+\|\ch f\|_{L^2_x}) \leq C\|f\|_{H^1_x},
\end{eqnarray*}
where bound \eqref{a:10} and H\"older's inequality are used. This computation
establishes \eqref{c:15} and hence Lemma \ref{le:1}.
\end{proof}

\subsection{Additional estimates}

Mizumachi estimates and Strichartz estimates admit a number of useful corollaries.

\begin{corollary}
\label{corStr}
Let $(q,r)$ be and admissible Strichartz pair such that $q\geq 4+\de$. For each $\de>0$,
there is $C_{\de} > 0$ such that
\begin{equation}
\label{c:1} \left\| \int_0^t e^{i\tau  \ch} P_{a.c.}(\ch)F(\tau,
\cdot)d\tau \right\|_{L^2_x}\leq
 C_{q,\de}   \|F\|_{L^{q'}_t W^{s(q,r),r'}_x}, \quad
 s(q,r)=\f{1}{2}+\f{1}{q}-\f{1}{r},
\end{equation}
where $(q',r')$ are duals of $(q,r)$.
\end{corollary}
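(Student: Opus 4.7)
The estimate \eqref{c:1} is essentially the $TT^{*}$-dual of the homogeneous Strichartz estimates of Lemma \ref{le:1}, combined with a Christ--Kiselev truncation. My plan proceeds in three steps. First, interpolating between the endpoints \eqref{120} and \eqref{a50} furnishes the full family of Strichartz estimates
\begin{equation*}
\|e^{-it\ch}P_{a.c.}(\ch)f\|_{L^q_t L^r_x} \le C\|f\|_{H^{s(q,r)}_x}
\end{equation*}
for all admissible pairs $(q,r)$ with $q\ge 4+\delta$. Viewing the left-hand side as a bounded linear map $T\colon H^{s(q,r)}_x \to L^q_t L^r_x$, $Tf=e^{-it\ch}P_{a.c.}(\ch)f$, its adjoint $T^{*}F=\int_{\R}e^{i\tau\ch}P_{a.c.}(\ch)F(\tau)\,d\tau$ is bounded from $L^{q'}_t L^{r'}_x$ into $H^{-s(q,r)}_x$, which yields the full-line bound
\begin{equation*}
\Bigl\|\int_{\R} e^{i\tau\ch}P_{a.c.}(\ch)F\,d\tau\Bigr\|_{H^{-s(q,r)}_x} \le C\|F\|_{L^{q'}_t L^{r'}_x}.
\end{equation*}

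Second, I convert this $H^{-s}$-bound into the required $L^{2}$-bound by transferring the $s$ derivatives onto $F$. Because $\ch$ is self-adjoint, the spectral multiplier $\langle\ch\rangle^{-s}$ commutes with both $e^{i\tau\ch}$ and $P_{a.c.}(\ch)$; and \eqref{derivative-via-H} together with complex interpolation between $s=0$ and $s=1$ gives the $L^{2}$-Sobolev equivalence $\|\langle\ch\rangle^s\cdot\|_{L^2}\sim \|\langle\nabla\rangle^s\cdot\|_{L^2}$ for $s\in[0,1]$. Replacing $\langle\nabla\rangle^{-s}$ by $\langle\ch\rangle^{-s}$, commuting it through the integral, and then substituting $F\mapsto \langle\ch\rangle^{s}F$ turns the previous bound into
\begin{equation*}
\Bigl\|\int_{\R} e^{i\tau\ch}P_{a.c.}(\ch)F\,d\tau\Bigr\|_{L^2_x} \le C\|F\|_{L^{q'}_t W^{s(q,r),r'}_x}.
\end{equation*}

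Third, I apply Lemma \ref{christ_kiselev} (Christ--Kiselev) with target space $L^\infty_t L^2_x$ to replace the full-line integral by the truncated integral $\int_0^t$. The hypothesis $p<q$ of that lemma reads $q'<\infty$, which holds because $q\ge 4+\delta>1$; this produces exactly \eqref{c:1}.

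The main technical hurdle is Step~2 when $r'\ne 2$: the argument requires the $L^{r'}$-boundedness of the transfer operator $\langle\ch\rangle^{s}\langle\nabla\rangle^{-s}$, which does not follow from the $L^{2}$-equivalence alone. For the free Dirac operator the exact identity $D^{2}=\langle\nabla\rangle^{2}$ makes $\langle\nabla\rangle^{s}$ commute with $D$ and thus gives the estimate for $D$ cleanly; the perturbation by the smooth, exponentially decaying potential $V$ is then absorbed by a Duhamel plus Christ--Kiselev argument, exactly in the spirit of the transfer from $D$ to $\ch$ already carried out in the proof of \eqref{120} in Lemma \ref{le:1}.
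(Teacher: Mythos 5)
Your proposal is correct and follows essentially the same route as the paper, whose entire proof of this corollary reads ``obtained from \eqref{120} and \eqref{a50} by duality and interpolation''; your interpolation--duality--Christ--Kiselev expansion is the natural way to fill in that one-line argument. The derivative-transfer subtlety you flag in Step~2 (commuting $\langle \nabla\rangle^{s}$ past the perturbed evolution, i.e.\ the $L^{r'}$-boundedness of $\langle \ch\rangle^{s}\langle\nabla\rangle^{-s}$ when $r'\neq 2$) is genuine but is elided by the paper as well, and your proposed repair via the free operator (for which $D^{2}=1-\p_x^2$ makes the multiplier harmless) plus a Duhamel argument is the standard one.
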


\begin{proof}
The result is obtained from \eqref{120} and \eqref{a50} by duality and interpolation.
\end{proof}

\begin{corollary}
\label{mizStr}
Fix $\alpha > 2$. There is $C_{\al} > 0$ such that
\begin{eqnarray}
\label{a:90}
 & & \left\| \int_0^t e^{-i(t-\tau) \ch} P_{a.c.}(\ch)
 F(\tau, \cdot)d\tau \right\|_{L^{\infty}_t H^1_x \cap L^4_t L^\infty_x} \leq
 C_{\al} (\|\langle x\rangle^\al F\|_{L^1_x L^2_t}+
 \|\langle x\rangle^\al \p_x F\|_{L^1_x L^2_t}).
\end{eqnarray}
\end{corollary}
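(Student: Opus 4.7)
The plan is to control $G(t) := \int_0^t e^{-i(t-\tau)\ch} P_{a.c.}(\ch) F(\tau,\cdot)d\tau$ in the two component norms $L^\infty_t H^1_x$ and $L^4_t L^\infty_x$ separately, combining $TT^*$-style dualities against the Mizumachi estimate \eqref{a:10} with the endpoint Strichartz estimate \eqref{120}, and finally invoking the Christ--Kiselev lemma to upgrade a full-line integral to the retarded Duhamel integral.

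For the $L^\infty_t L^2_x$ piece, dualize against $g\in L^2_x$: writing $\langle G(t), g\rangle_{L^2_x} = \int_0^t \langle F(\tau,\cdot), e^{i(t-\tau)\ch} P_{a.c.}(\ch) g\rangle_{L^2_x} d\tau$, split the integrand as $(\langle x\rangle^\alpha F)\cdot(\langle x\rangle^{-\alpha} e^{i(t-\tau)\ch} P_{a.c.}(\ch) g)$, and apply Cauchy--Schwarz in $\tau$ followed by H\"older in $x$ to reduce to \eqref{a:10}; this gives $\|G\|_{L^\infty_t L^2_x} \leq C\|\langle x\rangle^\alpha F\|_{L^1_x L^2_t}$. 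For $\|\ch G\|_{L^\infty_t L^2_x}$, commute $\ch$ into the Duhamel integral and repeat with $\ch F$ in place of $F$; since $\ch = D + V$ with $V$ exponentially decaying, the weighted $L^1_x L^2_t$ norm of $\ch F$ is dominated by $\|\langle x\rangle^\alpha F\|_{L^1_x L^2_t} + \|\langle x\rangle^\alpha \partial_x F\|_{L^1_x L^2_t}$. Combined with the elliptic equivalence \eqref{derivative-via-H}, this yields the $L^\infty_t H^1_x$ part of \eqref{a:90}.

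For the $L^4_t L^\infty_x$ piece, first set up the full-line analog. Write
$$
\int_\R e^{-i(t-\tau)\ch} P_{a.c.}(\ch) F(\tau)d\tau = e^{-it\ch} P_{a.c.}(\ch) h, \qquad h := \int_\R e^{i\tau\ch} P_{a.c.}(\ch) F(\tau) d\tau,
$$
and apply the endpoint Strichartz bound \eqref{120} to obtain $\|e^{-it\ch}P_{a.c.}(\ch) h\|_{L^4_t L^\infty_x} \leq C\|h\|_{H^{3/4+\ve}_x} \leq C\|h\|_{H^1_x}$. To control $\|h\|_{H^1_x}$, repeat the duality argument of the preceding paragraph but reverse the order of integration: apply H\"older in $x$ first (pointwise in $\tau$) and Cauchy--Schwarz in $\tau$ second, absorbing the semi-group factor by means of Minkowski's integral inequality $\|\cdot\|_{L^2_\tau L^\infty_x} \leq \|\cdot\|_{L^\infty_x L^2_\tau}$ together with \eqref{a:10}. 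This yields $\|h\|_{H^1_x} \leq C(\|\langle x\rangle^\alpha F\|_{L^2_\tau L^1_x} + \|\langle x\rangle^\alpha \partial_x F\|_{L^2_\tau L^1_x})$. Now apply Lemma \ref{christ_kiselev} with $p=2$, $q=4$, input space $L^2_\tau$ of the weighted $W^{1,1}_x$-type norm $\|\langle x\rangle^\alpha \cdot\|_{L^1_x}+\|\langle x\rangle^\alpha \partial_x(\cdot)\|_{L^1_x}$, and output space $L^4_t L^\infty_x$: since $p<q$, the retarded Duhamel integral $G$ inherits the same bound. A final application of Minkowski's integral inequality $\|\cdot\|_{L^2_\tau L^1_x} \leq \|\cdot\|_{L^1_x L^2_\tau}$ converts the resulting RHS into the form of \eqref{a:90}.

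The central obstacle is an apparent ``wrong orientation'' of the mixed-norm RHS of \eqref{a:90}: the natural Mizumachi dualization produces a $\|\langle x\rangle^\alpha F\|_{L^1_x L^2_t}$-type bound, but Lemma \ref{christ_kiselev} requires the \emph{outer} variable to be $L^p_\tau$ with $p<q=4$. The reversed-order duality used for the $L^4_t L^\infty_x$ piece above trades this for a slightly stronger control by $\|\langle x\rangle^\alpha F\|_{L^2_\tau L^1_x}$ (which is permitted by the favorable direction of Minkowski's inequality applied on the semi-group side), thereby enabling Christ--Kiselev; Minkowski in the opposite direction at the very end recovers the coarser norm stated in the corollary.
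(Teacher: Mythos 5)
Your treatment of the $L^\infty_t H^1_x$ component is correct, and your diagnosis of the central difficulty --- that the dual Mizumachi estimate naturally produces the norm $\|\langle x\rangle^{\al} F\|_{L^1_x L^2_t}$ with $x$ outermost, whereas Lemma \ref{christ_kiselev} needs the time variable outermost --- is exactly right. But the step you use to resolve it fails: the inequality $\|\cdot\|_{L^2_\tau L^\infty_x}\le \|\cdot\|_{L^\infty_x L^2_\tau}$ is the \emph{wrong} direction of Minkowski's integral inequality. For exponents $p\le q$ one has $\bigl\| \|f\|_{L^p}\bigr\|_{L^q}\le \bigl\|\|f\|_{L^q}\bigr\|_{L^p}$, i.e.\ here $\|f\|_{L^\infty_x L^2_\tau}\le\|f\|_{L^2_\tau L^\infty_x}$; taking $f$ supported in a long diagonal strip $\{|x-\tau|\le 1,\ 0\le\tau\le N\}$ shows the ratio between the two sides can be arbitrarily large. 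Consequently your reversed-order duality would require the bound $\|\langle x\rangle^{-\al}e^{i\tau\ch}P_{a.c.}(\ch)g\|_{L^2_\tau L^\infty_x}\le C\|g\|_{L^2_x}$, which is a strictly \emph{stronger} local-decay estimate than \eqref{a:10} and is not established anywhere in the paper. So the claimed bound $\|h\|_{H^1_x}\lesssim\|\langle x\rangle^{\al} F\|_{L^2_\tau L^1_x}$ is not proved, and without it the Christ--Kiselev step has no full-line input estimate in the required $L^2_\tau(X)$ form. (Your final conversion $\|\cdot\|_{L^2_\tau L^1_x}\le\|\cdot\|_{L^1_x L^2_\tau}$ is the correct direction of Minkowski, which itself confirms that $L^2_\tau L^1_x$ is the \emph{smaller} norm --- a bound by it is genuinely stronger than what the first-ordering duality argument delivers.)

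The paper resolves this orientation clash differently, and you should adopt its device: restrict first to separable inputs $F(\tau,x)=g_1(\tau)g_2(x)$, whose span is dense in $L^1_xL^2_t$ and for which the two orderings of the mixed norm coincide, both equal to $\|g_1\|_{L^2_t}\bigl(\|\langle x\rangle^{\al} g_2\|_{L^1_x}+\|\langle x\rangle^{\al}\p_xg_2\|_{L^1_x}\bigr)$. For such $F$ the full-line estimate follows exactly by your first (correct) ordering --- Cauchy--Schwarz in $\tau$, H\"older in $x$, the dual of \eqref{a:10}, combined with \eqref{120}, \eqref{a50} and the equivalence \eqref{derivative-via-H} to handle the derivative --- and Lemma \ref{christ_kiselev} then applies with $p=2<q=4$ (and $q=\infty$ for the $H^1_x$ component), because the right-hand side is now literally of the form $\|g_1\|_{L^2_\tau}\,\|g_2\|_X$.
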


\begin{proof}
Due to the density of $\{g_1(t) g_2(x): g_1\in L^2_t, g_2\in L^1_x\}$   in
 $L^1_x L^2_t$, it will suffice to show
 \begin{equation}
\label{a:100}
 \left\| \int_0^t e^{-i(t-\tau) \ch} P_{a.c.}(\ch)
 g_1(\tau)g_2 d\tau \right\|_{L^{\infty}_t H^1_x\cap L^4_t L^\infty_x}\leq
 C\|g_1\|_{L^2_t}(\|\langle x\rangle^\al g_2\|_{L^1_x}+
 \|\langle x\rangle^\al \p_x g_2\|_{L^1_x})
 \end{equation}
By Lemma \ref{christ_kiselev}, we need to show that
$$
\left\| \int_{-\infty}^\infty e^{-i(t-\tau) \ch} P_{a.c.}(\ch)
 g_1(\tau)g_2 d\tau \right\|_{L^{\infty}_t H^1_x\cap L^4_t L^\infty_x}\leq
 C\|g_1\|_{L^2_t}(\|\langle x\rangle^\al g_2\|_{L^1_x}+
 \|\langle x\rangle^\al \p_x g_2\|_{L^1_x})
$$
By \eqref{120} and \eqref{a50}, we have
\begin{eqnarray*}
& & \left\| e^{-it \ch}\int_{-\infty}^\infty e^{i \tau \ch}
P_{a.c.}(\ch) g_1(\tau)g_2 d\tau \right\|_{L^{\infty}_t H^1_x\cap
L^4_t L^\infty_x}\leq \left\|\int_{-\infty}^\infty e^{i \tau \ch}
P_{a.c.}(\ch) g_1(\tau)g_2 d\tau \right\|_{H^1_x},
\end{eqnarray*}
Again, one may convert one derivative to $\ch-V$ by the
equivalence (\ref{derivative-via-H}), whence we further estimate
by the dual of \eqref{a:10},
\begin{eqnarray*}
& & \left\| \int_{-\infty}^\infty e^{i \tau \ch} P_{a.c.}(\ch)
 g_1(\tau) g_2 d\tau \right\|_{H^1_x} \\
& & \leq C \left( \left\|\int_{-\infty}^\infty e^{i \tau \ch}
P_{a.c.}(\ch) g_1(\tau)\ch g_2(\cdot) d\tau \right\|_{L^2_x} +
\left\| V \int_{-\infty}^\infty e^{i \tau \ch} P_{a.c.}(\ch)
 g_1(\tau)  g_2(\cdot) d\tau \right\|_{L^2_x} \right) \\
& & \leq C\|g_1\|_{L^2_t}(\|\langle x\rangle^{\al} \ch
g_2\|_{L^1_x}+\|\langle x\rangle^{\al}  g_2\|_{L^1_x}) \\
& & \leq C\|g_1\|_{L^2_t}(\|\langle x\rangle^\al g_2\|_{L^1_x}+
 \|\langle x\rangle^\al \p_x g_2\|_{L^1_x}),
\end{eqnarray*}
which is the desired estimate.
\end{proof}

\begin{corollary}
\label{cor-1}
Fix $\al>2$. There is $C_{\al} > 0$ such that
\begin{eqnarray}
\label{a:55} & &  \left\|\langle x\rangle^{-\al} \int_0^t
e^{-i(t-\tau)\ch} P_{a.c.}(\ch)F(\tau, \cdot)d\tau
\right\|_{L^\infty_x L^2_t}\leq C \|F\|_{L^1_t L^2_x},
 \end{eqnarray}
More generally, let $(q,r)$ be an admissible Strichartz pair. Then,
\begin{equation}
\label{a:80} \left\|\langle x\rangle^{-\al} \int_0^t e^{-i(t-\tau)
\ch} P_{a.c.}(\ch) F(\tau, \cdot) d\tau \right\|_{L^\infty_x
L^2_t}\leq C_{\al} \|F\|_{L^{q'}_{t} W_x^{1,r'}},
\end{equation}
where $(q',r')$ is a dual pair.
\end{corollary}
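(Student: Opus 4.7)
The strategy is to separate the propagator as
$e^{-i(t-\tau)\ch} P_{a.c.}(\ch) = e^{-it\ch}\circ e^{i\tau\ch}P_{a.c.}(\ch)$ so that the time integration factors out spatially, and then to apply the Mizumachi endpoint estimate \eqref{a:10} of Lemma~\ref{le:2} to the outer propagator. The presence of the cutoff $\int_0^t$ (rather than $\int_0^{\infty}$) is handled by Lemma~\ref{christ_kiselev} (Christ--Kiselev). Indeed, every Strichartz admissible pair in the sense of Definition~\ref{defi:1} satisfies $\tfrac{2}{q}+\tfrac{1}{r}\leq \tfrac{1}{2}$ with $r\geq 2$, which forces $q\geq 4$ and hence $q'\leq 4/3<2$. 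For the $L^1_t$ hypothesis in \eqref{a:55} the inequality $1<2$ is trivial. Thus in both cases Christ--Kiselev is applicable (the output time-norm $L^2_t$ strictly dominates the input time-exponent), and reduces matters to proving the corresponding estimates with $\int_0^t$ replaced by $\int_0^{\infty}$.

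After this reduction, it suffices to set
\begin{equation*}
g := \int_0^{\infty} e^{i\tau\ch} P_{a.c.}(\ch)\, F(\tau,\cdot)\, d\tau,
\end{equation*}
and to bound $\|\langle x\rangle^{-\al} e^{-it\ch} g\|_{L^{\infty}_x L^2_t}$ by the right-hand side. By \eqref{a:10} of Lemma~\ref{le:2}, this is controlled, up to a constant, by $\|g\|_{L^2_x}$. For \eqref{a:55}, Minkowski's inequality and the unitarity of $e^{i\tau\ch}$ on $L^2(\R)$ give immediately
\begin{equation*}
\|g\|_{L^2_x}\leq \int_0^{\infty}\|F(\tau,\cdot)\|_{L^2_x}\,d\tau=\|F\|_{L^1_t L^2_x},
\end{equation*}
which proves \eqref{a:55}.

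For \eqref{a:80}, one uses the dual Strichartz estimate obtained (as in Corollary~\ref{corStr}) by dualizing \eqref{120}--\eqref{a50} and interpolating; combined with the equivalence \eqref{derivative-via-H} for converting derivatives into factors of $\ch$, this yields
\begin{equation*}
\|g\|_{L^2_x}\leq C\,\|F\|_{L^{q'}_t W^{s(q,r),r'}_x},\qquad s(q,r)=\tfrac{1}{2}+\tfrac{1}{q}-\tfrac{1}{r}.
\end{equation*}
Since any admissible pair satisfies $s(q,r)\leq 3/4\leq 1$, the norm on the right is bounded by $\|F\|_{L^{q'}_t W^{1,r'}_x}$, yielding \eqref{a:80}. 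The main bookkeeping point is to verify that Christ--Kiselev is applicable to each pairing of time-exponents; once the admissibility condition $\tfrac{2}{q}+\tfrac{1}{r}\leq\tfrac{1}{2}$ is used to force $q'<2$, the rest of the argument is a direct combination of Mizumachi's weighted estimate with standard dual Strichartz bounds, so no genuine obstacle arises beyond the regularity matching just described.
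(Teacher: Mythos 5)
Your proof of \eqref{a:80} is essentially the paper's: freeze the truncation with Christ--Kiselev (valid since $q\geq 4$ forces $q'<2$), factor out $e^{-it\ch}$, apply the Mizumachi endpoint bound \eqref{a:10}, and close with the dual Strichartz estimate of Corollary~\ref{corStr} together with $s(q,r)\leq 3/4<1$. For \eqref{a:55} you take a slightly longer route than the paper: the paper does not invoke Christ--Kiselev there at all, but simply applies Minkowski's integral inequality to the truncated Duhamel integral and then uses \eqref{a:10} termwise, which handles the cutoff $\int_0^t$ for free since the bound is by an integral of nonnegative quantities; your detour through Christ--Kiselev at $p=1<q=2$ plus unitarity is correct but unnecessary. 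One small bookkeeping point you gloss over (and which the paper makes explicit for \eqref{a:80}): the target norm $L^\infty_x L^2_t$ has the spatial norm outermost, so it is not of the form $L^q_t(Y)$ required by Lemma~\ref{christ_kiselev}; one should fix $x_0$, apply Christ--Kiselev to the scalar-valued operator $F\mapsto \langle x_0\rangle^{-\al}(\cdots)(x_0,t)$ with a bound uniform in $x_0$, and only then take the supremum over $x_0$.
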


\begin{proof}
The proof of \eqref{a:55} is by averaging the estimate
\eqref{a:10}. More precisely, using the triangle inequality  and
estimate \eqref{a:10} yields
\begin{eqnarray*}
\left\|\langle x\rangle^{-\al} \int_0^t e^{-i(t-\tau)\ch}
P_{a.c.}(\ch)F(\tau, \cdot)d\tau \right\|_{L^\infty_x L^2_t} &
\leq & C  \int_{-\infty}^\infty \|\langle x\rangle^{-\al}e^{-i
(t-\tau) \ch} P_{a.c.}(\ch)F(\tau, \cdot)\|_{L^\infty_x L^2_t} d\tau \\
& \leq &  C \int_{-\infty}^\infty \|F(\tau, \cdot)\|_{L^2_x}
d\tau= C\|F\|_{L^1_t L^2_x}.
\end{eqnarray*}
For the proof of \eqref{a:80}, we use Lemma \ref{christ_kiselev}.
It will suffice to bound the operator
$$
T F(t)=\langle x_0\rangle^{-\al} \int_{-\infty}^\infty
e^{-i(t-\tau) \ch} F(\tau, \cdot) d\tau \biggr|_{x=x_0} : L^{q'}_t
W_x^{1,r'}\to L^2_t
$$
for any fixed $x_0\in \R$. We have, by \eqref{a:10}
\begin{eqnarray*}
& & \|T F\|_{L^2_t}\leq \left\| \langle x\rangle^{-\al} e^{-i t
\ch} \int_{-\infty}^\infty e^{i \tau \ch} F(\tau, \cdot) d\tau
\right\|_{L^\infty_x L^2_t}\leq C \left\|\int_{-\infty}^\infty
e^{i \tau \ch} F(\tau, \cdot) d\tau \right\|_{L^2_x}.
\end{eqnarray*}
By Corollary \ref{corStr}, we bound the last expression by
$$
C\|F\|_{L^{q'}_t W^{\f{3}{2q}+\de,r'}_x} \leq C
\|F\|_{L^{q'}_t W^{1 ,r'}_x},
$$
as stated in \eqref{a:80}. In the last step, we have used that if
$4\leq q\leq \infty$ and $\de \ll 1$, then $\f{3}{2q}+\de<1$.
\end{proof}

\begin{corollary}
\label{le:15}
Fix $\al>2$. There is $C_{\al} > 0$ such that
\begin{equation}
\label{a:70}
\|\langle x\rangle^{-\al} \p_x e^{-i t \ch} P_{a.c.}(\ch) f\|_{L^\infty_x L^2_t }\leq C_{\al} \|f\|_{H^1_x},
\end{equation}
\begin{equation}
\label{a:65} \left\| \langle x\rangle^{-\al} \p_x \int_0^t
e^{-i(t-\tau)\ch} P_{a.c.}(\ch)F(\tau, \cdot)d\tau
\right\|_{L^\infty_x L^2_t}\leq C \|F\|_{L^1_t H^1_x},
\end{equation}
and
\begin{equation}
\label{a:85} \left\|\langle x\rangle^{-\al} \p_x \int_0^t
e^{-i(t-\tau)\ch} P_{a.c.}(\ch)F(\tau, \cdot)d\tau
\right\|_{L^\infty_x L^2_t} \leq C(\| \langle x\rangle^{\al}
F\|_{L^\infty_x L^2_t} + \| \langle x\rangle^{\al} \p_x
F\|_{L^\infty_x L^2_t}).
\end{equation}
\end{corollary}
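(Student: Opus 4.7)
All three estimates in Corollary \ref{le:15} follow from their derivative-free counterparts \eqref{a:10}, \eqref{a:55}, and \eqref{b:15} by the same commutator trick that was already exploited in the proof of Lemma \ref{le:1} via the equivalence \eqref{derivative-via-H}. The observation is that, because $D = -i\sigma_3 \p_x + $ (constant off-diagonal matrix) and $\ch = D + V(x)$, we may solve algebraically for the derivative to obtain an identity of the schematic form
\begin{equation*}
\p_x = i\sigma_3 \ch + B(x),
\end{equation*}
where $B(x)$ is a matrix-valued multiplication operator whose entries are either constant or proportional to $V(x)$, hence in particular bounded on $\R$. The crucial point is that $\ch$ commutes with every $e^{-i\tau\ch}$, so this identity can be moved freely inside semigroup expressions and Duhamel integrals.

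For \eqref{a:70}, I would write
\begin{equation*}
\p_x e^{-it\ch} P_{a.c.}(\ch) f = i\sigma_3\, e^{-it\ch} P_{a.c.}(\ch)\, \ch f + B(x)\, e^{-it\ch} P_{a.c.}(\ch) f,
\end{equation*}
multiply by $\langle x\rangle^{-\al}$, and apply \eqref{a:10} to each summand, absorbing $B(x)$ into $\langle x\rangle^{-\al}$ via $|B(x)|\lesssim 1$. The first term contributes $\|\ch f\|_{L^2}$, which by \eqref{derivative-via-H} is bounded by $\|f\|_{H^1}$; the second contributes $\|f\|_{L^2}\leq \|f\|_{H^1}$. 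For \eqref{a:65}, I would commute $\ch$ through the Duhamel integral in the same way,
\begin{equation*}
\p_x \int_0^t e^{-i(t-\tau)\ch} P_{a.c.}(\ch) F\, d\tau = i\sigma_3 \int_0^t e^{-i(t-\tau)\ch} P_{a.c.}(\ch)\, \ch F\, d\tau + B(x) \int_0^t e^{-i(t-\tau)\ch} P_{a.c.}(\ch) F\, d\tau,
\end{equation*}
and estimate each of the two resulting terms via \eqref{a:55} from Corollary \ref{cor-1}. This produces $\|\ch F\|_{L^1_t L^2_x} + \|F\|_{L^1_t L^2_x}$, which by \eqref{derivative-via-H} is dominated by $\|F\|_{L^1_t H^1_x}$.

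For \eqref{a:85} I would use the identical decomposition but apply the Mizumachi-type inhomogeneous estimate \eqref{b:15} (rather than \eqref{a:55}) to each of the two pieces. The first piece gives a weighted norm of $\ch F$, which upon expanding $\ch F = DF + VF$ and distributing $\p_x$ yields a $\p_x F$ term (matching the second summand on the right-hand side) together with $VF$ and other lower-order pieces; since $\langle x\rangle^\al V(x)$ remains bounded thanks to the exponential decay of $V$, everything not involving $\p_x F$ is absorbed into the $\langle x\rangle^\al F$ summand. No real obstacle arises; the entire argument is a bookkeeping transfer of existing estimates, and the only mild subtlety is verifying that the $B(x)$ multiplication and the $V F$ piece of $\ch F$ are both controlled by the two summands appearing on the right-hand side, which follows directly from the exponential decay assumption on $V$ in Assumption \ref{assumption-one}.
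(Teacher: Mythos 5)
Your proposal is correct and takes essentially the same route as the paper: all three bounds are obtained by trading $\p_x$ for $\ch$ plus a bounded multiplication operator via the equivalence \eqref{derivative-via-H} (using that $\ch$ commutes with $e^{-it\ch}P_{a.c.}(\ch)$) and then invoking the derivative-free estimates \eqref{a:10} and \eqref{b:15}. The only cosmetic difference is in \eqref{a:65}, where the paper averages \eqref{a:70} directly by Minkowski's inequality instead of commuting first and citing \eqref{a:55}; since \eqref{a:55} is itself proved by averaging \eqref{a:10}, the two orderings of the same two steps are equivalent.
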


\begin{proof}
The proof of the estimate  \eqref{a:70} is based again on the
equivalence (\ref{derivative-via-H}). Since $\ch$ commutes with
all functions of $\ch$ (by the functional calculus),  we have from
\eqref{derivative-via-H} and \eqref{a:10}
\begin{eqnarray*}
\|\langle x\rangle^{-\al} \p_x e^{-i t \ch} P_{a.c.} (\ch) f\|_{L^\infty_x L^2_t} & \leq &
\|\langle x\rangle^{-\al} e^{-i t \ch} P_{a.c.}  (\ch) \ch f\|_{L^\infty_x L^2_t}
+ \|\langle x\rangle^{-\al} V e^{-i t \ch} P_{a.c.}   f\|_{L^\infty_x L^2_t} \\
& \leq &  C (\|\ch f\|_{L^2_x}+
\|V\|_{L^\infty} \|f\|_{L^2_x})\leq C_V \|f\|_{H^1_x}.
\end{eqnarray*}
The proof of the estimate \eqref{a:65} is by averaging. Indeed, by \eqref{a:70},
\begin{eqnarray*}
& & \left\| \langle x\rangle^{-\al} \p_x \int_0^t
e^{-i(t-\tau)\ch} P_{a.c.}(\ch)F(\tau, \cdot)d\tau
\right\|_{L^\infty_x L^2_t} \\
& & \leq C \int_{-\infty}^\infty \|\langle x\rangle^{-\al} \p_x
e^{-i(t-\tau)\ch} P_{a.c.}(\ch)F(\tau, \cdot)\|_{L^\infty_x L^2_t} d\tau \\
&  & \leq C \int_{-\infty}^\infty \|F(\tau, \cdot)\|_{H^1_x} d\tau
= C\|F\|_{L^1_t H^1_x}
\end{eqnarray*}

For the proof of the estimate \eqref{a:85}, we apply again the
equivalence (\ref{derivative-via-H}) and then we use \eqref{b:15}.
We have
\begin{eqnarray*}
& & \left\|\langle x\rangle^{-\al} \p_x \int_0^t e^{-i(t-\tau)\ch}
  P_{a.c.}(\ch)F(\tau, \cdot)d\tau \right\|_{L^\infty_x L^2_t} \\
& & \leq \left\|\langle x\rangle^{-\al}   \int_0^t
e^{-i(t-\tau)\ch} P_{a.c.}(\ch) \ch F(\tau, \cdot)d\tau \right\|_{L^\infty_x L^2_t}
+  \left\|\langle x\rangle^{-\al} V(x) \int_0^t
e^{-i(t-\tau)\ch} P_{a.c.}(\ch)  F(\tau, \cdot)d\tau \right\|_{L^\infty_x L^2_t} \\
& & \leq C( \|\langle x\rangle^{\al} \ch F\|_{L^1_x L^2_t} + \|\langle x\rangle^{\al} F\|_{L^1_x L^2_t}) \\
& & \leq C( \|\langle x\rangle^{\al} F\|_{L^1_x L^2_t}+  \|\langle x\rangle^{\al} \p_x F\|_{L^1_x L^2_t}).
\end{eqnarray*}
This concludes the proof of the corollary.
\end{proof}

\section{Proof of the main theorem}

We first formulate the solution and the nonlinearity spaces. Let
\begin{eqnarray*}
\| {\bf Y} \|_{X_1} := \| {\bf Y} \|_{L^4_t L^\infty_x}+  \| {\bf Y} \|_{L^\infty_t H^1_x}, \quad
\| {\bf Y} \|_{X_2}:= \|\langle x \rangle^{-\al} {\bf Y} \|_{L^\infty_x L^2_t}+
\|\langle x\rangle^{-\al} \p_x {\bf Y} \|_{L^\infty_x L^2_t},
\end{eqnarray*}
and $\|{\bf Y} \|_X:=\|{\bf Y}\|_{X_1}+\|{\bf Y}\|_{X_2}$. The nonlinearity space
is defined via the norm
$$
\| {\bf F} \|_{\cn}:=\inf_{{\bf F} = {\bf F}_1 + {\bf F}_2} \|{\bf F}_1\|_{L^1_t H^1_x}+
(\|\langle x\rangle^\al {\bf F}_2\|_{L^1_x L^2_t}+
\|\langle x\rangle^\al \p_x {\bf F}_2\|_{L^1_x L^2_t}).
$$

Consider the Cauchy problem for the inhomogeneous linear equation,
projected along the absolutely continuous spectrum of $\ch$
\begin{equation}
\label{inhomogen-system}
\left\{ \begin{array}{l} i \frac{d {\bf Y}}{d t} = \ch {\bf Y} + P_{a.c.}(\ch) {\bf F}, \\
{\bf Y}(0) = {\bf Y}_0 = P_{a.c.}(\ch) {\bf Y}_0.
\end{array} \right.
\end{equation}
When one interprets correctly the results of the dispersive decay estimates 
(Section 7) in the notations above,
we get that a solution to the Cauchy problem (\ref{inhomogen-system}) satisfies
\begin{equation}
\label{e:2} \| {\bf Y} \|_{X} \leq C(\| {\bf Y}_0 \|_{H^1} + \|
{\bf F} \|_{\cn}).
\end{equation}
For the proof of the main theorem, we need to show the existence
of small solutions for the  system of two (scalar) ordinary
differential equations \eqref{projection-equations} for
$\om$ and $\theta$ coupled with the
partial differential equation \eqref{residual-equation} for ${\bf Y}$.

Since the right-hand side of equation (\ref{residual-equation}) is not projected
to the continuous spectrum of $\ch$, we decompose
\begin{equation}\label{decomposition-last}
{\bf Y} = a {\bf u}_0 + {\bf Z}, \quad a = \langle {\bf u}_0, {\bf Y} \rangle_{L^2},
\quad \langle {\bf u}_0, {\bf Z} \rangle_{L^2} = 0,
\end{equation}
where ${\bf u}_0$ is the eigenfunction of $\ch$ for eigenvalue $\om_0$. Substituting (\ref{decomposition-last})
into (\ref{residual-equation}), we obtain the system of equations
\begin{equation}
\label{residual-equation-last} \left\{ \begin{array}{l}
i \dot{a} = \om_0 a + \langle {\bf u}_0, e^{-i \theta} {\bf F} \rangle_{L^2}, \\
i \dot{\bf Z} = \ch {\bf Z} + P_{a.c.}(\ch) e^{-i \theta} {\bf F}. \end{array} \right.
\end{equation}

We now set up our problem as an iteration scheme, where we look
for a fixed point in a small ball in a normed space. More
precisely, this space is composed of  all quadruples $(\om, \theta,
a,{\bf Z})$, equipped with  the norm
$$
\|(\om, \theta, a, {\bf Z}) \|_Z := \|\dot{\om}\|_{L^1_t} + \|\dot{\theta}-\om\|_{L^1_t} +
\| a \|_{L^2_t \cap L^{\infty}_t} + \|{\bf Z}\|_X.
$$
Note that the elements of the corresponding set are subject to the
appropriate initial conditions
$$
\om(0) \in (\om_0,\om_0+\epsilon), \quad \theta(0)=0, \quad a(0) = \langle {\bf u}_0, {\bf Y}(0) \rangle_{L^2}, \quad
{\bf Z}(0) = P_{a.c.}(\ch){\bf Y}(0).
$$

First, observe that
the matrix in front of the variables $\dot{\om}$ and
$\dot{\theta}-\om$ in \eqref{projection-equations} has the form
\begin{equation}\label{matrix-system}
\left[ \begin{array}{cc} \textup{Re} \langle\p_{\om} {\bf U}, {\bf
U}-{\bf U}_1 \rangle_{L^2}&  \textup{Im}
\langle\p_{\om} {\bf U}, {\bf U}_1\rangle_{L^2}\\
 \textup{Im} \langle\p_{\om}^2 {\bf U}, {\bf U}_1 \rangle_{L^2} &
 \textup{Re} \langle\p_{\om} {\bf U}, {\bf U}+{\bf
 U}_1 \rangle_{L^2} \end{array} \right] =
 \f{1}{2} \f{d}{d\om}\|{\bf U}\|_{L^2}^2 Id +O(\|{\bf U}_1\|_{L^2}).
\end{equation}
Due to the smallness of
$$
\|{\bf U}_1 \|_{L^{\infty}_t L^2_x} \leq \| a \|_{L^{\infty}_t} +
\|{\bf Z} \|_{L^{\infty}_t L^2_x}
$$
and the non-degeneracy condition \eqref{nondegeneracy}, we may conclude that the matrix 
(\ref{matrix-system}) is
invertible. (Note that $\|{\bf Z}\|_{L^{\infty}_t L^2_x}$ is a part of
the norm $\|{\bf Z}\|_X$, which is kept small in our fixed point
arguments.)

Next, we show that the quantities $\|\dot{\om}\|_{L^1_t}$ and
$\|\dot{\theta}-\om\|_{L^1_t}$ are under control. Indeed, due to
the invertibility of the matrix, the form of
\eqref{projection-equations}, and the quadratic nature of $F_1,
F_2$ (Proposition \ref{proposition-vector-field}), we have
\begin{eqnarray*}
\|\dot{\om}\|_{L^1_t} + \|\dot{\theta}-\om\|_{L^1_t} & \leq &
C(\|F_1\|_{L^1_t}+ \|F_2\|_{L^1_t}) \leq C \int_0^{\infty} \int_{\R}
|{\bf Y}(x,t)|^2 |{\bf U}(x)| dx dt \\
& \leq  &
C \|<x>^{2\al} {\bf U}\|_{L^1_x} \|<x>^{-\al} {\bf Y}\|_{L^\infty_x L^2_t}^2\leq
C \left( \| a\|^2_{L^2_t} +  \|{\bf Z}\|_X^2 \right).
\end{eqnarray*}
It follows from this bound that
\begin{eqnarray}
\label{local-norm-control}
\|\om - \om(0) \|_{L^{\infty}_t} + \|\theta - \int_0^t \om(s) ds \|_{L^{\infty}_t} \leq
C \left( \| a\|^2_{L^2_t} +  \|{\bf Z}\|_X^2 \right).
\end{eqnarray}

Since $\dot{\om} \in L^1_t$ and $\| \om - \om(0) \|_{L^{\infty}_t}$ is small, there exists $\epsilon_0 > \epsilon$ and
$\omega_{\infty} := \lim_{t \to \infty} \om(t)$ such that $\om_{\infty} \in (\om_0,\om_0+\epsilon_0)$
if $\om(0) \in (\om_0,\om_0 + \epsilon)$. Similarly there exists $\theta_{\infty} \in \R$ such that
$$
\lim_{t \to \infty} \left( \theta(t) - \int_0^t \omega(s) ds
\right) = \theta_{\infty}.
$$

Now, we control the quantity $\| a \|_{L^2_t \cap L^{\infty}_t}$. It follows
from the symplectic orthogonality conditions (\ref{symplectic-orthogonality}) that
$$
\langle {\bf u}_0, {\bf U}_1 \rangle_{L^2_x} = {\rm Re}
\langle {\bf u}_0 - \frac{{\bf U}}{\| {\bf U} \|_{L^2_x}}, {\bf U}_1 \rangle_{L^2_x} +
i {\rm Im}
\langle {\bf u}_0 - \frac{\partial_{\omega} {\bf U}}{\| \partial_{\omega} {\bf U} \|_{L^2_x}}, {\bf U}_1 \rangle_{L^2_x}.
$$
By Proposition \ref{proposition-local-bifurcation}, for any $\al \geq 0$, there is $C_{\al} > 0$ such that
$$
\left\| <x>^{\al} ( {\bf u}_0 - \frac{{\bf U}}{\| {\bf U} \|_{L^2_x}}) \right\|_{L^2_x} +
\left\| <x>^{\al} ( {\bf u}_0 - \frac{\partial_{\omega} {\bf U}}{\| \partial_{\omega} {\bf U} \|_{L^2_x}}) 
\right\|_{L^2_x} \leq C |\omega - \omega_0|.
$$
Therefore, we obtain
\begin{eqnarray*}
\| a \|_{L^2_t} & = & \| \langle {\bf u}_0, {\bf Y} \rangle_{L^2_x}\|_{L^2_t} =  \| \langle {\bf u}_0, {\bf U}_1 \rangle_{L^2_x}\|_{L^2_t} \\
& \leq & C \| \om - \om_0 \|_{L^{\infty}_t} \| <x>^{-\al} {\bf U}_1 \|_{L^{\infty}_x L^2_t} \leq
C (\epsilon + \|\om - \om(0) \|_{L^{\infty}_t}) \| <x>^{-\al} {\bf Y} \|_{L^{\infty}_x L^2_t},
\end{eqnarray*}
where $\epsilon + \| \omega - \om(0) \|_{L^{\infty}_t}$ is small due to smallness of $\epsilon$ and the bound (\ref{local-norm-control}).
Similarly, we obtain
\begin{eqnarray*}
\| a \|_{L^{\infty}_t} & = & \| \langle {\bf u}_0, {\bf Y} \rangle_{L^2_x}\|_{L^{\infty}_t} =
\| \langle {\bf u}_0, {\bf U}_1 \rangle_{L^2_x}\|_{L^{\infty}_t} \\
& \leq & C \| \om - \om_0 \|_{L^{\infty}_t} \| {\bf U}_1 \|_{L^{\infty}_t L^2_x} \leq
C (\epsilon + \|\om - \om(0) \|_{L^{\infty}_t}) \| {\bf Y} \|_{L^{\infty}_t L^2_x}.
\end{eqnarray*}

Finally, it remains to estimate the quantity $\|{\bf Z}\|_X$.
Due to our construction, we have ${\bf Z} = P_{a.c.}(\ch) {\bf Y}$, so that
we may apply the linear estimates \eqref{e:2}.
The nonlinearity $P_{a.c.}(\ch) e^{-i \theta} {\bf F}$ in the residual equation
\eqref{residual-equation-last} has two parts. The first part satisfies
\begin{eqnarray*}
\| P_{a.c.}(\ch) e^{-i \theta} (i\dot{\om}\p_\om {\bf U} + (\dot{\theta}-\om){\bf U})\|_{L^1_t H^1_x}
& \leq & C(\|\dot{\om}\|_{L^1_t} + \|\dot{\theta}-\om\|_{L^1_t})(\|{\bf
U}\|_{H^1_x}+\|\p_\om {\bf U}\|_{H^1_x}) \\
& \leq  & C \left( \| a\|^2_{L^2_t} +  \|{\bf Z}\|_X^2 \right).
\end{eqnarray*}
Roughly speaking, the second (nonlinear) part
$$
{\bf G} := P_{a.c.}(\ch) e^{-i \theta} \left( {\bf N}({\bf U} +
{\bf Y} e^{i \theta}) - {\bf N}({\bf U})\right),
$$
consists of the two terms
$$
{\bf G} \sim {\bf Y} {\bf U}^{2p}+ {\bf Y}^{2p+1},
$$
where ${\bf Y}$ is controlled in the $X$-norm by
$$
\| {\bf Y} \|_X \leq C (\| a \|_{L^2_t \cap L^{\infty}_t} + \|{\bf Z}\|_X).
$$

Note that
\begin{eqnarray*}
|{\bf G}(x,t)| + |\p_x {\bf G}(x,t) | \leq C (|{\bf
Y}| + |\p_x {\bf Y}|)(|{\bf U}|^{2p}+|\p_x {\bf U}|^{2p})+
C(|{\bf Y}|+|\p_x {\bf Y}|)|{\bf Y}|^{2p},
\end{eqnarray*}
We need to control the quantity $\|G\|_{\cn}$ in terms of $\|{\bf Y}\|_X$. We have
\begin{eqnarray*}
\|{\bf G}\|_{\cn} & \leq & C \|<x>^{\al} (|{\bf Y}| +
|\p_x  {\bf Y}|) (|{\bf U}|^{2p}+|\p_x {\bf U}|^{2p}) \|_{L^1_x L^2_t} +
C \| (|{\bf Y}|+ |\p_x {\bf Y}|)|{\bf Y}|^{2p}\|_{L^1_t L^2_x}\\
&\leq & C \left( \|\langle x \rangle^{-\al} {\bf Y} \|_{L^\infty_x L^2_t}+
\|\langle x\rangle^{-\al} \p_x {\bf Y} \|_{L^\infty_x L^2_t} \right)
\|<x>^{2\al} (|{\bf U}|^{2p} + |\p_x {\bf U}|^{2p})\|_{L^1_x L^{\infty}_t} \\
& \phantom{t} &  + C \|{\bf Y}\|_{L^\infty_t H^1_x} \|{\bf
Y}\|_{L^{2p}_t L^{\infty}_x}^{2p}.
\end{eqnarray*}

It is now easy to close the argument in the norm $\|{\bf Y}\|_X$.
Indeed, by Sobolev embedding for any $\epsilon > 0$
$$
\|{\bf Y}\|_{L^\infty_{t} L^\infty_x}\leq C\|{\bf Y}\|_{L^\infty_x
H^{1/2+\epsilon}_x}\leq C \|{\bf Y}\|_X.
$$
We also have $\|{\bf Y}\|_{L^4_{t} L^\infty_x}\leq \|{\bf Y}\|_X$
(by the definition of $\|\cdot\|_{X}$) and hence, for $p \geq 2$,
by the log convexity of the $L^{q}$ norms, we have
$$
\|{\bf Y}\|_{L^{2p}_t L^\infty_x}\leq \|{\bf Y}\|_{L^{4}_t
L^\infty_x}^{2/p} \|{\bf Y}\|_{L^{\infty}_t L^\infty_x}^{1-2/p}
\leq C \|{\bf Y}\|_X.
$$
All in all, combining the estimates for $\|{\bf G}\|_{\cn}$
with the estimates for $\|{\bf Y}\|_{L^{2p}_t L^\infty_x}$, we obtain
$$
\|{\bf G}\|_{\cn}\leq C
\|<x>^{2\al} (|{\bf U}|^{2p} + |\p_x {\bf U}|^{2p})\|_{L^1_x L^{\infty}_t}
\|{\bf Y}\|_X + C \|{\bf Y}\|_{X}^{2p+1}.
$$
By Proposition \ref{proposition-local-bifurcation}, there is $C > 0$ such that
$$
\|<x>^{2\al} (|{\bf U}|^{2p} + |\p_x {\bf U}|^{2p})\|_{L^1_x L^{\infty}_t} \leq C  \| \om - \om_0 \|_{L^{\infty}_t}
\leq C (\epsilon + \| \om - \om(0) \|_{L^{\infty}_t}).
$$
Since the last term is small due to the smallness of $\epsilon$ and the bound (\ref{local-norm-control}),
the fixed point argument is closed, and the proof of Theorem \ref{theorem-main} is complete.

\end{document}